\newtheorem{theorem}{Theorem}
\newtheorem{assump}{Assumption}
\newtheorem{lemma}[theorem]{Lemma}
\newtheorem{remark}[theorem]{Remark}
\newcommand{\red}[1]{{#1}}
\newcolumntype{H}{>{\setbox0=\hbox\bgroup}c<{\egroup}@{}}
\definecolor{lime}{HTML}{A6CE39}
\DeclareRobustCommand{\orcidicon}{%
	\begin{tikzpicture}
	\draw[lime, fill=lime] (0,0) 
	circle [radius=0.16] 
	node[white] {{\fontfamily{qag}\selectfont \tiny ID}};
	\draw[white, fill=white] (-0.0625,0.095) 
	circle [radius=0.007];
	\end{tikzpicture}
	\hspace{-2mm}
}
\xdef\csname orcid\x\endcsname{\noexpand\href{https://orcid.org/\csname orcidauthor\x\endcsname}{\noexpand\orcidicon}}
\begin{document}

\title{Goodness-of-fit tests for generalized Poisson distributions}
\author{
A. Batsidis\orcidA{} $^{1}$, B. Miloševi\'c\orcidB{}
$^{2}$, M.D. Jim\'enez--Gamero\orcidM{}$^{3}$ \vspace*{5pt}\\ 
 $^{1}$ Department of Mathematics, University of Ioannina, 45110 Ioannina, Greece\\
 $^{2}$ Faculty of Mathematics,  University of Belgrade,
Belgrade, Serbia \\
$^{3}$  Departmento de Estad\'istica e Investigaci\'on Operativa, Universidad de Sevilla,
Seville, Spain} 

\author{
A. Batsidis \orcidA{} $^{1}$, B. Miloševi\'c $^{*}$\orcidB{}
$^{2}$, M.D. Jim\'enez--Gamero\orcidM{}$^{3}$ \vspace*{5pt}\\ 
 $^{1}$ Department of Mathematics, University of Ioannina, Ioannina, Greece\\
 $^{2}$ Faculty of Mathematics,  University of Belgrade,
Belgrade, Serbia \\
$^{3}$  Departmento de Estad\'istica e Investigaci\'on Operativa, Universidad de Sevilla,
Seville, Spain\\
$^{*}$ Corresponding address: bojana@matf.bg.ac.rs
\\
\\
The final version of this preprint will be published in the journal {\it Statistics}
}

\date{\today}
\maketitle

\begin{abstract}
This paper presents and examines computationally convenient goodness-of-fit tests for the family of generalized Poisson distributions, which encompasses notable distributions such as the Compound Poisson and the Katz distributions. 
 The tests are consistent against fixed alternatives and their null distribution  can be consistently approximated by a parametric bootstrap. The goodness of the bootstrap estimator and the power for finite sample sizes are numerically assessed through an extensive simulation experiment, including comparisons with other tests. In many cases, the  novel
tests either outperform or match the performance of existing ones. Real data applications are considered for illustrative purposes.
\end{abstract}

\textbf{Keywords:} Goodness-of-fit; Parametric bootstrap; Probability generating function; Generalized Poisson.

\section{Introduction}

Count data arises in many research fields such as medicine, actuarial sciences, biology, health, and economics. The Poisson distribution
has traditionally served as a reference for modeling that sort of data. However, it falls short in cases of overdispersion, underdispersion, or zero inflation (high {percentage of} zero values). 
To address  such %the Poisson distribution's 
limitations, the statistical literature presents various count distributions tailored to address these issues. This paper focuses on the generalised Poisson (GP) distributions, as introduced by Meintanis \cite{Meintanis2008}. These distributions generalize the Poisson model in the sense that the probability generating function (pgf) of each of these laws,  satisfies a specific differential equation. A wide range of distributions belongs to this family of distributions. Examples are the family of Katz distributions and the family of Compound Poisson distributions.
%are both examples of GP distributions. %, which implies that  Binomial, Negative Binomial, {Neyman Type A, Bell}, Poisson-Poisson and Poisson-Pascal are some typical examples of GP distributions. 
GP distributions offer a versatile framework for count data modeling,  enabling users to better handle overdispersion, underdispersion, and zero inflation. An important aspect of data analysis is to check the goodness-of-fit (gof) of a given data set with a probabilistic model. Therefore, it is relevant to provide appropriate procedures for testing gof of GP distributions.

To the best of our knowledge, the only tailor-made gof test of this family is the one proposed in Meintanis \cite{Meintanis2008}. This test is 
  an $L^2$ norm of the empirical counterpart of  the  aforementioned differential equation. Starting from this equation, this paper proposes  new gof tests for that family of distributions. Their associated test statistics are built
following the reasoning in Nakamura and P\'{e}rez-Abreu  \cite{Nakamura&PA1993a} for testing gof to the Poisson distribution (see also Novoa-Mu\~noz and  Jim\'enez-Gamero \cite{Novoa2}
for the bivariate Poisson distribution, Jim\'enez-Gamero and  Alba-Fern\'andez \cite{PT} for the
Poisson-Tweedie distribution, Batsidis et al. \cite{metrika2020} for the Bell distribution, Milo\v{s}evi\'{c} et al. \cite{Bojana2021} for the geometric distribution,  Batsidis and Lemonte  \cite{Batsidis&Lemonte} for the Neyman type A distribution). The new test statistics are
 a function of the coefficients of the polynomial in the equation that results when one replaces the pgf by the empirical probability generating function (epgf) in the differential equation, multiplied by positive constants, which serve as %a weight function, i.e. 
 weight terms.

The paper is organized as follows. {After} recalling the definition of the GP distributions,  Section \ref{GP} {introduces} the novel test statistics. Section \ref{seclim} contains results about their limiting properties. As the asymptotic null distribution  is not useful for approximating the null distribution, this section shows  the consistency of a parametric bootstrap estimator of the  null distribution. The consistency of the proposed tests against fixed alternatives is also  derived.
Section \ref{seccomp} is dedicated to the computation of the test statistics for the Katz and the Compound Poisson distributions. %Other cases can be found in the Supplementary Material (SM).
%It then presents examples of distributions belonging to this family and establishes that members of this family are the unique discrete distributions whose probability generating function (pgf) satisfies a specific differential equation.  This result is used in Section \ref{Statistics} to propose test statistics for testing gof to the GP distribution.  The asymptotic null distribution of the test statistic (i.e., the test statistic
%distribution under the null hypothesis) is derived. Since it
%depends on  unknown quantities and in order to overcome this problem, it is  shown that
%the parametric bootstrap consistently estimates the null distribution of the test statistic. Section \ref{seccomp} deals with the computation of the test statistic for specific cases of GP distributions.
Section \ref{Numerical} summarizes the results of an extensive simulation experiment designed to assess the finite sample performance of the proposed tests, and to  compare them  with others.
The new tests are tried for several weighting schemes. From the simulations results it emerges that: the use of appropriate weights increases notably the power of the tests, and a rule for the choice of the weights  is given when testing for the Katz  and  the Poisson-Poisson distributions; although there is no test having the highest power against all considered alternatives,  in many cases, the recommended novel tests either outperform or match
the performance of existing ones. 
 %The power of the new test is numerically compared to other existing tests in the literature, including the pgf-based tests introduced in \cite{Meintanis2008}, \cite{Rueda}, the classical Anderson-Darling test based on the empirical distribution function, and the choice of weight function is discussed in detail. The numerical results highlight that the weight terms significantly impact the test's performance, with no single test exhibiting the highest power against all alternatives. Instead, the new test outperforms competitors for certain alternatives while its competitors demonstrate greater power for others.
Section \ref{application} presents the application
of the gof tests to  several real data sets, and illustrates the choice of the weights in a practical setting. 
%Further real data set applications can be found in the SM. Section \ref{conclusions} closes up the paper with some concluding remarks.
All technical proofs are deferred to the Appendix.

%It is also proved that the new test statistic  can be considered as a generalization of existing gof tests, which were constructed for specific GP distributions. In particular, the gof test proposed here can be considered as a
%generalization of the ones in Nakamura and Perez-Abreu (1993), Batsidis et al. (2020) and Batsidis and Lemonte (2022), since Poisson, Bell and Neyman type A distributions, respectively, are
%special cases of  GP distributions. In addition,  {it can also be thought as a complement to the gof test for the Poisson-Tweedie distribution
%resented by Jim\'enez-Gamero and  Alba-Fern\'andez (2019)}.

Before ending this section we introduce some notation: all limits in this paper
are taken when $n \rightarrow \infty$, where $n$ denotes the sample size;
$\stackrel{\mathcal{L}}{\longrightarrow}$ denotes convergence in distribution;
$\stackrel{{P}}{\longrightarrow}$ denotes convergence in probability;
$\stackrel{a.s.}{\longrightarrow}$ denotes the almost sure
convergence; $I(A)$ denotes the indicator function of the set $A$; $\mathbb{N}_0= \mathbb{N} \cup 0=\{0,1,2,...\}$; for each sequence $w=(w_0, w_1, w_2, \ldots)$ of positive numbers,   $l^2_w$ denotes the separable Hilbert space
$l^2_w=\{z=(z_0,z_1,z_2,\ldots), \, z_k \in \mathbb{R},\,  \sum _{k\geq0}z_{k}^2w_k<\infty\}$
with the  inner product $\langle z^{(1)},z^{(2)} \rangle_w=\sum _{k\geq0}z^{(1)}_{k}z^{(2)}_{k}w_{k}$;
let $\|\cdot\|_w$ stand for the associated norm; $P_{\lambda,\theta}$, $E_{\lambda,\theta}$ and $Cov_{\lambda,\theta}$ denote probability, expectation and covariance, respectively,
by assuming that the data come from a GP distribution
with parameters $\lambda>0$ and $\theta \in \Theta\subseteq \mathbb{R}^{p}$;
$P_*$, $E_{*}$ and $Cov_*$ denote the conditional probability law,
the conditional expectation and the  conditional covariance, respectively,
given  the data $X_1,\ldots,X_n$.

\section{The test statistics} \label{GP}
%{Let $X$ be a random variable taking values in  $N_{0}$ with  pgf $g(t)=E(t^X)$. According to Section 9.3 of Johnson et al. \cite{UnivariateDiscreteDistributions}, it is said that $X$ has a GP distribution if
%\begin{equation} \label{def1}
%g(t)=\exp \left [ \lambda \left \{ G(t)-1 \right \} \right ],
%\end{equation} 
%where $\lambda$ is a positive constant and 
%\begin{equation} \label{generalizing}
%G(t)=\sum_{j \geq 0} b_jt^j
%\end{equation} 
%is the pgf of a random variable taking values in  $N_{0}$ (called the generalizing distribution). In other words, we can write
%$$X=Y_1+\ldots+Y_N,$$
%where $N$ has a Poisson law with mean $\lambda$, and $Y_1, Y_2, \ldots$ are iid with pgf $G(t)$.
%}
Following   Meintanis \cite{Meintanis2008},
a random variable $X$ taking values in $\mathbb{N}_0$, with $E(X)<\infty$ and  pgf $g(t)$, is said to belong to the family of GP distributions with parameters $\lambda>0$ and $\theta \in \Theta \subseteq \mathbb{R}^{p}$, if its pgf  satisfies the differential equation
\begin{equation}\label{differential}
G_{1}(t;\theta)g'(t)-\lambda G_{2}(t;\theta) g(t)=0, \text{ } g(1)=1, \quad \forall t \in [0,1],
\end{equation}
for a specific pair of functions $G_{1}(t;\theta)$ and $G_{2}(t;\theta)$, \red{where $g'(t)=\frac{{\rm d}}{{\rm d}t}g(t)$}. 
%It is denoted as $X\sim GPD(\lambda,\theta, G_1, G_2)$.

{\begin{remark}
At this point we must mention that the term generalized Poisson
distribution previously used should be no confused with the distributions
{appearing} in Section 9.3 of Johnson et al. \cite{UnivariateDiscreteDistributions}, which are also known as
Poisson-stopped sum or Compound Poisson distributions. In this paper, the term generalized Poisson distribution is in accordance with the term used by
Meintanis \cite{Meintanis2008} to describe the distributions which satisfy the differential
equation (\ref{differential}). As we will show in Section \ref{seccomp} examples of such distributions are
the family of Katz distributions, which contains the binomial  Poisson  and negative binomial as special cases, and the family of Compound Poisson distributions, which contains among others the Poisson, the negative binomial and the Hermite distribution. Since  the binomial distribution, which is obtained as a special case of the Katz distribution, is not a Compound Poisson distribution, the Katz family does not entirely lie within the Compound Poisson class (see Meintanis \cite{Meintanis2008}) and it is concluded that the two families cannot be described by the same stochastic representation.
\end{remark}}

Notice that the functions $G_{1}(t;\theta)$ and $G_{2}(t;\theta)$ are not uniquely defined. Moreover, since $g(t)$ and $g'(t)$ are positive in $(0,1]$ (if $P(X=0)>0$ and $P(X=1)>0$, then $g(t)$ and $g'(t)$ are positive in $[0,1]$), if {$G_{1}(t_0;\theta)=0$ ($G_{2}(t_0;\theta)=0$) for some $t_0\in (0,1]$, then we also have that $G_{2}(t_0;\theta)=0$ ($G_{1}(t_0;\theta)=0$).}

From now on, we will work with  GP distributions satisfying
$G_{1}(t;\theta) \neq 0$, $\forall t \in [0,1]$, which is equivalent to $G_{2}(t;\theta) \neq 0$, $\forall t \in [0,1]$. In such a case, instead of \eqref{differential}, we can write
\begin{equation}\label{differential.21}
g'(t)-\lambda G_{21}(t;\theta) g(t)=0, \text{ } g(1)=1, \quad \forall t \in [0,1],
\end{equation}
or
\[ 
G_{12}(t;\theta) g'(t)-\lambda  g(t)=0, \text{ } g(1)=1, \quad \forall t \in [0,1],
 \] 
where $G_{21}(t;\theta)=G_{2}(t;\theta) / G_{1}(t;\theta)$ and $G_{12}(t;\theta)=G_{1}(t;\theta) / G_{2}(t;\theta)$.
 Notice that, unlike  $G_{1}(t;\theta)$ and $G_{2}(t;\theta)$, $G_{12}(t;\theta)$ and $G_{21}(t;\theta)$ are uniquely defined. Moreover, since both $g(t)$ and $g'(t)$ are analytic functions, it readily follows that $G_{21}(t;\theta)$ must be an analytic function,
 \begin{equation}\label{Q.21}
 G_{21}(t;\theta)=\sum_{k \geq 0} q_k(\theta)t^k, \quad \forall t\in  [0,1],
\end{equation}
$q_0(\theta),  q_1(\theta), \ldots$ being real functions defined on $\Theta$. Specifically, 
$q_k(\theta)=G^{(k)}_{21}(0;\theta)/k!$, where  $G^{(k)}_{21}(0;\theta)$ denotes the $k$-th derivative of $G_{21}(t;\theta)$ with respect to $t$ evaluated at $t=0$. 

\red{ Let $p_k=P(X=k)$, $\forall k \geq 0$. Evaluating \eqref{differential.21} at $t=0$, one gets that
$$p_1=\lambda G_{21}(0;\theta) p_0=\lambda q_0(\theta) p_0,$$ 
and evaluating \eqref{differential.21} at $t=1$, one gets that
\[%\begin{equation}\label{sum.q}
E(X)=\lambda G_{21}(1;\theta)= \lambda\sum_{k \geq 0} q_k(\theta),
\]%    \end{equation}
which} implies that $\sum_{k \geq 0} q_k(\theta)<\infty$.
     Finally, as
$$g(t)=\sum_{k \geq 0} p_kt^k, \quad g'(t)=\sum_{k \geq 1} kp_kt^{k-1},$$
taking into account (\ref{differential.21}), after some simple algebra, the following characterizing relation holds,
\begin{equation}\label{relation}
(k+1)p_{k+1}=\lambda\sum_{u=0}^kp_uq_{k-u}(\theta), \quad \forall k \geq 0.
\end{equation}
Observe that \eqref{relation} implies that  $p_k>0$, $ \forall k \geq 0$.
%{The recurrence relation \eqref{relation} is display (9.43) in Johnson et al. \cite{UnivariateDiscreteDistributions} taking into account \eqref{equality}.
\begin{remark}
Since  $G_{21}(t;\theta)\neq 0$, $\forall t \in [0,1]$, it follows that $G_{12}(t;\theta)=1/G_{21}(t;\theta)$ is also analytic,  and a similar reasoning applies.
\end{remark}

 From now on, all theoretical developments and discussions assume that the equation to deal with is that in  \eqref{differential.21} and, in short,  we will write $X\sim GPD(\lambda,\theta, G_{21}(\cdot;\theta))$.

\begin{remark} \label{charact}
Observe that if there exists a $g$ solving \eqref{differential.21}, then $g$ is unique and {is} given by $g(t)=c\exp(\lambda \red{\int_0^t G_{21}(s;\theta){\rm d} s)}$, with $c$ so that $g(1)=1$.
\end{remark}

Now we have all the ingredients to propose the novel class of goodness-of-fit tests of {the} composite null hypothesis
\[H_0:  X\sim GPD(\lambda,\theta,G_{21}(\cdot; \theta)), \text{ for some }\lambda>0, \, \, \theta \in \Theta, \]
against the alternative
\[H_1: X\nsim GPD(\lambda,\theta,G_{21}(\cdot; \theta)), \,  \forall \lambda>0, \, \, \forall \theta \in \Theta,\]
where, for each fixed value of $\theta$,  $G_{21}(\cdot; \theta)$ is a known real function.

Let  $X_1,\ldots, X_n$ be independent and identically distributed (iid) from a  random variable $X \in \mathbb{N}_{0}$, with $E(X)<\infty$ and pgf $g(t)$. Let
\begin{eqnarray*}
    g_n(t) & = & \frac{1}{n}\sum_{i=1}^{n}t^{X_i}=\sum_{k\geq 0}\hat{p}_kt^{k}, \\ %\label{epgf}\\
    g_n'(t)& = & \frac{d}{dt}g_{n}(t)=\frac{1}{n}\sum_{i=1}^{n}X_it^{X_i-1} I(X_i\geq 1)=\sum_{k \geq 1}k\hat{p}_kt^{k-1},\\ % \label{depgf}
\end{eqnarray*}
denote the epgf of $X_1,\ldots, X_n$ and its derivative, where 
\begin{equation} \label{prob}
\hat{p}_k=\frac{1}{n}\sum_{i=1}^{n}I(X_i=k), \quad k \in \mathbb{N}_0.
\end{equation}

Since the pgf of the  $GPD(\lambda,\theta,{G_{21}}(\cdot; \theta))$ distribution is the only pgf satisfying the diffe\-ren\-tial equation \eqref{differential.21} (see Remark \ref{charact}), and the fact that  the pgf $g(t)$ and its derivatives can be consistently estimated by the epfg and the derivatives of the epgf, respectively (see  Proposition 1  in Novoa-Mu\~noz and Jim\'enez-Gamero \cite{Novoa2014}), if $H_{0}$ is true, 
\[ %\begin{equation}\label{estdif}
D_{n}(t;\hat{\lambda},\hat{\theta})=g_{n}'(t)-\hat{\lambda}G_{21}(t; \hat{\theta})g_{n}(t),
\] %\end{equation}
should be close to 0, $\forall t \in [0,1]$, where   $\hat{\lambda}=\hat{\lambda}(X_1, \ldots, X_n)$  and $\hat{\theta}=\hat{\theta}(X_1, \ldots, X_n)$ are consistent estimators of $\lambda$ and $\theta$, respectively.
% Such proximity to 0 can be understood in several ways. First, reasoning as in Baringhaus and Henze \cite{BH1992} for the Poisson distribution, this proximity implies that
%\[
%T_{n,w}=\int_{0}^{1}D_n^2(t;%\hat{\lambda},\hat{\theta}) w(t)dt
%\]
%should be close to 0, where $w(\cdot)$ is a  positive weight function. This test statistic has been studied in Meintanis \cite{Meintanis2008}.

Following a similar line of reasoning to Nakamura and P\'{e}rez-Abreu \cite{Nakamura&PA1993a},
we express  $ D_{n}(t;\hat{\lambda},\hat{\theta})$  as
\[
D_n(t;\hat{\lambda},\hat{\theta})=\sum_{k\geq 0}\hat{d}(k; \hat{\lambda},\hat{\theta})t^k,
\]
with
\begin{equation}\label{dhat}
\hat{d}(k;\hat{\lambda},\hat{\theta})=
(k+1)\hat{p}_{k+1}-\lambda\sum_{u=0}^k \hat{p}_u q_{k-u}(\hat{\theta})=(k+1)\hat{p}_{k+1}-\lambda\sum_{u=0}^k \hat{p}_{k-u} q_{u}(\hat{\theta}), \quad \forall k \geq 0,
\end{equation}
%{Note that it holds equivalently that
%\begin{equation}\label{dhatmod}
%\hat{d}(k;\hat{\lambda},\hat{\theta})=
%(k+1)\hat{p}_{k+1}-\lambda\sum_{u=0}^k \hat{p}_{k-u} q_{u}(\hat{\theta}), \quad \forall k \geq 0.
%\end{equation}
%}

%This lead us to propose the family of test statistics
and propose the family of test statistics
%If $D_n(t;\hat{\lambda},\hat{\theta})$ is close to 0, $\forall t \in [0,1]$, then 
\begin{equation}\label{snthetalambda}
S_{n,w}(\hat{\lambda},\hat{\theta})=\sum_{k\geq 0}\hat{d}(k; \hat{\lambda},\hat{\theta})^2 w_{k}=
\| \hat{d}(\cdot;\hat{\lambda},\hat{\theta})\|_{w}^{2},
\end{equation}
where $\hat{d}(\cdot;\hat{\lambda},\hat{\theta})=(\hat{d}(0;\hat{\lambda},\hat{\theta}), \hat{d}(1; \hat{\lambda},\hat{\theta}), \ldots)$, and $w=(w_0,w_1,...)$ is such that %0 < w_k \leq M, \quad \forall k,$ for positive constant $M$.
\begin{equation} \label{w}
0 < w_k \leq M, \quad \forall k \geq 0,
\end{equation}
{$M$ being a positive constant.}
Section \ref{seclim} studies some asymptotic  properties of the proposed test statistics  
\eqref{snthetalambda}. Additionally, Section \ref{seccomp}  provides detailed expressions of the quantities 
$\hat{d}(k;\hat{\lambda},\hat{\theta})$ in \eqref{dhat} tailored to specific classes of distributions within the GP family.

\section{Limiting properties of the test statistics}\label{seclim}

%Notice that, at most, one of the indicator functions in the expression of $\phi(X;k,\theta,\lambda)$ is equal to 1.

%In this section, we further explore the limiting properties of $S_{n,w}(\hat{\lambda},\hat{\theta})$.
%With this aim, it will be assumed that ${G}_{21}(t; \theta)$ satisfies the following assumptions:

In this section, we assume that {the } function ${G}_{21}(t; \theta)$ satisfies the following set of assumptions.

\begin{assump} \label{Q}

\begin{enumerate} \itemsep=0pt
\item[(i)] $\sum_{k \geq 0}  \left | q_k(\theta)\right | <\infty$.
\item[(ii)] $\frac{\partial}{ \partial \theta_i} q_k(\theta)$ exists $\forall k \geq 0$, and $\sum_{k \geq 0} \left | \frac{\partial}{ \partial \theta_i} q_k(\theta) \right|<\infty$, $ 1 \leq i \leq p$.
\item[(iii)] $\frac{\partial}{ \partial \theta_i} q_k(\theta)$ exists $\forall k \geq 0$, and there exists  an open  neighborhood $\mathcal{N}(\theta)$ of $\theta$, such that if $\theta_k \in \mathcal{N}(\theta)$, $\forall k \geq 0$, then $\sum_{k \geq 0} \left|  q_k(\theta_k)-q_k(\theta) \right|<\infty$, and  $\sum_{k \geq 0} \left |
    \frac{\partial}{ \partial \theta_i} q_k(\theta_k) - \frac{\partial}{ \partial \theta_i} q_k(\theta)\right |<\infty$, $ 1 \leq i \leq p$.
    
\item[(iv)] $\frac{\partial}{ \partial \theta_i} q_k(\theta)$ exists $\forall k \geq 0$, and { if, for each $k \geq 0$, $\left\{\theta_{kn} \right\}_{n\geq 1}$ is a sequence such that $\displaystyle \sup_{k \geq 0} \|\theta_{kn} -\theta\| \to 0$, } then
$\sum_{k \geq 0} \left|  q_k(\theta_{kn})-q_k(\theta) \right| \to 0$,  and $\sum_{k \geq 0} \left |\frac{\partial}{ \partial \theta_i} q_k(\theta_{kn}) - \frac{\partial}{ \partial \theta_i} q_k(\theta)\right | \to 0$, $ 1 \leq i \leq p$.
\item[(v)] If,  {  for each $k \geq 0$, $\left\{\theta_{kn} \right\}_{n\geq 1}$ is a sequence such that $\displaystyle \sup_{k \geq 0} \|\theta_{kn} -\theta\| \to 0$, } then $\sum_{k \geq 0} k {\left |  q_k(\theta_{kn})-q_k(\theta) \right | } \to 0$. 
\end{enumerate}
\end{assump}

\red{
Before stating properties of the test statistics, we first comment on Assumption 
\ref{Q}. It mainly ensures that $q_k(\theta)$ and its derivatives with respect to $\theta_j$ are continuous as functions of $\theta$. Implications in terms of the moments of $ X\sim GPD(\lambda,\theta,G_{21}(\cdot; \theta))$ are as follows.  Recall that if $X$ is a random variable taking values in  $\mathbb{N}_{0}$ with pgf $g$, then $E(X)=g'(1)$ and 
 $E(X^2)=g''(1)+g'(1)$,
where %$g'(t)=\frac{\partial}{\partial t}g(t)$ and 
$g''(t)=\frac{{\rm d}^2}{{\rm d} t^2}g(t)$. Now, if $ X\sim GPD(\lambda,\theta,G_{21}(\cdot; \theta))$, then
\begin{itemize}
    \item $E_{\lambda, \theta}(X)=\lambda \sum_{k \geq 0}q_k(\theta)$.  If Assumption \ref{Q} (i) holds then  $E_{\lambda, \theta}(X)<\infty$, as it was observed in Section   \ref{GP};  moreover,  if Assumption 
\ref{Q} (iv) also holds and $(\lambda_n, \theta_n) \to (\lambda, \theta)$, then 
 $E_{\lambda_n, \theta_n}(X) \to E_{\lambda, \theta}(X)$, that is, Assumption 
\ref{Q} (iv) ensures that $E_{\lambda, \theta}(X)$ is continuous as a function of 
 $(\lambda, \theta)$.
    \item $E_{\lambda, \theta}(X^2)=\lambda \sum_{k \geq 0}q_k(\theta)+\left(\lambda \sum_{k \geq 0}q_k(\theta)\right)^2+\lambda \sum_{k \geq 0}kq_k(\theta)$. If  $E_{\lambda, \theta}(X^2)<\infty$, Assumption 
\ref{Q} (v) holds, and $(\lambda_n, \theta_n) \to (\lambda, \theta)$, then 
 $E_{\lambda_n, \theta_n}(X^2) \to E_{\lambda, \theta}(X^2)$, that is, Assumption 
\ref{Q} (v) ensures that $E_{\lambda, \theta}(X^2)$ is continuous as a function of 
 $(\lambda, \theta)$.
    \end{itemize}
}

\begin{theorem}\label{power1}
Let  $X_1,\ldots, X_n$ be iid from a  random variable $X \in \mathbb{N}_{0}$, with $E(X^2)<\infty$.
%Assume that $w$ satisfies \eqref{w}, 
Assume that {$\hat{\lambda} \stackrel{a.s.(P)}{\longrightarrow} \lambda$}, $\hat{\theta} \stackrel{a.s.(P)}{\longrightarrow} \theta$, \red{ $(\lambda,\theta)$ being the true parameter values if $H_0$ is true}, and that Assumption \ref{Q} (i)-(iii) hold.
Then
$$S_{n,w}(\hat{\lambda},\hat{\theta} )\stackrel{a.s.(P)}{\longrightarrow} \eta={\| {d}( \cdot ;\lambda, \theta) \|_w^2},$$
where $d(\cdot;\lambda,\theta)=(d(0;\lambda,\theta),d(1; \lambda,\theta),\ldots)$, and
$d(k;\lambda,\theta)=(k+1){p}_{k+1} -{\lambda}\sum_{u=0}^{k}q_{k-u}({\theta}){p}_{u}$, $k\in \mathbb{N}_{0}$.
\end{theorem}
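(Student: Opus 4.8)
The plan is to prove the stronger convergence $\|\hat d(\cdot;\hat\lambda,\hat\theta)-d(\cdot;\lambda,\theta)\|_w\to 0$ and then read off the conclusion from the continuity of the norm. Since $0<w_k\le M$, one has $\|v\|_w^2\le M\sum_{k\ge 0}v_k^2$ for any sequence $v$, so it suffices to bound ordinary (unweighted) $l^2$ norms, for which I write $\|v\|_2^2=\sum_{k\ge0}v_k^2$. Putting $p=(p_k)$, $\hat p=(\hat p_k)$, $q(\theta)=(q_k(\theta))$ and writing $(a*b)_k=\sum_{u=0}^k a_{k-u}b_u$ for the one-sided convolution, the relevant difference splits as
\[
\hat d(k;\hat\lambda,\hat\theta)-d(k;\lambda,\theta)=(k+1)(\hat p_{k+1}-p_{k+1})-\bigl[\hat\lambda\,(q(\hat\theta)*\hat p)_k-\lambda\,(q(\theta)*p)_k\bigr],
\]
and by the triangle inequality it is enough to send the $l^2$ norms of the two bracketed sequences to $0$ separately.

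I would handle the convolution term first. Adding and subtracting terms,
\[
\hat\lambda\,q(\hat\theta)*\hat p-\lambda\,q(\theta)*p=\hat\lambda\,q(\hat\theta)*(\hat p-p)+\hat\lambda\,(q(\hat\theta)-q(\theta))*p+(\hat\lambda-\lambda)\,q(\theta)*p,
\]
and Young's inequality $\|a*b\|_2\le\|a\|_1\|b\|_2$ bounds each summand by an $l^1$ factor times an $l^2$ factor. Here $\|q(\theta)\|_1<\infty$ by Assumption \ref{Q}(i), $\|p\|_2\le 1$, and $\hat\lambda\to\lambda$, so it remains to check that $\|\hat p-p\|_2\to 0$ and $\|q(\hat\theta)-q(\theta)\|_1\to 0$. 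For the former, each $\hat p_k\to p_k$ a.s.\ by the strong law, and since $\sum_k\hat p_k=\sum_k p_k=1$, Scheff\'e's lemma applied pathwise gives $\|\hat p-p\|_1\to 0$, whence $\|\hat p-p\|_2^2\le\|\hat p-p\|_1\to0$. For the latter, using only Assumption \ref{Q}(i)--(iii), I would establish the $l^1$-continuity of $\theta\mapsto q(\theta)$ at $\theta$ by a mean value argument on a ball $\mathcal N(\theta)$: for $\theta'\in\mathcal N(\theta)$, $|q_k(\theta')-q_k(\theta)|\le\|\theta'-\theta\|\,\|\nabla_\theta q_k(\tilde\theta_k)\|$ with $\tilde\theta_k\in\mathcal N(\theta)$, and $\sum_k\|\nabla_\theta q_k(\tilde\theta_k)\|$ admits a finite, $\theta'$-independent bound because $\sum_k\|\nabla_\theta q_k(\theta)\|<\infty$ by (ii) while (iii) controls the gradient increments $\|\nabla_\theta q_k(\vartheta)-\nabla_\theta q_k(\theta)\|$ summably over $\vartheta\in\mathcal N(\theta)$. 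Since $\hat\theta\to\theta$ a.s., eventually $\hat\theta\in\mathcal N(\theta)$ and $\|q(\hat\theta)-q(\theta)\|_1\le C\|\hat\theta-\theta\|\to0$.

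The derivative term, namely $\sum_{k\ge0}(k+1)^2(\hat p_{k+1}-p_{k+1})^2=\sum_{j\ge1}j^2(\hat p_j-p_j)^2$, is where $E(X^2)<\infty$ is essential and which I expect to be the main obstacle, since the tail of an infinite sum must now be controlled uniformly. I would bound it by $\bigl(\sup_j j|\hat p_j-p_j|\bigr)\sum_{j\ge1}j|\hat p_j-p_j|$. The second factor tends to $0$ a.s.\ by Scheff\'e's lemma applied to the nonnegative weights $j\hat p_j$ against counting measure, using the pointwise convergence $j\hat p_j\to jp_j$ together with $\sum_j j\hat p_j=\tfrac1n\sum_i X_i\to E(X)=\sum_j jp_j$; the first factor stays bounded a.s.\ because $j\hat p_j\le\tfrac1n\sum_i X_i$ and $jp_j\le E(X)$. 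Equivalently, one may truncate at a level $K$, control the finite part by the strong law and the tail by $2\sum_{j>K}j^2p_j+2\,\tfrac1n\sum_i X_i^2 I(X_i>K)$, both small once $K$ is large, which is legitimate precisely because $E(X^2)<\infty$.

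Finally, $E(X^2)<\infty$ and Assumption \ref{Q}(i) guarantee $d(\cdot;\lambda,\theta)\in l^2_w$, hence $\eta<\infty$: indeed $\sum_j j^2 p_j^2\le\sum_j j^2 p_j=E(X^2)$ and $\|q(\theta)*p\|_2\le\|q(\theta)\|_1\|p\|_2<\infty$ by Young. Combining the two norm estimates yields $\|\hat d(\cdot;\hat\lambda,\hat\theta)-d(\cdot;\lambda,\theta)\|_w\to0$ a.s., and the reverse triangle inequality then gives $\bigl|\,\|\hat d(\cdot;\hat\lambda,\hat\theta)\|_w-\|d(\cdot;\lambda,\theta)\|_w\,\bigr|\to0$; squaring proves $S_{n,w}(\hat\lambda,\hat\theta)\to\eta$ a.s. The in-probability statement follows by the subsequence principle, as every ingredient above is an a.s.\ limit that persists along subsequences.
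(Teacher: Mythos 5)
Your argument is correct, but it takes a genuinely different and more elementary route than the paper's. The paper proves the theorem by a Taylor expansion of $\hat{d}(k;\hat{\lambda},\hat{\theta})$ around $(\lambda,\theta)$ with mean-value points $(\bar{\lambda}_k,\bar{\theta}_k)$, bounds $\bigl|\,\|\hat{d}(\cdot;\hat{\lambda},\hat{\theta})\|_w-\|\hat{d}(\cdot;\lambda,\theta)\|_w\bigr|$ by derivative norms and remainder terms $\|\hat{D}_j\|_w$ controlled through Lemma \ref{lemma2} (i) and (iv), and then gets $\|\hat{d}(\cdot;\lambda,\theta)\|_w^2\to\eta$ from the strong law of large numbers in Banach spaces (Theorem 2.4 of Bosq \cite{Bosq}) applied to the iid elements $\phi(X_i;\cdot,\lambda,\theta)$, which is where $E(X^2)<\infty$ enters via Lemma \ref{lemma1}. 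You instead prove the strictly stronger conclusion $\|\hat{d}(\cdot;\hat{\lambda},\hat{\theta})-d(\cdot;\lambda,\theta)\|_w\to0$ by splitting off the term $(k+1)(\hat{p}_{k+1}-p_{k+1})$ from the convolution term, handling the latter with Young's inequality $\|a*b\|_2\le\|a\|_1\|b\|_2$ and the former with the weighted Scheff\'e trick $\sum_j j^2(\hat{p}_j-p_j)^2\le\bigl(\sup_j j|\hat{p}_j-p_j|\bigr)\sum_j j|\hat{p}_j-p_j|$; everything then reduces to the scalar SLLN and two applications of Scheff\'e's lemma. This buys several things: no vector-valued limit theory is needed; you get convergence of the whole sequence $\hat{d}$ in $l^2_w$ rather than only of its norm; and your argument in fact reveals that $E(X)<\infty$ already suffices for this theorem (your Scheff\'e step uses only first moments, and $\sum_j j^2p_j^2\le E(X)^2$, whereas the Banach-space SLLN route genuinely requires $E\|\phi(X;\lambda,\theta)\|_w^2<\infty$, hence second moments). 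What the paper's heavier organization buys in exchange is reusability: the Taylor expansion \eqref{taylor} and Lemmas \ref{lemma1}--\ref{lemma2} are recycled verbatim in the proof of Theorem \ref{asymptoticnulldistribution}, where the $\sqrt{n}$-scale analysis makes the Hilbert-space CLT machinery unavoidable anyway. One caveat, which you share exactly with the paper: your claim that $\sum_k\|\nabla_\theta q_k(\tilde{\theta}_k)\|$ admits a bound independent of $\theta'$ reads Assumption \ref{Q}(iii) as giving a bound uniform over sequences in $\mathcal{N}(\theta)$, while the assumption literally asserts only finiteness for each fixed sequence; the paper's own proof makes the identical leap when it treats $\|\hat{D}_j\|_w$ from Lemma \ref{lemma2}(iv) as bounded uniformly in $n$ before multiplying by $|\hat{\lambda}-\lambda|\to0$, so your proposal is at the same level of rigor as the published argument on this point (a fully airtight version of either proof would invoke Assumption \ref{Q}(iv)).
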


\red{Theorem \ref{power1} assumes that the limit of 
$\hat{\lambda}$ and $\hat{\theta}$  are the true parameter values when $H_0$ is true. An interpretation of the meaning of $(\lambda,\theta)$ when $H_0$ is not true is given after Theorem \ref{boot}. }

Notice that $\eta\geq 0$, with $\eta=0$
if and only if $H_{0}$ is true. Hence, the null hypothesis $H_{0}$ should be
rejected for large values of the test statistic $S_{n,w}( \hat{\lambda},\hat{\theta} )$.
Now, to determine what is a large value we have to obtain the
distribution of the test statistic $S_{n,w}(\hat{\lambda},\hat{\theta})$ under
the null hypothesis $H_{0}$, or at least an
approximation to it. With this aim, we next derive its
asymptotic null distribution. It will be assumed that
the estimators $\hat{\lambda}$ and $\hat{\theta}$
satisfy the following regularity condition.

\begin{assump} \label{estim}
   $H_{0}$ is true,   $\theta\in \Theta$ and $\lambda>0$ are the true parameter values, and  
\begin{eqnarray*}
\sqrt{n} (\hat{\lambda}-\lambda) & = & \frac{1}{\sqrt{n}}\sum_{i=1}^n \psi_0(X_i; \lambda, \theta) +r_0,\\
\sqrt{n} (\hat{\theta}-\theta) & = & \frac{1}{\sqrt{n}}\sum_{i=1}^n \psi(X_i;  \lambda, \theta) + r,
\end{eqnarray*}
$\psi(x;  \lambda, \theta)^\top=(\psi_1(x; \lambda, \theta), \ldots, \psi_p(x; \lambda, \theta))$,
with $E_{\lambda,\theta}\{\psi_j(X;  \lambda, \theta)\}=0$, $E_{\lambda,\theta}\{\psi_j(X;  \lambda, \theta)^2\}<\infty$, $0\leq j \leq p$, $r^\top=(r_1, \ldots, r_p)$, and
 $r_{j} \stackrel{P}{\longrightarrow} 0,$  $0 \leq j \leq p$.
\end{assump}

Assumption \ref{estim} implies that %, % 
when the null hypothesis is true and
$\lambda$ and $\theta$ denote the true parameter values, 
then $\sqrt{n} (\hat{\lambda}-\lambda)$ and $\sqrt{n} (\hat{\theta}-\theta)$
are asymptotically normally distributed. This assumption is not restrictive at all since it is
fulfilled by commonly used estimators such as the  maximum likelihood estimators and  moment estimators, as well as by other estimators such as those in \cite{JG&Batsidis2017,AISM2016}, obtained {by} minimizing certain distances. \red{This assumption is commonly supposed to hold in gof testing problems with composite null hypothesis (see, e.g. Assumption (R1) in \cite{Klar}, Assumption 1 in \cite{Meintanis2008} or Assumption A1 in \cite{Meintanis&S}). }

The next theorem gives the asymptotic null distribution of $nS_{n,w}(\hat{\lambda},\hat{\theta})$.

\begin{theorem} \label{asymptoticnulldistribution}
Let $X_1,  \ldots, X_n$ be iid from
$X\sim {GPD}(\lambda,\theta,G_{21}(\cdot; \theta))$, with $E_{\lambda,\theta}(X^2)<\infty$, for some $\lambda>0$ and some  $\theta \in \Theta$.
Assume that $w$ satisfies \eqref{w},  that  Assumption \ref{Q} (i), (ii) and (iv) hold, and that $\hat{\lambda}$ and $\hat{\theta}$ satisfy Assumption  \ref{estim}. Then,
${n}S_{n,w}(\hat{\lambda},\hat{\theta})\stackrel{\mathcal{L}}{\longrightarrow} \|G(\lambda,\theta)\|_w^2$,
where $\{G(\lambda,\theta)=(G(0;\lambda,\theta), G(1;\lambda,\theta), \ldots)\}$ is a centered Gaussian
process in $l^2_w$ with covariance kernel
$C(k,s)=Cov_{\lambda,\theta}\{Y(X;k,\lambda,\theta),Y(X;s, \lambda,\theta)\}$, $k,s \in \mathbb{N}_{0}$, 
\begin{eqnarray}
Y(X;k,\lambda,\theta) & = & \phi(X;k,\lambda,\theta)+ \sum_{j=0}^p \mu_j(k;\lambda,\theta) \psi_j(X;  \lambda, \theta), \label{Y.expression} \\
   \phi(x;k,\lambda,\theta)& = & (k+1)I(x=k+1) -{\lambda}\sum_{u=0}^{k}q_{k-u}({\theta})I(x=u), \label{phik}
 \end{eqnarray}
%$\phi$ is as defined in (\ref{phi}),
$  \psi_0(X;  \lambda, \theta), \ldots,  \psi_p(X;  \lambda, \theta)$ are as defined in Assumption \ref{estim}, and
\begin{eqnarray*}
\mu_0(k;\lambda,\theta) & = &   E_{\lambda,\theta}\left\{\frac{\partial}{\partial \lambda}\phi(X;k,\lambda,\theta)\right\}=-\sum_{u=0}^{k}q_{k-u}({\theta})p_u,\\
\mu_j(k;\lambda,\theta)  & = &  E_{\lambda,\theta}
\left\{\frac{\partial}{\partial \theta_j}\phi(X;k,\lambda,\theta)\right\}=-\lambda\sum_{u=0}^{k}\frac{\partial}{\partial \theta_j}q_{k-u}({\theta})p_u, \quad 1 \leq j \leq p.
\end{eqnarray*}
\end{theorem}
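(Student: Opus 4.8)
The plan is to prove the stronger statement that the $l^2_w$-valued random element $\sqrt{n}\,\hat{d}(\cdot;\hat{\lambda},\hat{\theta})$ converges weakly in $l^2_w$ to the centered Gaussian process $G(\lambda,\theta)$, and then to conclude by the continuous mapping theorem applied to the continuous squared-norm functional $z\mapsto\|z\|_w^2$, since $nS_{n,w}(\hat{\lambda},\hat{\theta})=\|\sqrt{n}\,\hat{d}(\cdot;\hat{\lambda},\hat{\theta})\|_w^2$. The heart of the argument is a linearization showing that, coordinatewise,
$$\sqrt{n}\,\hat{d}(k;\hat{\lambda},\hat{\theta})=\frac{1}{\sqrt{n}}\sum_{i=1}^n Y(X_i;k,\lambda,\theta)+R_n(k),$$
with $Y$ as in \eqref{Y.expression}, and that the remainder is negligible in the norm, i.e.\ $\|R_n\|_w\stackrel{P}{\longrightarrow}0$. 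Since the main term is a normalized sum of i.i.d.\ mean-zero random elements of $l^2_w$, a central limit theorem in the separable Hilbert space $l^2_w$ then produces the Gaussian limit with the stated covariance kernel.

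First I would record the exact identity $\sqrt{n}\,\hat{d}(k;\lambda,\theta)=\tfrac{1}{\sqrt{n}}\sum_{i=1}^n \phi(X_i;k,\lambda,\theta)$, immediate from \eqref{dhat}, \eqref{phik} and $\hat{p}_k=\tfrac1n\sum_i I(X_i=k)$; under $H_0$ each summand has mean $d(k;\lambda,\theta)=0$ by \eqref{relation}. Writing $\sqrt{n}\,\hat{d}(k;\hat{\lambda},\hat{\theta})=\sqrt{n}\,\hat{d}(k;\lambda,\theta)+\sqrt{n}[\hat{d}(k;\hat{\lambda},\hat{\theta})-\hat{d}(k;\lambda,\theta)]$, I would expand the second difference, isolating the contributions of $\hat{\lambda}-\lambda$ and of $\hat{\theta}-\theta$. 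The factor multiplying $\hat{\lambda}-\lambda$ is $-\sum_{u=0}^k q_{k-u}(\hat{\theta})\hat{p}_u$, which converges to $\mu_0(k;\lambda,\theta)$; the factor in $\hat{\theta}_j-\theta_j$ arises from a first-order Taylor expansion of $q_{k-u}(\cdot)$ about $\theta$ and converges to $\mu_j(k;\lambda,\theta)$. Substituting the stochastic expansions of Assumption \ref{estim} for $\sqrt{n}(\hat{\lambda}-\lambda)$ and $\sqrt{n}(\hat{\theta}-\theta)$ and collecting the leading terms reproduces $\tfrac1{\sqrt{n}}\sum_i Y(X_i;k,\lambda,\theta)$, with everything else gathered into $R_n(k)$.

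Next I would verify that the limiting summand genuinely lies in $l^2_w$ with finite second moment of its norm, so that the Hilbert-space CLT applies. For fixed $x$, the sequence $\phi(x;\cdot,\lambda,\theta)$ has a single entry of size $x$ (at $k=x-1$) together with the tail $-\lambda q_{k-x}(\theta)$ for $k\geq x$, whence $\|\phi(x;\cdot,\lambda,\theta)\|_w^2\leq M x^2+\lambda^2 M\sum_{m\geq0}q_m(\theta)^2$; as $E(X^2)<\infty$ and $\sum_m|q_m(\theta)|<\infty$ by Assumption \ref{Q}(i), this has finite expectation. The deterministic sequences $\mu_j(\cdot;\lambda,\theta)$ are convolutions of $(p_u)$ with $(q_m(\theta))$ or $(\partial_{\theta_j}q_m(\theta))$, hence lie in $l^1\subset l^2_w$ by Assumption \ref{Q}(i)--(ii) and \eqref{w}, while $E_{\lambda,\theta}\{\psi_j(X;\lambda,\theta)^2\}<\infty$ by Assumption \ref{estim}; therefore $E_{\lambda,\theta}\|Y(X;\cdot,\lambda,\theta)\|_w^2<\infty$ and $E_{\lambda,\theta}\{Y(X;\cdot,\lambda,\theta)\}=0$. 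The CLT in $l^2_w$ then gives $\tfrac1{\sqrt{n}}\sum_i Y(X_i;\cdot,\lambda,\theta)\stackrel{\mathcal{L}}{\longrightarrow}G(\lambda,\theta)$, a centered Gaussian element whose covariance operator has kernel $C(k,s)=Cov_{\lambda,\theta}\{Y(X;k,\lambda,\theta),Y(X;s,\lambda,\theta)\}$. Together with $\|R_n\|_w\stackrel{P}{\longrightarrow}0$ and the continuous mapping theorem this yields $nS_{n,w}(\hat{\lambda},\hat{\theta})\stackrel{\mathcal{L}}{\longrightarrow}\|G(\lambda,\theta)\|_w^2$.

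I expect the main obstacle to be proving $\|R_n\|_w\stackrel{P}{\longrightarrow}0$, precisely because $R_n$ must be controlled in all coordinates simultaneously rather than coordinatewise, and this is what the technical Assumption \ref{Q} is tailored for. The remainder collects the Taylor remainders $\lambda\sum_u[\partial_{\theta_j}q_{k-u}(\tilde{\theta}_k)-\partial_{\theta_j}q_{k-u}(\theta)]\hat{p}_u$ with intermediate points $\tilde{\theta}_k$ between $\hat{\theta}$ and $\theta$, the differences arising from replacing $(\hat{\theta},\hat{p})$ by $(\theta,p)$ in the coefficient sequences $\sum_u q_{k-u}(\cdot)\hat{p}_u$ and their $\partial_{\theta_j}$ analogues, and the lower-order pieces $\mu_j(k)r_j$. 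Since $\hat{\theta}\stackrel{a.s.}{\longrightarrow}\theta$, one has $\sup_k\|\tilde{\theta}_k-\theta\|\leq\|\hat{\theta}-\theta\|\to0$, so Assumption \ref{Q}(iv) forces $\sum_k|\partial_{\theta_j}q_{k-u}(\tilde{\theta}_k)-\partial_{\theta_j}q_{k-u}(\theta)|\to0$ and likewise for $q_k$ itself; combined with $\hat{p}_u\to p_u$, $\sum_u\hat{p}_u=1$, the bound \eqref{w} on the weights, and $\sqrt{n}(\hat{\lambda}-\lambda),\sqrt{n}(\hat{\theta}_j-\theta_j)=O_P(1)$, each contribution to $\sum_k R_n(k)^2 w_k$ is bounded by a product of an $O_P(1)$ factor and an $o_P(1)$ factor, while $\mu_j(k)r_j$ vanishes in norm because $\mu_j(\cdot)\in l^2_w$ and $r_j\stackrel{P}{\longrightarrow}0$. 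Assembling these estimates gives $\|R_n\|_w\stackrel{P}{\longrightarrow}0$ and completes the proof.
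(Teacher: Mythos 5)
Your proposal is correct and is essentially the paper's own proof: the same exact identity $\sqrt{n}\,\hat{d}(k;\lambda,\theta)=n^{-1/2}\sum_i\phi(X_i;k,\lambda,\theta)$ followed by a Taylor linearization with intermediate points, substitution of the expansions in Assumption \ref{estim} to isolate the iid term $n^{-1/2}\sum_i Y(X_i;\cdot,\lambda,\theta)$, $\|\cdot\|_w$-control of the remainder (your Young/Scheff\'e treatment of the $\hat{p}_u\mapsto p_u$ replacement is a harmless variant of the paper's SLLN-in-Banach-spaces step, Lemma \ref{lemma2} (iii)), a Hilbert-space CLT, and the continuous mapping theorem. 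One small repair: Assumption \ref{estim} yields only $\hat{\theta}\stackrel{P}{\longrightarrow}\theta$, not $\hat{\theta}\stackrel{a.s.}{\longrightarrow}\theta$ as you write, so Assumption \ref{Q} (iv) must be invoked at the intermediate points via the in-probability $\varepsilon$--$\delta$ argument of the paper's Lemma \ref{lemma2} (v) rather than pathwise; nothing else in your argument changes.
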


The distribution of $\|G(\lambda,\theta)\|_w^2$  is the same as that of $\sum_{j=1}^\infty \lambda_jN_j^2$, where $\lambda_1, \lambda_2, \ldots $ are the
positive eigen\-va\-lues of the integral operator $f \mapsto Af$  on $\ell^2_w$ associated with the covariance kernel $C(\cdot,\cdot)$ given in Theorem \ref{asymptoticnulldistribution},
i.e., $(Af)(k) \! = \! \sum_{s\geq 0} \! C(k,s) f(s) w_s$,
and $N_1,N_2, \ldots $ are iid standard normal random variables. \red{In ge\-ne\-ral, calculating the eigenvalues associated with the covariance kernel of a test statistic is a hard problem. Moreover, solutions for testing composite null hypotheses are very scarce. Examples are Stephens \cite{Stephens1976}, for testing gof to the univariate normal and exponential distributions with tests based on the empirical distribution function{,} and Ebner and Henze \cite{ebner2023eigenvalues}, for testing gof to the univariate normal distribution with a test based on the empirical characteristic function. A general procedure for consistently approximating the eigen\-va\-lues is described  in  Section 3.2 of Novoa-Mu\~noz and Jim\'enez-Gamero \cite{Novoa2}, that was applied to the problem of testing gof to the bivariate Poisson distribution. Its application requires the use of consistent estimators of the functions $\psi_0, \psi_1, \ldots, \psi_p $  in Assumption \ref{estim}.
} {For some alternative approximation methods we also refer to \cite{bovzin2020new} (see also \cite{meintanis2023bahadur}), \cite{ebner2023bahadur}, \cite{ebner2024eigenvalues} and the references cited therein.
}

\red{To approximate the null distribution of the proposed test statistic, next we study a parametric bootstrap,}
%we consider a parametric bootstrap, 
defined as follows: let
 $ {X}^*_{1}$,
$\ldots$, ${X}^*_{n}$ be iid from  $X^*\sim {GPD}(\hat{\lambda},\hat{\theta},G_{21}(\cdot; \hat{\theta}))$, and
let $S_{n,w}^*(\hat{\lambda}^*,\hat{\theta}^*)$ be the bootstrap version
of $S_{n,w}(\hat{\lambda},\hat{\theta})$ obtained by replacing ${X}_{1},  \ldots,  {X}_{n}$, $\hat{\lambda}  =  \hat{\lambda} ({X}_{ 1},  \ldots,  {X}_{n})$ and
$\hat{\theta}  =  \hat{\theta} ({X}_{ 1},  \ldots,  {X}_{n})$ with $ {X}^*_{ 1},  \ldots,  {X}^*_{n}$, $\hat{\lambda}^{*}= \hat{\lambda}({X}^*_{1}, \ldots, {X}^*_{n})$ and
$\hat{\theta}^{*}= \hat{\theta}({X}^*_{1}, \ldots, {X}^*_{n})$, respectively,
in the expression of $S_{n,w}(\hat{\lambda},\hat{\theta})$. Then, the parametric bootstrap approximates $P_{\lambda , \theta}\{S_{n,w}(\hat{\lambda},\hat{\theta} )\leq x\}$
by means of its bootstrap version, i.e.~$P_*\{S_{n,w}^*(\hat{\lambda}^*,\hat{\theta}^*)\leq x \}=P_{\hat{\lambda} , \hat{\theta} }\{S_{n,w}^*(\hat{\lambda}^*,\hat{\theta}^*)\leq x \,|\, X_1, \ldots, X_n\}$. The consistency of this  estimator is stated in  Theorem \ref{boot}, which assumes that  Assumption \ref{estim2} below, holds true.

\begin{assump} \label{estim2} Let $\{\lambda_n\}_{n \geq 1}$  and $\{\theta_n\}_{n \geq 1}$ be two sequences such that   $\lambda_n \to \lambda>0$ and 
$\theta_n \to \theta \in \Theta$, then

\begin{enumerate}
\item[(i)] {Assumption \ref{estim} holds with $\theta$ and $\lambda$ replaced with  $\theta_n$ and $\lambda_n$, that is, if $X_{n1}, \ldots, X_{nn}$ are iid with {$X_{ni} \sim {GPD}(\lambda_n,\theta_n,G_{21}(\cdot; \theta_n))$}, $1\leq i \leq n$, $\hat{\lambda}_n=\lambda(X_{n1}, \ldots, X_{nn})$ and 
$\hat{\theta}_n=\theta(X_{n1}, \ldots, X_{nn})$,  
then
\begin{eqnarray*}
\sqrt{n} (\hat{\lambda}_n-\lambda_n) & = & \frac{1}{\sqrt{n}}\sum_{i=1}^n \psi_0(X_{ni}; \lambda_n, \theta_n) +r_{n0},\\
\sqrt{n} (\hat{\theta}_n-\theta_n) & = & \frac{1}{\sqrt{n}}\sum_{i=1}^n \psi(X_{ni};  \lambda_n, \theta_n) + r_n,
\end{eqnarray*}
where $\psi(x;  \lambda, \theta)$ is as defined in  Assumption \ref{estim},
$E_{\lambda_n,\theta_n}\{\psi_j(X;  \lambda_n, \theta_n)\}=0$, $E_{\lambda_n,\theta_n}\{\psi_j(X;  \lambda_n, \theta_n)^2\}<\infty$, $0\leq j \leq p$, $r_n^\top=(r_{n1}, \ldots, r_{np})$, and
 $P_{\lambda_n,\theta_n}(|r_{nj}|>\varepsilon) \to 0$, $\forall \varepsilon>0,$  $0 \leq j \leq p$.}

\item[(ii)] $E_{\lambda_n,\theta_n}\{\psi_j(X;\lambda_n,\theta_n)\psi_v(X;\lambda_n,\theta_n)\} \to E_{\lambda,\theta}\{\psi_j(X; \lambda,\theta) \psi_v(X;\lambda,\theta)\}<\infty$,  $0 \leq j,v \leq p$.
\item[(iii)]  $\psi_j(k;\lambda_n,\theta_n)\to \psi_j(k;\lambda,\theta) $, $\forall k \in \mathbb{N}_0$,  $0 \leq j \leq p$.
\item[(iv)]  $E_{\lambda_n,\theta_n}\{X\psi_j(X;\lambda_n,\theta_n)\} \to E_{\lambda,\theta}\{X\psi_j(X; \lambda,\theta)\} $,  $0 \leq j \leq p$.
\item[(v)] $ \displaystyle \sup_{\vartheta \in \mathcal{N}}E_\vartheta\left(\|\zeta(X; \vartheta)\|^2 I(\|\zeta(X;\vartheta)\|>\varepsilon\sqrt{n}\right)\to 0$, $\forall \varepsilon>{0}$, where  $\mathcal{N}$ is an open neighborhood of $(\lambda, \theta_1, \ldots, \theta_p)^\top$, and $\zeta(x; \vartheta)^\top=(\psi_0(x; \vartheta), \psi_1(x; \vartheta), \ldots, \psi_p(x; \vartheta))$.
\end{enumerate}
\end{assump}

Recall that Assumption \ref{Q} implies that the first {and} second moments of a random variable having a GP distribution are continuous functions when considered as functions of the parameters. Now, to prove the consistency of the parametric bootstrap approximation to the null distribution of the test statistic, stronger assumptions are needed to ensure that certain quantities are also  continuous functions of the parameters, as stated in Assumption \ref{estim2}. Most parts of Assumption \ref{estim2} are commonly assumed when proving the consistency of a parametric bootstrap distribution estimator. For example,  Assumption \ref{estim2} (i), (ii) and (v) are Assumption P, (B.1)  and (B.4), respectively in \cite{Meintanis&S}; Assumption \ref{estim2} (i), (ii), and (iii) is Assumption (R.1) in \cite{Klar} and Assumption A.1 in \cite{Henze1996}.

\begin{theorem}\label{boot}
Let $X_1, \ldots, X_n$ be iid from $X$, a random variable
taking values in $\mathbb{N}_{0}$. Assume that $\hat{\lambda} \stackrel{a.s. (P)}{\longrightarrow} \lambda$,
$\hat{\theta} \stackrel{a.s.(P)}{\longrightarrow} \theta$,
for some $\lambda>0$ and some $\theta \in \Theta$,  so that $E_{\lambda,\theta}(X^2)<\infty$ \red{and $(\lambda,\theta)$ are the true parameter values if $H_0$ is true}, that
$w=(w_0,w_1,...)$ satisfies \eqref{w},  that  Assumptions \ref{Q} and  \ref{estim2} hold. Then,
$$ \sup_{x\in\mathbb{R}} \left|P_*\{nS_{n,w}^*(\hat{\lambda}^*,\hat{\theta}^*) \leq x \}
-P_{\lambda,\theta}\{nS_{n,w}(\hat{\lambda},\hat{\theta})\leq x\}\right| \stackrel{a.s.(P)}{\longrightarrow} 0.$$
\end{theorem}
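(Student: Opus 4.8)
The plan is to prove bootstrap consistency by the classical route: I would show that the two distribution functions appearing in the supremum converge to one and the same \emph{continuous} limiting distribution function, and then upgrade this pointwise convergence to uniform convergence by P\'olya's theorem. For the second term this is immediate from Theorem~\ref{asymptoticnulldistribution}, which gives, under $P_{\lambda,\theta}$, that $nS_{n,w}(\hat{\lambda},\hat{\theta})\stackrel{\mathcal{L}}{\longrightarrow}\|G(\lambda,\theta)\|_w^2$. The limiting law is $\sum_{j\geq 1}\lambda_j N_j^2$ with positive eigenvalues $\lambda_j$ and iid standard normals $N_j$, hence its distribution function $F$ is continuous; consequently $P_{\lambda,\theta}\{nS_{n,w}(\hat{\lambda},\hat{\theta})\leq x\}\to F(x)$ for every $x$, and uniformly in $x$ by P\'olya.

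The substance of the proof lies in the bootstrap term. Conditionally on the data, $X_1^*,\ldots,X_n^*$ are iid from $GPD(\hat{\lambda},\hat{\theta},G_{21}(\cdot;\hat{\theta}))$, a law whose parameters converge to $(\lambda,\theta)$. I would therefore establish a triangular-array analogue of Theorem~\ref{asymptoticnulldistribution}: for deterministic sequences $\lambda_n\to\lambda$ and $\theta_n\to\theta$, if $X_{n1},\ldots,X_{nn}$ are iid from $GPD(\lambda_n,\theta_n,G_{21}(\cdot;\theta_n))$ and $\hat{\lambda}_n,\hat{\theta}_n$ are the corresponding re-estimators, then $nS_{n,w}(\hat{\lambda}_n,\hat{\theta}_n)\stackrel{\mathcal{L}}{\longrightarrow}\|G(\lambda,\theta)\|_w^2$, i.e.\ the \emph{same} limit as in the fixed-parameter case. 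This is exactly where Assumption~\ref{estim2} is used: part~(i) supplies the asymptotically linear expansion of $\sqrt{n}(\hat{\lambda}_n-\lambda_n)$ and $\sqrt{n}(\hat{\theta}_n-\theta_n)$ about the moving centers; parts~(ii)--(iv) guarantee that the drift terms $\mu_j(k;\cdot)$ and the covariance kernel $C(k,s)$ evaluated at $(\lambda_n,\theta_n)$ converge to their values at $(\lambda,\theta)$, so that the limiting Gaussian process is $G(\lambda,\theta)$ and not some other process; and part~(v) is the Lindeberg condition needed for the triangular-array central limit theorem.

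Carrying out this step mirrors the proof of Theorem~\ref{asymptoticnulldistribution}. First I would write $\sqrt{n}\,\hat{d}(k;\hat{\lambda}_n,\hat{\theta}_n)$ as an average of centered iid summands plus an asymptotically negligible remainder, by Taylor-expanding $\hat{d}(k;\cdot)$ in the estimators and inserting the expansions from Assumption~\ref{estim2}(i); the limiting summand is the score $Y(X;k,\lambda,\theta)$ of~\eqref{Y.expression}. Convergence of the finite-dimensional distributions to those of $G(\lambda,\theta)$ then follows from the Lindeberg--Feller theorem via Assumption~\ref{estim2}(v), while tightness in $l^2_w$ is obtained from the moment bounds furnished by Assumption~\ref{Q}, the uniform bound~\eqref{w} on the weights, and $E_{\lambda,\theta}(X^2)<\infty$. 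The continuous mapping theorem applied to the squared-norm functional on $l^2_w$ yields the claimed weak convergence of $nS_{n,w}$.

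Finally I would combine the two halves. Because $\hat{\lambda}\to\lambda$ and $\hat{\theta}\to\theta$ almost surely (respectively in probability), the triangular-array result applies with $\lambda_n=\hat{\lambda}$, $\theta_n=\hat{\theta}$ along almost every sample path (respectively, in the in-probability case, along a further subsequence of any given subsequence), giving $P_*\{nS_{n,w}^*(\hat{\lambda}^*,\hat{\theta}^*)\leq x\}\to F(x)$. Both distribution functions thus converge to the same continuous $F$, and P\'olya's theorem converts this into uniform convergence, so their difference tends to $0$ uniformly in $x$. The main obstacle I anticipate is precisely this triangular-array weak convergence under drifting parameters: verifying tightness in $l^2_w$ while the summand law varies with $n$, and confirming that the kernel and drift evaluated at $(\lambda_n,\theta_n)$ converge to their values at $(\lambda,\theta)$ rather than to a different limit. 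This is the whole point of the continuity encoded in Assumptions~\ref{Q} and~\ref{estim2}(ii)--(iv) together with the Lindeberg condition~(v), so the conceptual task is to verify that these hypotheses are strong enough to transport the fixed-parameter argument to the triangular array.
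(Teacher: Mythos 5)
Your proposal is correct and follows essentially the same route as the paper: the paper also reduces the bootstrap statistic to $(1/\sqrt{n})\sum_{i=1}^n Y(X_i^*;\hat{\lambda},\hat{\theta})$ plus an $o_{P_*}(1)$ remainder and invokes a Hilbert-space central limit theorem for triangular arrays (Theorem 1.1 of Kundu et al.\ \cite{Kundu2000}), verifying exactly the conditions you identify --- Assumption \ref{estim2}(v) for the Lindeberg condition, and continuity of the drift terms $\mu_j(k;\cdot)$ and of the covariance kernel in $(\lambda,\theta)$ (the paper's Lemma \ref{CC} and Remark \ref{probremark}) so that the conditional limit is the same $\|G(\lambda,\theta)\|_w^2$ as under $P_{\lambda,\theta}$. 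The only cosmetic difference is that you phrase the drifting-parameter step via deterministic sequences plus an almost-sure path/subsequence argument, whereas the paper checks the triangular-array CLT conditions directly for the conditional distributions given the data, a.s.\ or in probability.
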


\red{Theorem \ref{boot} holds whether $H_0$ is true or not. It states that
the conditional distribution of $S_{n,w}^*(\hat{\lambda}^*,\hat{\theta}^*)$
and the distribution of $S_{n,w}(\hat{\lambda},\hat{\theta})$ are close when the sample
is drawn from $X\sim {GPD}(\lambda,\theta,G_{21}(\cdot; \theta))$, 
$\lambda$ and $\theta$ being the  limits of $\hat{\lambda}$ and $\hat{\theta}$, respectively. In particular,
if the null hypothesis $H_0$ is true, then Theorem \ref{boot} states that the conditional
distribution of $S_{n,w}^*(\hat{\lambda}^*,\hat{\theta}^*)$ is close to the
null distribution of $S_{n,w}(\hat{\lambda},\hat{\theta})$. If the null hypothesis $H_0$ is not true, then ${GPD}(\lambda,\theta,G_{21}(\cdot; \theta))$ can be interpreted as the projection of the true law of the data, $L$, on the set ${\cal G}=\{ {GPD}(\lambda,\theta,G_{21}(\cdot; \theta)), \, \lambda>0, \, \theta \in \Theta\}$, that is, it is  the point in  ${\cal G}$ closest to $ L$. The precise meaning of closeness depends on the method used to derive the parameter estimators $\hat{\lambda}$ and $\hat{\theta}$. For example, if maximum likelihood estimators are employed, then  ${GPD}(\lambda,\theta,G_{21}(\cdot; \theta))$ minimizes the Kullback-Leibler distance between $L$ and ${\cal G}$ (see \cite{white}); the estimators can be also chosen so that they minimize a distance between the pfg of the data and the pfg of a law in ${\cal G}$, and in this case the meaning of closeness is clear (see \cite{JG&Batsidis2017}). 
The existence and uniqueness of $\lambda>0$ and $\theta \in \Theta$ minimizing a distance (or, in general, a  dissimilarity) between $ L$ and ${\cal G}$ is not guaranteed in general (see, e.g. Example 2 in 
\cite{Jim2014})}.

Let $\alpha \in (0,1)$. As a consequence of Theorem  \ref{boot}  the test function 
\[
\Psi^*=\left\{
         \begin{array}{ll}
           1, & \text{if}\ S_{n,w}(\hat{\lambda},\hat{\theta})\geq s^*_{n,\alpha}, \\
           0, & \text{otherwise},
         \end{array}
       \right.
\]
{is asymptotically correct in the sense that when $H_0$ is true, its level goes to $\alpha$,
 where  $s^*_{n,\alpha}=\inf\{x:P_*(S^*_{n,w} (\hat{\lambda}^{*},\hat{\theta}^*)\geq x)\leq \alpha\}$ is the $\alpha$
upper percentile of the bootstrap distribution of $S_{n,w} (\hat{\lambda},\hat{\theta})$.}
%, and$S_{obs}$ is the observed value of the test statistic at the available data.}
%or, equivalently, the test that rejects $H_0$ when
%$p^*=P_*\{S^*_{n,w}(\hat{\lambda}^{*},\hat{\theta}^{*}) \geq S_{obs}\}\leq \alpha$,
%is asymptotically correct in the sense that when $H_0$ is true, its level goes to $\alpha$,
% where  $s^*_{n,\alpha}=\inf\{x:P_*(S^*_{n,w} (\hat{\lambda}^{*},\hat{\theta}^*)\geq x)\leq \alpha\}$ is the $\alpha$
%upper percentile of the bootstrap distribution of $S_{n,w} (\hat{\lambda},\hat{\theta})$, and
%$S_{obs}$ is the observed value of the test statistic at the available data.

An immediate consequence of Theorem \ref{power1} and Theorem \ref{boot}
is that, \red{if the assumptions in those theorems hold,} then the test $\Psi^*$ is consistent; that is,
it is able to detect any fixed alternative, in the sense that
$\Pr(\Psi^*=1) \to 1$ whenever $X\nsim {GPD}(\lambda,\theta, G_{21}(\cdot; \theta))$, for any $\lambda, \theta$.

Due to computational reasons, as usual,  the bootstrap distribution of the test statistic is approximated by simulation generating  $B$ bootstrap samples (see e.g. \cite{Batsidis&Lemonte}). 

%Although the exact bootstrap distribution estimator is still difficult to derive, it can be approximated as outlined below
%\begin{enumerate}
%\item Calculate $S_{obs}$, the observed value of the test statistic at the sample, $X_1,\ldots, X_n$.
%\item Generate $B$ bootstrap samples $(X_1^{*1},\ldots, X_n^{*1}), \ldots, (X_1^{*B},\ldots, X_n^{*B}) $  from
%$X^*\sim GP(\hat{\lambda},\hat{\theta}, G_{21}(t;\hat{\theta}))$.
%\item Calculate the test statistic at
%each bootstrap sample and denote them as $S_1^*, \ldots, S_B^*$.
%\item Approximate the bootstrap $p$-value  by means of 
%$\widehat{p}=\#\{S_v^*\geq S_{obs})\}/{B}.$
%\end{enumerate}

\section{Computation of the test statistic in some specific GP distributions }\label{seccomp}
This section derives expressions for the test statistics tailored to the Katz and the Compound Poisson distributions. %The derivation for other well-known members of GP family is given in the SM.

%Before proceeding note that the test statistic $S_{n,w}(\hat{\lambda},\hat{\theta})$ is defined by means of an infinite sum. 
In the sequel, we denote  $M_1=\displaystyle \max_{1\leq i \leq n}X_i$.

\textbf{The Katz family of distributions.}
The law of $X$ belongs to the Katz family of distributions  with parameters $\lambda>0$ and $\theta<1$, 
%{(Question: is this correct?  I think  $|\theta|<1)$} 
if its pgf is given by
\[ %\begin{equation}\label{pgfkatz}
g_{\text{Katz}}(t;\lambda,\theta)=\left(1-\theta\right)^{\frac{\lambda}{\theta}}\left(1-\theta t\right)^{-\frac{\lambda}{\theta}}.
\] %\end{equation}
Then we write that $X\sim \text{Katz}(\lambda,\theta)$. Note that $g_{\text{Katz}}(t;\lambda,\theta)$ satisfies relation {\eqref{differential}}  with
\[ %\begin{equation}\label{g1g2katz}
G_{1}(t;\theta)=1-\theta t \text { and } G_{2}(t;\theta)=1,
\] %\end{equation}
or relation {\eqref{differential.21}} with 
\[ %\begin{equation}\label{g21Katz}
G_{21}(t;\theta)=(1-\theta t)^{-1}.
\] %\end{equation}}
Thus, $\text{Katz}(\lambda,\theta)\equiv GPD\left(\lambda, \theta,(1-\theta t)^{-1}\right)$.
The Katz family reduces to the binomial
distribution when $\theta<0$ and $\theta (\theta-1)^{-1}=p$, $-\lambda/\theta=n$, to the Poisson distribution
when $\theta \rightarrow  0$  and to the negative binomial model when $0 < \theta = p < 1$, $\lambda/\theta= k$ (see Consul and Famoye \cite{Consul&Famoye} and references therein).

In this special case, 
\begin{eqnarray*}
 \hat{d}\big( 0;\hat{\lambda},\hat{\theta}\big) & = & \hat{p}_{1}-\hat{\lambda}\hat{p}_{0},\\
 \hat{d}\big(k;\hat{\lambda},\hat{\theta}\big) & = & (k+1) \hat{p}_{k+1}-\hat{\lambda}\sum_{u=0}^{k}\hat{p}_{u}\hat{\theta}^{k-u}, \quad k=1,...,M_1-1,\\
\hat{d}\big(k;\hat{\lambda},\hat{\theta}\big) & = & -
 \hat{\lambda} \sum_{u=0}^{M_1}\hat{p}_{u}\hat{\theta}^{k-u}=-\hat{\lambda}\sum_{u=k-M_1}^{k}\hat{p}_{k-u}\hat{\theta}^{u}, \quad k\geq M_1.
 \end{eqnarray*}

\textbf{Compound Poisson distributions.}
 The law of $X$ belongs to the family of Compound Poisson (CP) distributions with parameters $\lambda>0$ and $\theta$ if it is distributed as  $\sum_{j=1}^{N}Y_j$, where $Y_1, Y_2, \ldots $ are iid from $Y$  with pgf $g(t;\theta)$, $N$ is independent of $Y_1, Y_2, \ldots $ and follows a Poisson law with parameter $\lambda>0$. In such a case, we write that $X\sim CP(\lambda,\theta)$. The pgf of the CP distribution with parameters  $\lambda>0$  and $\theta$ is given by
 \[ %\begin{equation}\label{pgfCP}
 g_{CP}(t;\lambda,\theta)=\exp\left\{-\lambda [1-g(t;\theta)]\right\}.
\] % \end{equation}
 It is easily to verify that $g_{CP}(t;\lambda,\theta)$ satisfies relation {\eqref{differential}} for
 \[ %\begin{equation}\label{g1g2CP}
G_{1}(t;\theta)=1 \text { and } G_{2}(t;\theta)=g'(t;\theta),
\] %\end{equation}
 where $g'(t;\theta)$ is the derivative with respect to $t$ of  $g(t;\theta)$.   Equivalently, $g_{CP}(t;\lambda,\theta)$ satisfies relation {\eqref{differential.21}} with 
\begin{equation}\label{g21CP}
G_{21}(t;\theta)=g'(t;\theta).
\end{equation}
Thus, $CP(\lambda,\theta)\equiv GPD\left(\lambda, \theta,{g'(t;\theta)}\right)$.
Taking into account relation \eqref{g21CP}, we have that
\[
G_{21}(t;\theta)=\sum_{k\geq 0}(k+1) P_{\theta}\left(Y=k+1\right)t^{k},
\]
and thus
$
q_k(\theta)=(k+1)P_{\theta}\left(Y=k+1\right)$.  As it is implicitly assumed that the law of $Y$ depends on $\theta$,
we have denoted $P_{\theta}$ to the  probability  {of} that random variable. 
Therefore
\begin{eqnarray*}
 \hat{d}\big(k;\hat{\lambda},\hat{\theta}\big) & = & (k+1)\hat{p}(k+1)-\hat{\lambda}\sum_{u=0}^{k}q_u(\hat{\theta})\hat{p}_{k-u}, \quad k=0,1,...,M_1-1,\\
 \hat{d}\big(k;\hat{\lambda},\hat{\theta}\big) & = & -\hat{\lambda}\sum_{u=0}^{M_1}q_{k-u}(\hat{\theta})\hat{p}_{u}=-\hat{\lambda}\sum_{u=0}^{k}q_u(\hat{\theta})\hat{p}_{k-u}=-\hat{\lambda}\sum_{u=k-M_1}^{k}q_u(\hat{\theta})\hat{p}_{k-u}, \quad k\geq M_1.
 \end{eqnarray*}

For example, if $Y$ has a Poisson distribution with parameter $\theta$, in which case the law of $X$ is called Poisson-Poisson and denoted as $PP(\lambda, \theta)\equiv GPD\left(\lambda, \theta,{\theta e^{\theta (t-1)}}\right)$,   we have $q_k(\theta)=e^{-\theta}{\theta^{k+1}}/{k!}$;   if $Y$ has a Binomial 
 distribution  with parameters $\theta=(\nu,p)$, in which case the law of $X$ is called Poisson-Binomial and denoted as $PB(\lambda,\nu,p) \equiv GPD\left(\lambda, (\nu,p),
\nu p (pt+q)^{\nu-1} \right)$,   we have $q_k(\theta)= {\nu-1\choose k}\nu p^{k+1}(1-p)^{\nu-1-k}$, $0\leq k \leq \nu-1$, $q_k(\theta)=0$, $k \geq \nu$.

\section{Simulation study}\label{Numerical}
This section summarizes the results of a simulation study. All results were obtained at a  significance level of $\alpha=0.05$, were approximated using $N=1000$ Monte Carlo replicates and, at each replicate, $B=750$ bootstrap cycles were used to calculate the bootstrap distribution. All computations were performed using the R language \cite{R}.
The code utilized for %the real data 
the calculations %applications 
is available from the authors upon request.

%with 1000 simulation runs conducted at a nominal {5\%} to assess the size and the power of the novel tests, when the parametric bootstrap approach, with  750 bootstrap samples, is implemented in order to decide for the rejection or not of the null hypothesis. 

%Regarding the choice of the weight function we have to note that in contrast with the test by \cite{Meintanis2008} the choice of the weight terms does not simplify further the resulting test. 
For the proposed test statistic, seven weighting schemes were considered: all weight terms are  equal to 1 (in fact this case applies no weight), and the weight terms correspond to the probability mass function (pmf) of the Negative Binomial distribution,  $NB(\nu,p)$, with parameters $(\nu,p)=$ (2,0.25), (2,0.5), (2,0.75), (4,0.25), (4,0.5), (4,0.75).  The statistics are denoted by $S_{n,i},\; i=1,2...,7$, respectively. Recall that $NB(\nu,p)$ is the discrete probability distribution that models the number of failures in a sequence of iid Bernoulli trials with success probability $p$ before $\nu$  successes occurs. 

 As competitors of the %The competitiveness of the 
 novel gof tests based on the statistics $S_{n,i},\; i=1,2...,7,$ we considered:
\begin{itemize}
    \item the test proposed in \cite{Meintanis2008}, based on the statistic $T_{n,\gamma}$, with $n$ the sample size and $\gamma>1$ a tuning parameter, that we took $\gamma=4$, following the recommendations in  \cite{Meintanis2008}; 
    \item the test based on the classical Anderson-Darling $AD_n$  and Cramér–von Mises $C_n$ test statistics  that were  studied in \cite{Henze1996}; 
    \item the test based on the statistic $R_n$ proposed in \cite{Rueda}; and
    \item the test based on the statistic $W_n$ proposed in \cite{Klar}.
\end{itemize}
{It should be emphasized that the test statistics $AD_n$, $C_n$ and $R_n$ are defined by means of an infinite sum and hence these sums have to be truncated to some finite value. To be more specific, for the $AD_n$ test statistic defined by
\begin{align*}
     AD_n=n\sum_{j=1}^{\infty}\frac{(F_n(j)-F(j;\hat{\lambda},\hat{\theta}))^2f(j;\hat{\lambda},{\theta})}{F(j;\hat{\lambda},\hat{\theta})(1-F(j;\hat{\lambda},\hat{\theta}))},
 \end{align*}
 when $M_1<30$ we replaced $\infty$ by $\min(30,k_0)$ where $k_0$ is the minimal number for which the denumerator in the expression above is equal or very close to zero, while if $M_1>30$ then we use $\min(30,k_0)$ summands. Finally, for the $C_n$ and $R_n$ test statistic we use $\max(30,M_1)$ summands.
}

%We select $T_{n,4}$ as representatives for the simulation study, following the guidelines provided by \cite{Meintanis2008}.
To  estimate the parameters, moment estimators were used. In all cases, the null distribution was approximated through a parametric bootstrap.

 Tables \ref{Table01} and \ref{Table02} display empirical size and power results  with percentages of rejection rounded to the nearest integer, corresponding to the Poisson-Poisson null hypothesis. Similarly,  Tables \ref{Table03} and \ref{Table04} show corresponding results for the Katz null hypothesis. 
In each row of the power results, to facilitate comparison of the tests, we have  highlighted in boldface the best  competitor test  and the best among the proposed tests. First, it is worth noting that, for $n=50$, the novel tests are more conservative than competitors, while for  $n=100$ all tests have sizes close to the nominal value.

%To be more specific, in order to evaluate the performance related to the type I error for the null hypothesis of Poisson-Poisson distribution  we have considered samples from Poisson-Poisson distribution, i.e. a compound distribution where $N$ follows a Poisson distribution with parameter $\lambda$, while the common PGF of $\{Y_j\}_{j=1}^{N}$, which gives rise to the observed value $X=\sum_{j=1}^{N}Y_j$ is Poisson with parameter $\theta$. This distribution is labeled as PP($\lambda$,$\theta$). Within this framework, we generated samples of size $n$ from PP(1,1.0), PP(1,2), PP(1,1.5), and PP(1,1). On the other hand, the performance of the tests related to the type I error rate when testing the null hypothesis of the Katz distribution is evaluated by generating samples from various distributions. These distributions include Katz (2,0.5), Poisson with parameter 1, Binomial, and Negative Binomial with parameters 4 and 0.25, denoted as $B(4,0.25)$ and $NB(4,0.25)$, respectively.

To evaluate the power of the tests, we generated samples from various alternative distributions: %. These include 
the beta mixture of binomial distributions $BB(v,a)$, where $X|p\sim B(v,p)$ and $p$ follows a beta distribution with both parameters equal to $a$;  the discrete uniform distribution $DU(\nu)$, defined in the range ${0, 1, \ldots, \nu}$; the mixture of a Poisson and a discrete uniform distribution $MPDU(\nu,\epsilon)$ (an observation is generated from a Poisson distribution with parameter equal to one with probability $\epsilon$, or from a discrete uniform distribution $DU(\nu)$ in the range ${0, 1, \ldots, \nu}$ with probability $1-\epsilon$); the mixture of a Poisson-Binomial and a discrete uniform distribution, denoted as $MPBDU(\lambda,v,p,\nu,\epsilon)$; %An observation is generated from a Poisson-Binomial distribution $PB(\lambda,v,p)$ with probability $\epsilon$, or from a discrete uniform distribution $DU(\nu)$ in the range ${0.1, ..., \nu}$ with probability $(1-\epsilon)$.
the Poisson-Binomial distribution $PB(\lambda,v,p)$; the Negative Binomial distribution $NB(v,p)$; the mixture of a Katz and a discrete uniform distribution, denoted as $MKDU(\lambda,\theta,\nu,\epsilon)$;   % In the latter case, an observation is generated from a Katz distribution $\text{Katz}(\lambda,\theta)$ with probability $\epsilon$, or from a discrete uniform distribution $DU(\nu)$ in the range ${0.1, ..., \nu}$ with probability $(1-\epsilon)$.
 the mixture of a Katz and a Poisson distribution, denoted as $MKP(\lambda,\theta,\nu,\epsilon)$;  % An observation is generated from a Katz distribution $\text{Katz}(\lambda,\theta)$ with probability $\epsilon$, or from a Poisson distribution $P(\nu)$ with probability $(1-\epsilon)$.
%Additionally, we considered 
the mixture of a Negative Binomial and a Poisson distribution, denoted as $MNBP(\lambda,p,\nu,\epsilon)$; %An observation is generated from a Negative Binomial distribution $NB(\lambda,p)$ with probability $\epsilon$, or from a Poisson distribution $P(\nu)$ with probability $(1-\epsilon)$.
%Finally, we explored the case of sampling from the random variable
and $\max\{X_1,X_2\}$, where $X_1$ follows a Katz distribution $\text{Katz}(\lambda,\theta)$, and $X_2$ follows a $DU(\nu)$ distribution, denoted as $MaxKDU(\lambda,\theta,\nu)$.

\begin{table}
	\centering
	\footnotesize
	\caption{{Testing Poisson-Poisson distribution: empirical powers (percentage of rejection rounded to the nearest integer, $n=50$, $\alpha=0.05$).}}
	\label{Table01}
	\begin{tabular}{|c|ccccc|ccccccc|}
		\hline
		Alternative &$C_n$&$AD_n$ & $T_{n,4}$ & $R_n$ & $W_n$ & $S_{n,1}$  &$S_{n,2}$& $S_{n,3}$& $S_{n,4}$ &$S_{n,5}$& $S_{n,6}$ & $S_{n,7}$ \\
		\hline
  $PP(1,2)$&6&6 &5 &5 &7 & 3  & 4 &4 &5 & 3 &4 &5
  \\ 
  $PP(1,1.5)$&6&5 &4 &6 &4 &3  &3 &4 &5 &1 &3 &3   \\ 
  $PP(1,1)$&5&5 &4 &5 &5 &3 &3 &4 &6 &1 &3 &4  \\ \hline
	$BB(7,1)$&59 &\textbf{62} &63 &37 &{48} &46  &38 &14 &14 &\textbf{47} &32 &7   \\ 
		$DU(15)$&59 &\textbf{70} &47 &26  &\textbf{70} &30 &16 &16 &10&\textbf{32} &5 &17 \\ 
		$MPDU(10,0.25)$&31 &36 &13 &6 & \textbf{41} &43  &39 &17 &13 &\textbf{48} &25 &13  \\ 
         $MPDU(10,0.5)$&27 &\textbf{30} &2 &12 &13 &\textbf{40} &\textbf{40} &29 &{37} &{39} &30 &26\\
$MPBDU(1,3,0.75,3,0.25)$&\textbf{33}&30 &19 &13 &25 &59  &61 &57 &38 &46 &\textbf{62} &54  \\ 
$MPBDU(2,3,0.75,3,0.25)$&48& 48 &41 &37 &\textbf{50} & 54 &61 &\textbf{65} &48&33 &64 &62  \\ 
  $PB(1,3,0.75)$&\textbf{23}&\textbf{23} &12 &20 &15 & 14 &16 &27 &\textbf{31} &6 &17 &28  \\ 
$PB(2,3,0.75)$&5&8 &9 &\textbf{13} &4 & 4  &6 &10 &\textbf{15} &4 &7 &10  \\ 
$NB(2,0.5)$&\textbf{11}&10 &10&\textbf{11} &9 &4 &4&6&\textbf{9}&3&4&6\\ 
$NB(3,0.25)$&15& 19 &20 &16 &\textbf{22} &4 &8 &\textbf{15} &14 &4 &11 &\textbf{15}  \\ 
		\hline
	\end{tabular}
\end{table}

\begin{table}
	\centering
	\footnotesize
	\caption{{Testing Poisson-Poisson distribution: empirical powers (percentage of rejection rounded to the nearest integer $n=100,\alpha=0.05$).}}
	\label{Table02}
	\begin{tabular}{|c|ccccc|ccccccc|}
		\hline
		Alternative &$C_n$&$AD_n$ & $T_{n,4}$ & $R_n$ & $W_n$ & $S_{n,1}$ &  $S_{n,2}$& $S_{n,3}$& $S_{n,4}$ &$S_{n,5}$& $S_{n,6}$ & $S_{n,7}$ \\
		\hline
  $PP(1,2)$&7 &6 &6 &6 &4 &3 &5 &6 &6 &3 &4 &5
  \\ 
  $PP(1,1.5)$&5& 4 &3 &4 &4 &4  &4 &3 &4 &4 &4 &4  \\ 
  $PP(1,1)$&6&5&4&5&5&4&4&5&5&3&5&5 \\ \hline
	$BB(7,1)$ &95& 94 &97&84& \textbf{100} &\textbf{86}  &83 &72 &73 &{80} &73 &67  \\ 	
	$DU(15)$&94 &\textbf{98} &87 &60 &\textbf{98} &\textbf{70} &43 &29 &42 &{68} &17 &41\\ 
	$MPDU(10,0.25)$ &75&83 &31 &9 &\textbf{88} &87  &81 &48 &31 &\textbf{88} &64 &41 \\ 
  $MPDU(10,0.5)$&63 &\textbf{71} &2 & 14 &45 &\textbf{83}  &81 &72 &73 &80 &73 &67
\\ 
$MPBDU(1,3,0.75,3,0.25)$&66&\textbf{67} &21 &15 &54 &\textbf{94} &\textbf{94} &90 &69 &92 &\textbf{94} &87  \\ 
$MPBDU(2,3,0.75,3,0.25)$&82& \textbf{84} &69 &61 &82 &95  &96 &96 &81 &84 &\textbf{96} &94 \\ 
 $PB(1,3,0.75)$&39 &\textbf{39}&19 &29 &27 &33 &36 &51 &\textbf{58} &14 &38 &53 \\ 
  $PB(2,3,0.75)$&9&{16} &13 &\textbf{21} &7 &8 &10 &18 &\textbf{32} &6 &11 &19 \\ 
$NB(2,0.5)$&\textbf{16}& \textbf{16} &15 &\textbf{16} &15 &6 &6 &9 &\textbf{15} &4 &5 &10 \\ 
$NB(3,0.25)$&24& 31 &\textbf{34} &32 &33 &6 &10 &22 &\textbf{25} &6 &16 &24  \\ 	\hline
\end{tabular}
\end{table}

\begin{table}
	\centering
	\footnotesize
	\caption{{Testing Katz distribution: empirical powers (percentage or rejection rounded to the nearest integer $n=50,\alpha=0.05$).}}
	\label{Table03}
	\begin{tabular}{|c|ccccc|ccccccc|}
		\hline
		Alternative &$C_n$&$AD_n$ & $T_{n,4}$ & $R_n$ & $W_n$ & $S_{n,1}$  &$S_{n,2}$& $S_{n,3}$& $S_{n,4}$ &$S_{n,5}$& $S_{n,6}$ & $S_{n,7}$ \\
		\hline
   $Katz(2,0.5)$&5&4 & 5 & 4 & 4 & 3  &4 & 3 & 4 & 3 & 4 &
3 \\ 
   $P(1)$&6& 6 &4 &3 &4 &3 & 3 & 4 &5 & 2 & 3 & 5\\
$NB(4,0.25)$&4&4 &5 &4 &4 & 2 &4 &4 & 5 &4 &5 & 5\\\hline
  $PP(1,2)$&46&44 &42 &\textbf{48} &33 &6 &10 &16 &\textbf{34} &3 &10 &17 \\ 

 $PP(1,1)$&\textbf{14}&13 &11 &13 & 12 & 8 &8 &10 &\textbf{12} &2 & 8&10\\ 
$BB(2,2)$&23 &\textbf{24} &4 &9&22 & 5 &5 & 9 &\textbf{14} &3 &6 & 11 \\ 
		 $BB(6,2)$&27 &\textbf{36} &35 &32 &33 &5 &5 &4 &6&\textbf{7}  &5 &4 \\    
  $DU(2)$ & 60&\textbf{ 62} & 18 &35 &60 &22 &22 &35 &\textbf{46} &14 &24 &41  \\ 
$MKDU(8,0.5,2,0.5)$ &\textbf{100}&\textbf{100} &94 &80 &\textbf{100} &20  &88 &\textbf{100} &\textbf{100} &56 &98 & \textbf{100}\\ 
$MKDU(4,0.5,1,0.25)$&\textbf{63} &61 &5 &8 &42 &11  &50 &94 &\textbf{97} &17 &46 &96\\ 
$MKDU(4,0.5,2,0.25)$&\textbf{52} & 50 &12 &20 &23 &20  &55 &84 &80 &16 &60 &\textbf{86}\\ 
$MKP(4,0.5,1,0.25)$&\textbf{23} & 22 &5 &9 &10 &4 &9 &17 &\textbf{30} &9 &9 &19\\ 
$MNBP(4,0.25,1,0.25)$&\textbf{36}& 33 &7 &14 &20 &2 &8& 35 &\textbf{56} &9 &7 &41\\
  $MaxKDU(2,0.5,2)$&15&15 &\textbf{16} &11 &13 &6 &6 &13 &\textbf{17} &3 &6 &15\\ 
  $MaxKDU(2,0.5,8)$&21&\textbf{23} &10 &8 &20 &\textbf{10} &7 &3 &5 &\textbf{10} & 5 & 4 \\ 
 $PB(1,3,0.75)$&\textbf{87}& 86 &66 &80 &61 &33 &38 &65 &\textbf{91} &10 &36 &69 \\ 
$PB(2,3,0.75)$&17&30 &39 &\textbf{53} &11 &7 &10 &14 &\textbf{51} &6 &10 &16   \\ 
\hline
\end{tabular}
\end{table}

\begin{table}
	\centering
	\footnotesize
	\caption{{Testing Katz distribution: empirical powers (percentage or rejection rounded to the nearest integer $n=100,\alpha=0.05$).}}
	\label{Table04}
	\begin{tabular}{|c|ccccc|ccccccc|}
		\hline
		Alternative &$C_n$&$AD_n$ & $T_{n,4}$ & $R_n$ & $W_n$ & $S_{n,1}$  &$S_{n,2}$& $S_{n,3}$& $S_{n,4}$ &$S_{n,5}$& $S_{n,6}$ & $S_{n,7}$ \\
		\hline
   $Katz(2,0.5)$&5& 4 &4 &3 &4 &4  &4 &4 &4 &5 &4 &4
 \\ 
   $P(1)$&6&5 &5 &5 &5 &5 &5 &5 &6 &4 &5 &5\\
$NB(4,0.25)$&5&5 &6 &4 &5 &4  &4 &5 &5 &3 &4 &5 \\\hline
  $PP(1,2)$&\textbf{74}&73 &63 &71 &53 &15  &18 &39 &\textbf{70} &6 &16 &43 \\ 
  $PP(1,1)$&\textbf{22} &\textbf{22} &19 &21 &21 &10 &10 &15 &\textbf{19} &3 &10 &15\\ 
$BB(2,2)$ &43& \textbf{46} &19 &27 &44 &16 &18 &28 &\textbf{37} &12 &21 &34 \\ 
  $BB(6,2)$&56 & \textbf{72}&70&59 & 70&13 & 12 &7 &9 & \textbf{19} &11 & 6 \\ 
$DU(2)$&90 &\textbf{ 92} &71 &81 &90 &60  &69 &81 &\textbf{87} &54 &73 &85  \\ 
 $MKDU(8,0.5,2,0.5)$&\textbf{100} & \textbf{100} &\textbf{100} &\textbf{100} &97 &\textbf{100} &\textbf{100} &\textbf{100} &\textbf{100} &96 &\textbf{100} &\textbf{100}\\ 
$MKDU(4,0.5,1,0.25)$ &93& \textbf{94} &6 &8 &74 &55&97 &\textbf{100} &\textbf{100} &37 &98 &\textbf{100}\\ 
  $MKDU(4,0.5,2,0.25)$&\textbf{87} & 86 &18 &32 &39 &66 &97 &\textbf{100} &\textbf{100} &41 &99 &\textbf{100} \\ 
  $MKP(4,0.5,1,0.25)$&\textbf{39} &  \textbf{39} &4 &10 &21 &13 &31 &45 &\textbf{54} &19 &27 &46\\ 
 $MKP(4,0.5,2,0.25)$&\textbf{64} & \textbf{64} &45 &55 &35 &16 &21 &39 &\textbf{53} &11 &23 &43 \\ 
$MNBP(4,0.25,1,0.25)$&\textbf{61}&60 &9 &19 &43 &7 &39 &79 &\textbf{84} &18 &47 &80
\\
  $MaxKDU(2,0.5,2)$&27 &31 &28 &\textbf{32} &21 &12 &11 &25 &\textbf{34} &6 &11 &30 \\ 
  $MaxKDU(2,0.5,8)$&37&\textbf{46} &15 &11 &39 &21 &17 &5 &6 &\textbf{25} &10 &3 \\ 
 $PB(1,3,0.75)$&\textbf{100}& 99 &89 &97 &93 &77 &76 &99 &\textbf{100} &26 &72 &99 \\ 
$PB(2,3,0.75)$&33& 63 &66 &\textbf{82} &22 &13 &15 &35 &\textbf{86} &8 &16 &46  \\ \hline
\end{tabular}
\end{table}

Based on the results given in Tables \ref{Table01}-\ref{Table04} we might conclude that the impact of the weight function is significant. {This means that  proper selection of the weight function} should be made.  We highlight that 
 in several cases  the enhancement in the performance of the proposed novel tests, when weight terms $w_k$ are determined by the pmf of a Negative Binomial distribution, surpasses the scenario where no weight terms are used (e.g., the $S_{n,1}$ test with $w_k=1$). This improvement is particularly notable in cases such as $PB(1,3,0.75)$ and {$MKDU(4,0.5,1,0.25)$,} when dealing with the goodness-of-fit problem for Poisson-Poisson and Katz distributions, respectively.

In addition, Tables \ref{Table01}-\ref{Table04} reveal that in many cases, some of the novel tests either outperform or match the performance of existing tests. Specifically, for the Katz distribution, $S_{n,4}$ excels or matches existing tests when sampling from most tried alternative distributions; in other cases $S_{n,5}$ has better performance {in comparison with the others novel tests}. Similarly, for the Poisson-Poisson distribution, $S_{n,4}$ and $S_{n,5}$ perform well when certain distributions are sampled. These findings suggest that these novel tests could be valuable additions to the existing goodness-of-fit tests for Katz and Poisson-Poisson distributions. 
%{For instance, the test statistic $S_{n,5}$ for the gof test of Poisson-Poisson, with the exception of the PB and NB alternatives where $S_{n,4}$ is recommended. On the other hand, $S_{n,4}$ is suggested to be used for the gof of the Katz distribution with the exception of some alternatives where $S_{n,5}$ has better performance}.  
However, % is important to note that
for the BB and DU alternatives, existing tests are more {powerful than the new ones.}

\begin{comment}
    A closer examination of the table results allows us to make weight function recommendations. This can be accomplished by observing the behavior of $\hat{d}\big(k;\hat{\lambda},\hat{\theta}\big)$ and assigning greater weight to larger terms. 
\end{comment}

 In practice, the alternative distribution is unknown, and hence the user {does not know} which of  $S_{n,4}$, $S_{n,5}$
%In the sequel, we examine the behavior of $S_{n,4}$, $S_{n,5}$ 
should be applied. 
With the aim  of trying to provide a rule for choosing between them, we did the following experiment:
%$\hat{d}\big(k;\hat{\lambda},\hat{\theta}\big)$ when testing the Katz and %the Poisson-Poisson goodness of fit problem. To do this, 
we generated 10,000 samples,  with  $n=1000$, encompassing the set of alternatives employed in our simulation study in each gof problem. The  average values of $\big|\hat{d}\big(k,\hat{\lambda}, \hat{\theta}\big)\big|$ are reported in Tables \ref{Table05} and \ref{Table06}, which also display information about the maximum  average value {of $\big|\hat{d}\big(k,\hat{\lambda}, \hat{\theta}\big)\big|$} and the corresponding point where such maximum is attached. For clarity, we have divided both tables into two sections based on whether $S_{n,4}$ or $S_{n,5}$ exhibited higher power in the previous simulation experiment (Tables \ref{Table01}-\ref{Table04}). %{Comment: this is the reason that I add the previous red %sentence.}
Recall also that when
 using as weights  the pmf of a  $NB(2,0.75)$ law, more importance is given to the terms $k=0\text{ }(56,25\%)$,   $k=1 \text{ }(28,125\%)$ and $k=2\text{ }(10,547\%)$; while using as weight term the pmf of the law  $NB(4,0.75)$, terms with greater $k$ are also significant, i.e.  $k=0 \text{ }(31,641\%)$,   $k=1\text{ }(31,641\%)$, $k=2\text{ }(19,775\%)$ and $k=3 (9,888\%)$.

%It is evident that 
Two common features among the alternatives where the gof test based on $S_{n,4}$ does not perform well  for both the Poisson-Poisson and Katz distributions (as seen in the second part of Tables \ref{Table05} and \ref{Table06}) is that the average absolute value of $\hat{d}(0,\hat{\lambda},\hat{\theta})$ is significantly smaller when compared to the maximum value and the other values, and the maximum value is reached for $k> 2$. Among the weight functions used in this simulation study, the pmf of the law $NB(4,0.25)$
 assigns greater weight to larger terms, making $S_{n,5}$ a %viable alternative 
 powerful test statistic. The opposite is observed in the cases where the gof test based on $S_{n,4}$ shows larger power. 
%This outcome aligns with our initial expectations, as the weight terms are structured to make a substantial contribution at $k=0$ (56.25%), $k=1$ (28.125%), and $k=2$ (10.547%). Given that the test statistic is computed as a sum of these terms multiplied by the square of $\hat{d}(k,\hat{\lambda},\hat{\theta})$, it becomes clear that weight terms should provide a greater contribution for $k>2$ in such cases.

\begin{table}
	\centering
	\footnotesize
	\caption{{GOF for {Poisson-Poisson} distribution: average values of $|d_k|$ for alternatives, $n=1000$.}}
	\label{Table05}
	\begin{tabular}{|c|cccccccccc|}
		\hline
        &\multicolumn{9}{c}{$k$}&\\
		Alternative & 0&1 & 2 & 3 & 4 & 5 & 6& 7 &8 & $\max|d_k|, k$ \\
		\hline
 $PB(1,3,0.75)$&0.136& 0.016& 0.209& 0.120& 0.054& 0.042& 0.043& 0.007& 0.005&0.209,2
\\ 
  $PB(2,3,0.75)$&0.101& 0.002& 0.114& 0.112& 0.085& 0.056& 0.039& 0.046& 0.002&0.114,2\\ 
$NB(2,0.5)$& 0.065& 0.007& 0.038& 0.039& 0.023& 0.006& 0.004& 0.007& 0.007&0.065,0 \\ 
$NB(3,0.25)$&0.028& 0.069& 0.095& 0.093& 0.066& 0.026& 0.014& 0.047& 0.068&0.095,2\\   
\hline
$BB(7,1)$ & 0.140& 0.148& 0.056& 0.063& 0.187& 0.313& 0.437& 0.438& 0.040&0.438,7\\
  $DU(15)$&  0.012& 0.087& 0.149& 0.164& 0.139& 0.090& 0.030& 0.031& 0.095&0.469,16\\
  $MPDU(10,0.25)$ & 0.046& 0.081& 0.191& 0.175& 0.071& 0.056&0.164 & 0.253& 0.332&0.410,9 \\ 
$MPDU(10,0.5)$ &0.150&0.026& 0.174& 0.232& 0.156& 0.026& 0.093& 0.187& 0.257& 0.311,9 \\ 
$MPBDU(1,3,0.75,3,0.25)$ & 0.075& 0.157& 0.283 &0.369& 0.227& 0.095& 0.021& 0.042& 0.040 &0.369, 3\\
$MPBDU(2,3,0.75,3,0.25)$ & 0.075& 0.157& 0.283 &0.369& 0.227& 0.095& 0.021& 0.042& 0.040 &0.369, 3\\\hline
\end{tabular}
\end{table}

\begin{table}
	\centering
	\footnotesize
	\caption{{GOF for Katz distribution: average values of $|d_k|$ for alternatives, $n=1000$.}}
	\label{Table06}
	\begin{tabular}{|c|cccccccccc|}
		\hline
        &\multicolumn{9}{c}{$k$}&\\
		Alternative & 0&1 & 2 & 3 & 4 & 5 & 6& 7 &8 & $\max|d_k|, k$ \\
		\hline
  $PP(1,2)$&0.167 &0.004& 0.063& 0.067& 0.050& 0.032& 0.019& 0.006& 0.002&0.167,0 \\  
  $PP(1,1)$&0.071& 0.037& 0.037& 0.017& 0.003& 0.004& 0.005& 0.005& 0.004&0.071,0\\ 
		$BB(2,2)$ &0.099& 0.265& 0.276& 0.184& 0.124& 0.084& 0.057& 0.039& 0.027&0.276,2  \\ 		
  $DU(2)$ & 0.166& 0.415& 0.375& 0.189& 0.096& 0.049& 0.026& 0.013& 0.007&0.415,1\\ 
$MKDU(8,0.5,2,0.5)$ &0.001& 0.021& 0.439& 0.383& 0.328& 0.271& 0.208& 0.140& 0.067&0.439,2\\ 
  $MKDU(4,0.5,1,0.25)$ & 0.235& 0.249& 0.183& 0.115& 0.050& 0.004& 0.045& 0.070& 0.081& 0.249,1 \\ 
  $MKDU(4,0.5,2,0.25)$ &  0.115& 0.266& 0.300& 0.202& 0.116& 0.045& 0.007& 0.041& 0.059&0.300,2\\ 
  $MKP(4,0.5,1,0.25)$ &0.129& 0.021& 0.116& 0.146& 0.104& 0.044& 0.006& 0.040& 0.057& 0.146,3\\ 
  $MKP(4,0.5,2,0.25)$ & 0.104& 0.143 &0.037& 0.093& 0.148& 0.129& 0.076& 0.026& 0.011& 0.148,4\\ 
$MNBP(4,0.25,1,0.25)$& 0.169& 0.077& 0.082& 0.151& 0.147& 0.115& 0.080& 0.046& 0.018& 0.169,0\\
  $MaxKDU(2,0.5,2)$& 0.094& 0.266& 0.255& 0.115& 0.053& 0.020& 0.005& 0.013& 0.015& 0.266,1\\ 
 $PB(1,3,0.75)$&0.284 &0.074& 0.245& 0.111& 0.109& 0.067& 0.033& 0.013& 0.009& 0.284,0 \\ 
$PB(2,3,0.75)$& 0.212& 0.026& 0.099& 0.137& 0.143& 0.083& 0.011& 0.086& 0.021&0.212,0\\ 
		\hline
$BB(6,2)$ & 0.107 & 0.072 & 0.036& 0.141& 0.180& 0.071& 0.250& 0.000& 0.000&0.250,6 \\ 
$MaxKDU(2,0.5,8)$& 0.001& 0.027& 0.054& 0.044& 0.011& 0.097& 0.201& 0.312& 0.528& 0.528,9 \\ \hline
\end{tabular}
\end{table}

%Based on 10000 samples of size 1000 from the set of alternatives considered the values of $\hat{d}(k,\hat{\lambda},\hat{\theta})$ were obtained and the average absolute values of them is given in Table 5. 

\section{Real data illustrations}\label{application}
This section presents the application
 of the tests used in the simulation study of the previous section to six real data sets for testing gof to the Katz distribution. %Specifically, we apply the test to two  real data sets for the gof  tests of the {Katz} distribution. Some other examples are given in the Supplementary material. 
%All computations were performed using the R language. The code utilized for the real data applications is available from the authors upon request.

The first data set represents the number of claims of automobile liability policies (\cite{Klugman}, p. 244); the second data set corresponds to the number of chromatid aberrations in 24 hours (\cite{Catcheside1, Catcheside2}); the third data set  concerns  the number of lost articles found in the Telephone
and Telegraph Bldg., New York City (see \cite{Consul} and references therein); the fourth represents the number of hemocytometer yeast cell on European red mites on apple leaves (\cite{Rashid});  the fifth 
%while the second 
data set represents  the number of absences of workers in a particular division of a large steel corporation in an observational period of six months (\cite{Sichel});
 while the last one 
%while the second one
represents the number of borers per corn plant dissected in  corn plants growing in an area located in Northwest Iowa {(\cite{McGuire})}. 
Table \ref{pvalues} \textcolor{black}{lists  the maximum likelihood (ML) estimates of the parameters of the Katz distribution and the obtained bootstrap $p$-values (with $B=5000$) of the gof tests}.
%of the gof tests on the basis of the test statistics used in the simulation study of the previous section for testing gof to the Katz distribution. 

All tests concur that  %Katz distribution is suitable 
the null hypothesis cannot be rejected, at the 5\% significance level, for the first four data sets.
%for modeling the first  data set. This consensus is based on bootstrap p-values exceeding standard significance levels. 
%In addition, for the  number of borers per corn plant dissected in  corn plants growing in an area located in Northwest Iowa all the special cases of the novel test as well as the $AD$ and $R$ test agree that at significance level $5\%$ the null hypothesis cannot be rejected. This means that, with the exception of the tests $T_{n,4}$ and $W$, all other tests agree that at significance level $5\%$ the Katz distribution is not adequate for fitting the borers data.
%Note that in the previously mentioned data sets the conclusions reached from all the special cases of the novel test coincide.
However, a different scenario unfolds when assessing the gof of the Katz distribution for the fifth data set. Here, the unanimous agreement among the {five} existing tests ($CV_n$, $AD_n$, $T_{n,4}$, $R_n$, and $W_n$) that the Katz distribution is unsuitable at the {5\% significance level,  is solely supported  by $S_{n,4}$.}
As discussed in the previous section, the weight terms play a pivotal role in the performance of the novel tests introduced here. Recommendations for selecting these weight terms were based on the absolute values of $\hat{d}(k,\hat{\lambda},\hat{\theta})$, that for this data set are:  % To illustrate, the absolute values of $\hat{d}(k,\hat{\lambda},\hat{\theta})$ were computed for $k=0,1,...,59$. For instance, the absolute values of $\hat{d}(k,\hat{\lambda},\hat{\theta})$ 
0.0612, 0.0295, 0.0090, 0.0647, 0.0202, 0.0445, 0.0211, 0.0119, 0.0067, corresponding to $k=0,\ldots, 8$, respectively, which leads us to choose the test based on  $S_{n,4}$.
Recall that
%based on the simulation results from the previous section, for gof testing of the Katz distribution, 
$S_{n,4}$ was recommended for all alternative cases, expect % considered. The exception arises only 
when the absolute value of $\hat{d}(0,\hat{\lambda},\hat{\theta})$ is notably smaller in comparison to the maximum value, with the maximum occurring for $k> 2$. 
%This numerical example illustrates the significance of carefully selecting weight terms to make accurate decisions regarding the gof testing of the Katz distribution.

As for the sixth data set, with the exception of the {test based on $T_{n,4}$ and $W_n$}, all other tests agree that at significance level $5\%$ the Katz distribution is not adequate for fitting the borers data.

\begin{table}[!htp]
\small
\begin{center}
\caption{%Bootstrap 
\textcolor{black}{ML estimates for the parameters of the Katz distribution and bootstrap $p$-values of the gof tests.} %$B=5000$.
}\label{pvalues}
\resizebox{\textwidth}{!}{
\begin{tabular}{|c|c|ccccc|ccccccc|}\hline
Data set &$(\hat{\lambda},\hat{\theta})$&$C_n$ &$AD_n$ & $T_{n,4}$ & $R_n$ & $W_n$ & $S_{n,1}$&$S_{n,2}$& $S_{n,3}$& $S_{n,4}$ &$S_{n,5}$& $S_{n,6}$ & $S_{n,7}$ \\\hline
I: Claims & (0.80,0.53)&0.32  & 0.33& 0.84 &0.67& 0.62& 0.26&0.36 &0.44 &0.27& 0.15 &0.49& 0.40\\\hline
II: Chromatid& (0.27,0.51) & 0.08& 0.08 &0.14& 0.10& 0.09& 0.18 &0.26 &0.23& 0.14& 0.16& 0.32& 0.22\\\hline
III: Articles&(0.85,0.18) &0.60& 0.69 &0.54 &0.49& 0.63 &0.86&  0.87& 0.84 &0.71 &0.84 &0.88 &0.81\\\hline
IV: Cell&(0.63,0.45)&0.17& 0.94 &0.12 &0.11 &0.16& 0.62  &0.61& 0.43& 0.32& 0.79& 0.58 &0.39\\\hline
V: Absence& (0.29,0.56)&0.03 & 0.03 &0.03 &0.03& 0.03& 0.07&  0.07 &0.07 &0.04&  0.13& 0.08& 0.08\\\hline
VI: Borers& (2.05,0.21) &0.02& 0.009& 0.05 &0.02 &0.07& 0.004&  0.007&0.003 &0.002 & 0.01& 0.008 &0.002\\\hline
\end{tabular}
}
\end{center}
\end{table}

\section{Conclusions}\label{conclusions}
This paper proposes a unified approach for testing gof to any distribution which can be viewed as a generalization of the Poisson distribution, in the sense that its pgf is the only one that satisfies a certain differential equation. The new gof test statistics are a function of the coefficients of the polynomial
of the resulting equation when  the pgf is replaced with the epgf in the aforementioned differential equation, multiplied by positive constants which serve as weight terms. The null distribution can be consistently approximated by a parametric bootstrap. The finite sample performance of the new tests were numerically assessed through an extensive simulation experiment, which includes some existing tests with the aim of comparing their powers.
The numerical results reveal that {no test
provides the highest power against all alternatives considered: for some alternatives,
the new test exhibits the highest power, but for others the competing tests yield greater power. %In short, there is no uniform superiority of one test with respect to the
%other.
}  It is worth emphasizing
that the performance of the new test is improved with the introduction of the weight terms. Choosing them carefully results in  better outputs in terms of power.

%Based on simulations results we recommend the use of the pmf of Negative Binomial with parameters 2 and 0.75 except when the absolute values of $\hat{d}(0;\hat{\lambda},\hat{\theta})$ is very small in comparison with the rest of the values and the maximum is attained for $k\geq 2$.

%he tests in Meintanis \cite{Meintanis2008} can be also used for the same purpose. Both test procedures are consistent against fixed alternatives and the practical calculation of the $p$-values requires in all cases a parametric bootstrap approximation to the null distribution of the corresponding test statistics. Previously proposed gof test for the specific cases of Poisson, Bell distribution and Neyman type A distribution are obtained as special cases.
{
\section*{Acknowledgment}
We express our gratitude to the anonymous referees for their valuable comments and remarks that improved the paper.
}
\section*{Disclosure statement}
The authors report there are no competing interests to declare.

\section*{Funding information}
%The authors thank two anonymous referees for their constructive comments
%and suggestions which helped to improve the presentation.
A. Batsidis acknowledges support of this work by the project Establishment of capacity building infrastructures in Biomedical Research (BIOMED-20) (MIS 5047236) which is implemented under the Action  Reinforcement of the Research and Innovation Infrastructure, funded by the Operational Programme Competitiveness, Entrepreneurship and Innovation (NSRF 2014-2020) and co-financed by Greece and the European Union (European Regional Development Fund).
The work of B. Milo\v sevi\'c is  supported by the Ministry of Science, Technological Development and Innovations of the Republic of Serbia (the contract 451-03-66/2024-03/200104),  and  by the COST action
CA21163 - Text, functional and other high-dimensional data in econometrics: New models, methods, applications (HiTEc).
M.D. Jim\'enez-Gamero  is supported by grants PID2020-118101GB-I00 and  PID2023-148811NB-I00, funded by MICIU/AEI/10.13039/501100011033 and ERDF/EU.

\appendix
\section{APPENDIX: Proofs}

Recall the definition of $\phi(x;k,\lambda,\theta)$ in \eqref{phik} and let
  $\phi(x;\lambda,\theta)=(\phi(x; 0,\lambda,\theta),\phi(x; 1,\lambda,\theta), \ldots)$.

\begin{lemma}\label{lemma1}
Let  $X_1,\ldots, X_n$ be iid from a  random variable $X \in \mathbb{N}_{0}$, with $E(X^2)<\infty$.

Assume that $w$ satisfies \eqref{w}, and that Assumption \ref{Q} (i) holds. Then
$
E(\| \phi(X;\lambda,\theta)\|_w^2)  <\infty.
$
\end{lemma}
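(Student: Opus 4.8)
The plan is to move the expectation inside the infinite sum and then bound the resulting series term by term. Since $\phi(X;k,\lambda,\theta)^2 w_k \ge 0$ for every $k$, Tonelli's theorem justifies the interchange, so that
\[
E\bigl(\|\phi(X;\lambda,\theta)\|_w^2\bigr)=\sum_{k\geq 0} w_k\, E\bigl\{\phi(X;k,\lambda,\theta)^2\bigr\},
\]
and it suffices to show that the right-hand side is finite.

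The key simplification is that the two pieces making up $\phi(X;k,\lambda,\theta)$ in \eqref{phik} are supported on disjoint events. Indeed, $(k+1)I(X=k+1)$ is nonzero only on $\{X=k+1\}$, whereas $\lambda\sum_{u=0}^k q_{k-u}(\theta)I(X=u)$ lives on $\{X\le k\}$; moreover the indicators $I(X=u)$, $0\le u\le k$, are mutually exclusive. Consequently, squaring kills every cross term and collapses the inner square into a single sum, giving
\[
\phi(X;k,\lambda,\theta)^2=(k+1)^2 I(X=k+1)+\lambda^2\sum_{u=0}^{k} q_{k-u}(\theta)^2\, I(X=u).
\]
Taking expectations and writing $P(X=j)$ for the point masses of the (arbitrary) law of $X$, this yields
\[
E\bigl\{\phi(X;k,\lambda,\theta)^2\bigr\}=(k+1)^2 P(X=k+1)+\lambda^2\sum_{u=0}^{k} q_{k-u}(\theta)^2\, P(X=u).
\]

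It then remains to sum against the weights, using $0<w_k\le M$ throughout. For the first contribution,
\[
\sum_{k\geq 0} w_k (k+1)^2 P(X=k+1)\le M\sum_{j\geq 1} j^2 P(X=j)\le M\,E(X^2)<\infty,
\]
which is where the moment hypothesis $E(X^2)<\infty$ enters. For the second contribution I would bound $w_k\le M$, interchange the order of summation (again legitimate by nonnegativity), and set $j=k-u$ to factorize the double sum as
\[
\sum_{k\geq 0}\sum_{u=0}^{k} q_{k-u}(\theta)^2 P(X=u)=\Bigl(\sum_{u\geq 0}P(X=u)\Bigr)\Bigl(\sum_{j\geq 0} q_j(\theta)^2\Bigr)=\sum_{j\geq 0} q_j(\theta)^2.
\]
Finally, Assumption \ref{Q}(i) gives $\sum_{j\geq 0}|q_j(\theta)|<\infty$, and since $\ell^1\subseteq\ell^2$ one has $\sum_{j\geq 0} q_j(\theta)^2\le\bigl(\sum_{j\geq 0}|q_j(\theta)|\bigr)^2<\infty$. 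Combining the two bounds gives the claim.

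The calculation is essentially routine; the only point requiring care is the disjoint-support observation that eliminates the cross terms and linearizes the inner square, since without it one would be left with an awkward convolution-type double sum. The reindexing $j=k-u$ that decouples the remaining Cauchy-product structure into the product of $\sum_{j\geq 0} q_j(\theta)^2$ and the total mass $\sum_{u\geq 0} P(X=u)=1$ is the other step worth stating explicitly.
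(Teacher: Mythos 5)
Your proof is correct and follows essentially the same route as the paper's: the same disjoint-support expansion of $\phi(X;k,\lambda,\theta)^2$ eliminating cross terms, the bound $w_k\leq M$, the identity $\sum_{k\geq 0}(k+1)^2 P(X=k+1)=E(X^2)$, and the collapse of the convolution-type double sum to $\sum_{j\geq 0}q_j(\theta)^2$, which is finite by Assumption \ref{Q}(i) via $\ell^1\subseteq\ell^2$. Your write-up merely makes explicit two steps the paper leaves implicit (Tonelli for the interchange and the reindexing $j=k-u$ with $\sum_u P(X=u)=1$), which is fine.
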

\begin{proof}
By definition,
\begin{eqnarray*}
\| \phi(X;\lambda,\theta)\|_w^2   & =  & \sum_{k \geq 0}\phi(X; k,\lambda,\theta)^2 w_k \\
                          & =  & \sum_{k \geq 0}w_k(k+1)^2I(X=k+1)
                          + \lambda^2\sum_{k \geq 0}w_k
                                  \sum_{u= 0}^{k}q_{k-u}^2(\theta)I(X=u),
\end{eqnarray*}
and thus
\begin{eqnarray*}
E(\| \phi(X;\lambda,\theta)\|_w^2)  % & =  & \sum_{k \geq 0}w_k (k+1)^2p_{k+1}+ \lambda^2\sum_{k \geq 0}w_k
%\sum_{u= 0}^{k}q_{k-u}^2(\theta)p_{u}\\
 & \leq & M  \sum_{k \geq 0}(k+1)^2p_{k+1}+M\lambda^2 \sum_{k \geq 0} \sum_{u= 0}^{k}q_{k-u}^2(\theta)p_{u}.
\end{eqnarray*}
Notice that Assumption \ref{Q} (i)  implies that $\sum_{k \geq 0} q_{k}^2(\theta)<\infty$. Since $\sum_{k \geq 0}(k+1)^2p_{k+1} = E(X^2)<\infty,$ and $\sum_{k \geq 0} \sum_{u= 0}^{k}q_{k-u}^2(\theta)p_{u}=\sum_{k \geq 0} q_{k}^2(\theta)<\infty,$ the result follows.
\end{proof}

\begin{lemma}\label{lemma2}
Let  $X_1,\ldots, X_n$ be iid from a  random variable $X \in \mathbb{N}_{0}$.
Assume that $w$ satisfies \eqref{w}, and that Assumption \ref{Q} (i) and (ii) hold.
Then
\begin{itemize} \itemsep=0pt
\item[(i)] $\displaystyle \left \|\frac{\partial}{\partial \lambda}\widehat{d}(\cdot;\lambda,\theta)\right\|_w^2 <\infty$,
$\displaystyle \left \|\frac{\partial}{\partial \theta_j}\widehat{d}(\cdot;\lambda,\theta) \right\|_w^2 <\infty$, $1 \leq j\leq p$, with probability 1.

\item[(ii)]$\displaystyle \left \|E\left\{\frac{\partial}{\partial \lambda}\widehat{d}(\cdot; \lambda,\theta)\right\} \right \|_w^2<\infty$,
$\displaystyle \left \|E\left\{\frac{\partial}{\partial \theta_j}\widehat{d}(\cdot;\lambda,\theta)\right\}\right\|_w^2<\infty$, $1 \leq j \leq p$.
\item[(iii)] $\displaystyle \left \|\frac{\partial}{\partial \lambda}\widehat{d}(\cdot;\lambda,\theta)-E\left\{\frac{\partial}{\partial \lambda}\widehat{d}(\cdot; \lambda,\theta)\right\}\right\|_w \stackrel{a.s.}{\longrightarrow}0$,
    $\displaystyle \left \|\frac{\partial}{\partial \theta_j}\widehat{d}(\cdot;\lambda,\theta)-E\left\{\frac{\partial}{\partial \theta_j}\widehat{d}(\cdot; \lambda,\theta)\right\} \right \|_w \stackrel{a.s.}{\longrightarrow}0$,  $1 \leq j \leq p$.
\item[(iv)] If Assumption \ref{Q} (iii)  holds,  $|\lambda_k -\lambda| \leq M_2$, for some positive constant $M_2$,  and $\theta_k\in \mathcal{N}(\theta)$, $\forall k \geq 0$, then
 $\displaystyle  \|\hat{D}_0\|_w^2<\infty$,  $\displaystyle  \|\hat{D}_j\|_w^2<\infty$, $1 \leq j \leq p$, \red{with probability 1}, where
\begin{eqnarray*}
 \hat{D}_0 & = & \left(  \frac{\partial}{\partial \lambda}\hat{d}(0;\lambda_0,\theta_0) -
  \frac{\partial}{\partial \lambda}\hat{d}(0;\lambda,\theta),   \frac{\partial}{\partial \lambda}\hat{d}(1;\lambda_1,\theta_1) -
  \frac{\partial}{\partial \lambda}{\hat{d}(1;\lambda,\theta)} , \ldots \right ),\\
  \hat{D}_j & = & \left(  \frac{\partial}{\partial \theta_j}\hat{d}(0;\lambda_0,\theta_0) -
  \frac{\partial}{\partial \theta_j}\hat{d}(0;\lambda,\theta),   \frac{\partial}{\partial \theta_j}\hat{d}(1;\lambda_1,\theta_1) - \frac{\partial}{\partial \theta_j}{\hat{d}(1;\lambda,\theta)}, \ldots \right ), \quad 1 \leq j \leq p.
\end{eqnarray*}
\item[(v)] {If Assumption \ref{Q} (iv) holds, for each $k \geq 0$,
  $\left\{\lambda_{kn} \right\}_{n\geq 1}$ is a stochastic sequence such that $\displaystyle \sup_{k \geq 0}|\lambda_{kn}-\lambda| \stackrel{P}{\longrightarrow} 0$,  for each $k \geq 0$,
 $\left\{\theta_{kn} \right\}_{n\geq 1}$ is a stochastic sequence  such that $ \displaystyle \sup_{k \geq 0}\|\theta_{kn} -\theta\|\stackrel{P}{\longrightarrow}  0$,  and $\lambda_k$, $\theta_k$ are  replaced with $\lambda_{kn}$, $\theta_{kn}$, respectively, in the expression of  $ \hat{D}_j$, $0 \leq j \leq 0$, then 
 $\displaystyle  \|\hat{D}_j\|_w^2 
\stackrel{P}{\longrightarrow} 0$, $0 \leq j \leq p$.}
\end{itemize}
\end{lemma}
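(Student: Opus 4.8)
The plan is to reduce every assertion to an $\ell^{1}$ estimate for a discrete convolution. First I would record that $\frac{\partial}{\partial\lambda}\hat d(k;\lambda,\theta)=-\sum_{u=0}^{k}\hat p_{k-u}q_u(\theta)$ and $\frac{\partial}{\partial\theta_j}\hat d(k;\lambda,\theta)=-\lambda\sum_{u=0}^{k}\hat p_{k-u}\frac{\partial}{\partial\theta_j}q_u(\theta)$, so each derivative sequence is the convolution of the empirical pmf $(\hat p_m)$ with $(q_u(\theta))_u$ or $(\frac{\partial}{\partial\theta_j}q_u(\theta))_u$. The single inequality driving the proof is $\|a\|_w^{2}\le M\|a\|_2^{2}\le M\|a\|_1^{2}$, valid by \eqref{w} and the inclusion $\ell^{1}\subset\ell^{2}$; combined with Young's convolution inequality and $\sum_m\hat p_m=1$ it bounds every $l^2_w$-norm by $\sum_u|q_u(\theta)|$ or $\lambda\sum_u|\frac{\partial}{\partial\theta_j}q_u(\theta)|$, which are finite by Assumption \ref{Q}(i)--(ii). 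This already yields (i), and (ii) follows identically since $E(\hat p_m)=p_m$ and $\sum_m p_m=1$. For (iii) I would center the sequences, so the generic entry becomes $-\sum_{u=0}^{k}(\hat p_{k-u}-p_{k-u})q_u(\theta)$; Young's inequality bounds its $\ell^{1}$-norm by $(\sum_u|q_u(\theta)|)\sum_m|\hat p_m-p_m|$, and the total-variation convergence $\sum_m|\hat p_m-p_m|\stackrel{a.s.}{\to}0$---Scheff\'e's lemma applied to the a.s.\ limits $\hat p_m\to p_m$, dominated by the summable $p_m$---delivers the a.s.\ convergence.

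For (iv) the perturbed parameters enter. Since the $\lambda$-derivative does not involve $\lambda$, the $k$-th entry of $\hat D_0$ equals $-\sum_{u=0}^{k}\hat p_{k-u}[q_u(\theta_k)-q_u(\theta)]$; summing absolute values and swapping the order of summation through $m=k-u$ gives $\sum_k|(\hat D_0)_k|\le\sum_m\hat p_m\sum_{j\ge0}|q_j(\theta_{j+m})-q_j(\theta)|$. Here the finite support of the empirical measure is decisive: $\hat p_m=0$ for $m>M_1$, so the outer sum is finite with probability one, while each inner sum is finite because $\{\theta_{j+m}\}_{j}\subset\mathcal N(\theta)$ and Assumption \ref{Q}(iii) applies. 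For $\hat D_j$ I would split the entry into a difference-of-derivatives term, handled exactly as above using the derivative part of Assumption \ref{Q}(iii) and $|\lambda_k|\le|\lambda|+M_2$, plus a residual term $(\lambda_k-\lambda)\sum_u\hat p_{k-u}\frac{\partial}{\partial\theta_j}q_u(\theta)$ whose $\ell^{1}$-norm is at most $M_2\sum_u|\frac{\partial}{\partial\theta_j}q_u(\theta)|<\infty$ by Assumption \ref{Q}(ii).

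Part (v) is the crux and where I expect the real difficulty, because Assumption \ref{Q}(iv) is a deterministic statement about sequences, whereas $\lambda_{kn},\theta_{kn}$ are now stochastic and converge only in probability, across infinitely many convolution terms. The device that resolves this is a deterministic envelope free of $m$: with $\Delta_n=\sup_k\|\theta_{kn}-\theta\|$ I bound, uniformly in $m$, $\sum_{j}|q_j(\theta_{(j+m)n})-q_j(\theta)|\le g(\Delta_n)$, where $g(\delta):=\sum_j\sup_{\|\vartheta-\theta\|\le\delta}|q_j(\vartheta)-q_j(\theta)|$, so that $\sum_k|(\hat D_0)_k|\le g(\Delta_n)\sum_m\hat p_m=g(\Delta_n)$. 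Choosing (near-)maximizers inside each ball and invoking Assumption \ref{Q}(iv) shows $g(\delta)\to0$ as $\delta\downarrow0$; since $g$ is monotone it is continuous at $0$ with $g(0^{+})=0$, so the continuous-mapping theorem for convergence in probability (equivalently the subsequence/a.s.\ principle) gives $g(\Delta_n)\stackrel{P}{\to}0$ and hence $\|\hat D_0\|_w^{2}\le M g(\Delta_n)^{2}\stackrel{P}{\to}0$.

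For $\hat D_j$, $1\le j\le p$, the same envelope handles the difference-of-derivatives term---using the derivative part of Assumption \ref{Q}(iv) and the boundedness in probability of $\lambda_{kn}$---while the residual $\lambda$-perturbation term is controlled by $(\sup_k|\lambda_{kn}-\lambda|)\sum_u|\frac{\partial}{\partial\theta_j}q_u(\theta)|\stackrel{P}{\to}0$ via Assumption \ref{Q}(ii). The genuine obstacle throughout (v) is precisely the passage from the sequential Assumption \ref{Q}(iv) to a statement in probability holding uniformly over the infinitely many convolution terms, and the $m$-free envelope $g$ is the mechanism designed to overcome it.
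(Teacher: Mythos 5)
Your proposal is correct, and for parts (i)--(ii) it is the paper's argument in different clothing: the paper expands the square and uses $\sum_{k\geq 0}\hat{p}_k\hat{p}_{k+|u-v|}\leq 1$ to get the bound $M\big(\sum_{u\geq 0}|q_u(\theta)|\big)^2$, which is exactly your Young-inequality convolution estimate combined with $\|a\|_w^2\leq M\|a\|_1^2$. The genuine divergences are in (iii)--(v). For (iii) the paper writes $\frac{\partial}{\partial\lambda}\widehat{d}(\cdot;\lambda,\theta)=-\frac{1}{n}\sum_{i=1}^n\iota(X_i)$ with $\iota(X)$ the sequence $(q_0(\theta),q_1(\theta),\ldots)$ shifted to position $X$, checks $E\{\|\iota(X)\|_w^2\}<\infty$, and invokes the SLLN in Banach spaces (Theorem 2.4 of Bosq); your route---total-variation convergence $\sum_m|\hat{p}_m-p_m|\stackrel{a.s.}{\to}0$ via Scheff\'e's lemma (where the domination remark should be read as $(p_m-\hat{p}_m)^+\leq p_m$, combined with $\sum_m\hat{p}_m=\sum_m p_m=1$), fed into the $\ell^1$ convolution bound---is more elementary and avoids vector-valued limit theory entirely, at the cost of being specific to this convolution structure rather than reusable for the CLT step later in the paper. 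For (iv) and (v) the paper is terse: it asserts $\|\hat{D}_0\|_w^2\leq M\big(\sum_{u\geq 0}|q_u(\theta_k)-q_u(\theta)|\big)^2$ "proceeding as in part (i)", leaving unaddressed that the perturbed parameter $\theta_k$ carries the \emph{outer} index $k$ while the sum runs over the inner index $u$; your reindexing through $m=k-u$ together with the finite support of $\hat{p}$ in (iv), and your monotone envelope $g(\delta)=\sum_j\sup_{\|\vartheta-\theta\|\leq\delta}|q_j(\vartheta)-q_j(\theta)|$ in (v), are precisely the devices that make this index mismatch rigorous (your near-maximizer argument showing $g(\delta)\downarrow 0$ is the subsequence argument implicit in the paper's claim that Assumption \ref{Q} (iv) yields an $\varepsilon$--$\delta$ modulus, after which both proofs finish with the same inequality $P(\|\hat{D}_0\|_w^2>M\varepsilon^2)\leq P(\sup_{k\geq 0}\|\theta_{kn}-\theta\|>\delta)\to 0$). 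In short: same skeleton in (i), (ii) and (v); a genuinely different and lighter mechanism in (iii); and in (iv)--(v) a more careful treatment of a step the paper compresses.
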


\begin{proof}
(i) We have that
\begin{equation} \label{partial.lambda}
\frac{\partial}{\partial \lambda}\widehat{d}(k;\lambda,\theta)=-\sum_{u=0}^{k}q_u(\theta)\hat{p}_{k-u},
\end{equation}
and therefore
\begin{eqnarray*}
\left \|\frac{\partial}{\partial \lambda}\widehat{d}(\cdot;\lambda,\theta) \right \|_w^2
&    = &   \sum_{k\geq 0}w_k\sum_{u,v=0}^{k}q_u(\theta)q_v(\theta) \hat{p}_{k-u}\hat{p}_{k-v}\\
& \leq & M \sum_{k\geq 0}\sum_{u,v=0}^{k} \left|q_u(\theta)\right| \left|q_v(\theta) \right|  \hat{p}_{k-u}\hat{p}_{k-v}\\\
& = & M \sum_{u,v \geq 0} \left|q_u(\theta)\right| \left|q_v(\theta) \right|  \sum_{k \geq 0}\hat{p}_k\hat{p}_{k+|u-v|}.
\end{eqnarray*}
Taking into account  
\begin{equation} \label{tia}
\sum_{k \geq 0}\hat{p}_k\hat{p}_{k+|u-v|} \leq \sum_{k \geq 0}\hat{p}_k=1,
\end{equation}
we have that
\[
\left \|\frac{\partial}{\partial \lambda}\widehat{d}(\cdot;\lambda,\theta) \right \|_w^2 \leq
M \sum_{u,v \geq 0} \left|q_u(\theta)\right| \left|q_v(\theta) \right|  =M \bigg(\sum_{u \geq 0} \left|q_u(\theta)\right|   \bigg)^2< \infty,
\]
where the  finiteness comes from  Assumption \ref{Q}(i).

 We have that
\[
\frac{\partial}{\partial \theta_j}\widehat{d}(k;\lambda,\theta)=-
\red{\lambda}\sum_{u=0}^{k}\frac{\partial}{\partial \theta_j}q_u(\theta)\hat{p}(k-u).
\]
Proceeding as before, one gets that
\[
\left \|\frac{\partial}{\partial  \theta_j}\widehat{d}(\cdot;\lambda,\theta) \right \|_w^2 \leq
M  \left(\sum_{u \geq 0} \left| \frac{\partial}{\partial \theta_j} q_u(\theta) \right| \right)^2< \infty,
\]
where the finiteness comes from  Assumption \ref{Q}(ii).

(ii) The result follows from part (i) by replacing $\hat{p}_{k}$ with $p_{k}$, $\forall k\geq 0$.

(iii) From \eqref{prob} and \eqref{partial.lambda}, we can write
\[
\frac{\partial}{\partial \lambda}\widehat{d}(k;\lambda,\theta)=-\frac{1}{n}
\sum_{i=1}^{n}
\sum_{u=0}^{k}q_u(\theta)I(X_i=k-u),
\]
and thus,
$$
\frac{\partial}{\partial \lambda}\widehat{d}(\cdot;\lambda,\theta)=-\frac{1}{n}
\sum_{i=1}^{n} \iota(X_i),
$$
with
$$\iota(X)=(q_0(\theta)I(X=0), \sum_{u=0}^{1}q_u(\theta)I(X_i=1-u),\ldots, \sum_{u=0}^{k}q_u(\theta)I(X_i=k-u), \ldots).$$
Notice  that $\iota(X)=(0,\ldots,0,q_0(\theta), q_1(\theta), q_2(\theta), \ldots)$, where $q_0(\theta)$ is in the position $X$ (starting at 0). Hence,
$\|\iota(X)\|_w^2=\sum_{k \geq 0}w_{X+k}q_k(\theta)^2 \leq M \sum_{k \geq 0}q_k(\theta)^2<\infty$, which implies that $E\{\|\iota(X)\|_w^2 \}<\infty$. Now, by the SLLN in Banach spaces (see Theorem \red{2.4} in Bosq \cite{Bosq}) and the continuous mapping theorem, it follows that
$\displaystyle \Big \|\frac{\partial}{\partial \lambda}\widehat{d}(\cdot;\lambda,\theta)-E\Big\{\frac{\partial}{\partial \lambda}\widehat{d}(\cdot; \lambda,\theta)\Big\}\Big\|_w \stackrel{a.s.}{\longrightarrow}0$.

The proof for the derivatives with respect to $\theta_j$, $1 \leq j \leq p$, is similar, so we omit it.

{(iv) Proceeding as in part (i), one gets that
\[
\|D_0\|_w^2 \leq M\left(\sum_{u \geq 0} \left|q_u(\theta_k)-q_u(\theta) \right| \right)^2.
\]
By Assumption \ref{Q} (iii) it follows that $\|D_0\|_w^2 <\infty$, with probability one.  The proof for $\|D_j\|_w^2$, $1 \leq j \leq p$, is similar, and thus omitted.} 

{(v) Proceeding as in part (i), one gets that
\[
\|D_0\|_w^2 \leq M\left(\sum_{u \geq 0} \left|q_u(\theta_{kn})-q_u(\theta) \right| \right)^2.
\]
From Assumption \ref{Q} (iv), for any $\varepsilon>0$, there exists  $\delta>0$ and $n_0\in \mathbb{N}$ such that if $\displaystyle \sup_{k \geq 0} |\theta_{kn}-\theta|<\delta$ and $n>n_0$, then $\sum_{u \geq 0} \left|q_u(\theta_{kn})-q_u(\theta) \right|<\varepsilon$. Therefore, for large $n$,
\[P\left (\|D_0\|_w^2 > M\varepsilon \right) \leq  P \left(\sum_{u \geq 0} \left|q_u(\theta_{kn})-q_u(\theta) \right|>\varepsilon \right) \leq 
P\left( \sup_{k \geq 0} |\theta_{kn}-\theta|>\delta\right).\]
By assumption, the  right-most term of the above expression goes to 0. This proves that $\|D_0\|_w^2\stackrel{P}{\longrightarrow}0$.  The proof for $\|D_j\|_w^2$, $1 \leq j \leq p$, is similar, and thus omitted.}  
 \end{proof}

\begin{proof}[Proof of Theorem \ref{power1}]
By Taylor expansion, we get that, for each $k\in \mathbb{N}_{0}$,
\begin{eqnarray}
\hat{d}(k;\hat{\lambda},\hat{\theta})& = &\hat{d}(k;\lambda,\theta)  +
\frac{\partial}{\partial \lambda}\hat{d}(k;{\lambda}, \theta)(\hat{\lambda}-\lambda)+
\sum_{i=1}^p \frac{\partial}{\partial \theta_i}\hat{d}(k;\lambda,\theta)(\hat{\theta}_i-\theta_i) \nonumber\\
 & & +\left( \frac{\partial}{\partial \lambda}\hat{d}(k;\bar{\lambda}_k,\bar{\theta}_k) -
  \frac{\partial}{\partial \lambda}\hat{d}(k;\lambda,\theta) \right)  (\hat{\lambda}-\lambda  ) \nonumber\\
 & & +\sum_{i=1}^p \left( \frac{\partial}{\partial \theta_i}\hat{d}(k;\bar{\lambda}_k,\bar{\theta}_k) -
  \frac{\partial}{\partial \theta_i}\hat{d}(k;\lambda,\theta) \right)  (\hat{\theta}_i-\theta_i  ) \label{taylor}
\end{eqnarray}
with $(\bar{\lambda}_k,\bar{\theta}_k^\top)=a_k (\lambda,\theta^{\top})+(1-a_k)(\hat{\lambda},\hat{\theta}^{\top})$, for some $a_k \in (0,1)$.
Let $\hat{D}_0$,
  $\hat{D}_i$, $1 \leq i \leq p$, be as defined in Lemma  \ref{lemma2} (iv) with $\lambda_k=\bar{\lambda}_k$, $\forall k \geq 0$.
Therefore,
\begin{eqnarray}
\red{\left| \|\hat{d}( \cdot ;\hat{\lambda},\hat{\theta})\|_w - 
 \|\hat{d}(  \cdot;\lambda,\theta) \|_w \right|} & \leq & 
 \left \| \frac{\partial}{\partial \lambda} \hat{d}(\cdot;\lambda,\theta) \right \|_w \left |\hat{\lambda}-\lambda \right |+
\sum_{i=1}^p\left \|\frac{\partial}{\partial \theta_i}\widehat{d}(\cdot;\lambda,\theta) \right \|_w \left |\hat{\theta}_i-{\theta}_i \right | \nonumber\\
 & & +   \|\hat{D}_0\|_w   \left |\hat{\lambda}-\lambda \right |+
 \sum_{i=1}^p   \|\hat{D}_i\|_w    \left |\hat{\theta}_i-{\theta}_i \right |. \label{aux1}
  \end{eqnarray}
%Recall expression \eqref{d.average}.
Since
\begin{equation} \label{d.average}
\hat{d}(k;{\lambda},{\theta})=\frac{1}{n}\sum_{i=1}^{n}\phi(X_{i};k,{\lambda},{\theta}),
\end{equation}
taking into account that, from Lemma \ref{lemma1},  $E(\| \phi(X;\lambda,\theta)\|_w^2) <\infty$,  { and that $E\{\phi(X;\lambda,\theta)\}=d(\cdot;\lambda,\theta)$,} then by applying the SLLN in Banach spaces (see Theorem \red{2.4}  in Bosq \cite{Bosq}) and the continuous mapping theorem, it follows that
\begin{equation} \label{aux2}
\| \hat{d}( \cdot;\lambda,\theta)\|_w^2 \stackrel{a.s.}{\longrightarrow} { \| E\{\phi(X;\lambda,\theta)\}\|_w^2=\| {d}( \cdot;\lambda,\theta)\|_w^2=\eta}<\infty.
\end{equation}
{From  Lemma \ref{lemma2} (i), the quantities 
$ \left \| \frac{\partial}{\partial \lambda} \hat{d}(\cdot;\lambda,\theta) \right \|_w${,} $\left \|\frac{\partial}{\partial \theta_1}\widehat{d}(\cdot;\lambda,\theta) \right \|_w, \ldots, \left \|\frac{\partial}{\partial \theta_p}\widehat{d}(\cdot;\lambda,\theta)\right \|_w$ are all of them bounded with probability 1.
Since  $\hat{\lambda} \stackrel{a.s.(P)}{\longrightarrow} \lambda$,  it readily follows that
$|\hat \lambda-\lambda| \leq M_2$,  for large enough $n$ with probability 1 (with high probability), for some positive constant $M_2$. This fact implies that
$|\bar{\lambda}_k-\lambda| \leq M_2$,
$\forall k \geq 0$, for large enough $n$ with probability 1 (with high probability).
Analogously, as $\hat{\theta} \stackrel{a.s.(P)}{\longrightarrow} \theta$, it readily follows that $\bar{\theta}_{(k)} \in {\cal N}(\theta)$, $\forall k \geq 0$, for large enough $n$ with probability 1 (with high probability). As a consequence, from 
 Lemma \ref{lemma2} (iv), the quantities 
 $ \|\hat{D}_0\|_w$, 
 $\|\hat{D}_1\|_w, \ldots,  \|\hat{D}_p\|_w$ are all of them bounded with probability 1 (with high probability)}. Finally, the result follows from \eqref{aux1}, \eqref{aux2}, \red{the previous observation and that  {$\hat{\lambda} \stackrel{a.s.(P)}{\longrightarrow} \lambda$}, $\hat{\theta} \stackrel{a.s.(P)}{\longrightarrow} \theta$.}
\end{proof}

\begin{proof}[Proof of Theorem \ref{asymptoticnulldistribution}]
 Let $d(k;\lambda,\theta)$ be as defined in the statement of Theorem \ref{power1}. Let us first notice that
\begin{eqnarray}
& & \frac{\partial}{\partial \lambda}d(k;\lambda,\theta)  =  E_{\lambda, \theta}\left\{\frac{\partial}{\partial \lambda}\widehat{d}(\cdot; \lambda,\theta)\right\}  %=\frac{\partial}{\partial \lambda}E_{\lambda, \theta}\{ \phi(X; k,\lambda,\theta)\}
=E_{\lambda, \theta} \left\{   \frac{\partial}{\partial \lambda}\phi(X; k,\lambda,\theta) \right\}=\mu_0(k;\lambda,\theta),  \label{caca0} \\ & &
 \frac{\partial}{\partial \theta_i}d(k;\lambda,\theta) =  E_{\lambda, \theta}\left\{\frac{\partial}{\partial \theta_j}\widehat{d}(\cdot;\lambda,\theta)\right\}= %\frac{\partial}{\partial \theta_j}E_{\lambda, \theta}\{ \phi(X; k,\lambda,\theta)\}=
 E_{\lambda, \theta} \left \{  \frac{\partial}{\partial \theta_j} \phi(X; k,\lambda,\theta) \right\}=\mu_j(k;\lambda,\theta), \quad 1 \leq j \leq p.  \label{cacai}
\end{eqnarray}
 From expansion \eqref{taylor},  \eqref{d.average},  and Assumption \ref{estim}, we can write
\begin{eqnarray}
\sqrt{n}\hat{d}(k;\hat{\lambda},\hat{\theta}) & = & \frac{1}{\sqrt{n}} \sum_{i=1}^n Y(X_i; k, \lambda,\theta)
\label{auxn.1}\\
 & & +\left\{ \frac{\partial}{\partial \lambda}\hat{d}(k;\lambda,\theta) -\frac{\partial}{\partial \lambda}d(k;\lambda,\theta)\right\} \frac{1}{\sqrt{n}} \sum_{i=1}^n \psi_0(X_i; \lambda, \theta) \label{auxn.2}\\
 & & + \sum_{j=1}^p \left\{ \frac{\partial}{\partial \theta_j}\hat{d}(k;\lambda,\theta) -\frac{\partial}{\partial \theta_j}d(k;\lambda,\theta)\right\} \frac{1}{\sqrt{n}} \sum_{i=1}^n \psi_0(X_i; \lambda, \theta)\label{auxn.3}\\
 & & + \frac{\partial}{\partial \lambda}\hat{d}(k;\lambda,\theta) r_0+\sum_{j=1}^p \frac{\partial}{\partial \theta_j}\hat{d}(k;\lambda,\theta) r_{j} \label{auxn.4}\\
 & & + \left( \frac{\partial}{\partial \lambda}\hat{d}(k;{\bar{\lambda}_{k},\bar{\theta}_{k}}) -
  \frac{\partial}{\partial \lambda}\hat{d}(k;\lambda,\theta) \right) \sqrt{n}(\hat{\lambda}-\lambda) \label{auxn.5}\\
 & &+\sum_{j=1}^p \left( \frac{\partial}{\partial \theta_i}\hat{d}(k;{\bar{\lambda}_{k},\bar{\theta}_{k}}) -
  \frac{\partial}{\partial \theta_i}\hat{d}(k;\lambda,\theta) \right)\sqrt{n}(\hat{\theta}_j-\theta_j), \label{auxn.6}
\end{eqnarray}
{where $Y(X; k, \lambda, \theta)$ is as defined in \eqref{Y.expression}.}
 To show the result we prove that the norm of the sequence whose elements are in each of the equations \eqref{auxn.2}--\eqref{auxn.6} converge in probability to 0, and that the   sequence of  elements in equation \eqref{auxn.1} converges in law to a Gaussian element taking values in $l^2_w$ with mean zero and covariance as defined in the statement of the result.

  From Assumption \ref{estim},
  $(1/\sqrt{n}) \sum_{i=1}^n \psi_0(X_i; \lambda, \theta)$ converges in law to a normal law with mean zero and finite variance, which implies that it is bounded in probability; by Lemma \ref{lemma2} (iii), the norm of the sequence whose elements are in curly brackets in \eqref{auxn.2}, goes to 0 a.s. Therefore, the norm of the sequence whose elements are in \eqref{auxn.2} goes to 0 in probability.

The same reasoning applies to show that the norm of the sequence whose elements are in \eqref{auxn.3} goes to 0 in probability.

\red{From Lemma \ref{lemma2} (i),  $ \left \|\frac{\partial}{\partial \lambda}\widehat{d}(\cdot;\lambda,\theta)\right\|_w^2 <\infty$,
$ \left \|\frac{\partial}{\partial \theta_j}\widehat{d}(\cdot;\lambda,\theta) \right\|_w^2 <\infty$, $1 \leq j\leq p$, with probability 1,  and from Assumption \ref{estim},  $r_{j} \stackrel{P}{\longrightarrow} 0,$  $0 \leq j \leq p$. Hence, it readily follows that  the norm of the sequence whose elements are in \eqref{auxn.4} goes to 0 in probability}. 

{
From Assumption \ref{estim}, by applying the central limit theorem to $(1/\sqrt{n})\sum_{i=1}^n \psi_j(X_i; \lambda, \theta)$, $0 \leq j \leq p$, and Slutsky's theorem, if follows  that all of  $\sqrt{n}(\hat{\lambda}-\lambda)$, $\sqrt{n}(\hat{\theta}_1-\theta_1), \ldots, \sqrt{n}(\hat{\theta}_p-\theta_p)$  converge in law to normal laws, and therefore, all those quantities are bounded in probability. Moreover, it also entails that
 $\hat{\lambda} \stackrel{P}{\longrightarrow} \lambda$ and $\hat{\theta} \stackrel{P}{\longrightarrow} \theta$, which implies that $\displaystyle \sup_{k \geq 0} |\bar{\lambda}_k-\lambda| \stackrel{P}{\longrightarrow} 0$ and  $\displaystyle \sup_{k \geq 0} \|\bar{\theta}_k-\theta\| \stackrel{P}{\longrightarrow} 0$.
From Lemma \ref{lemma2} (v), it follows that the  norm of the sequence whose elements are $\frac{\partial}{\partial \lambda}\hat{d}(k;{\bar{\lambda}_{k},\bar{\theta}_{k}}) -
  \frac{\partial}{\partial \lambda}\hat{d}(k;\lambda,\theta) $ goes to 0 in probability, and the same happens for the
 norm of the sequence whose elements are 
 $  \frac{\partial}{\partial \theta_i}\hat{d}(k;{\bar{\lambda}_{k},\bar{\theta}_{k}}) -
  \frac{\partial}{\partial \theta_i}\hat{d}(k;\lambda,\theta) $, $1\leq i \leq p$.  As a consequence of the above facts,  the norm of the sequence whose elements are in   \eqref{auxn.5} and  \eqref{auxn.6}
goes to 0 in probability.} 

{From Assumption \ref{estim} and \eqref{relation}, it readily follows that  $E_{\lambda, \theta}\{Y(X; k,\lambda,\theta)\}=0$, $\forall k \geq 0.$}
Let
\begin{equation} \label{Y}
Y(X; \lambda, \theta)= ( Y(X; 0, \lambda, \theta), Y(X; 1, \lambda, \theta) , \ldots).
\end{equation}
 We have that
 $$\|Y(X;  \lambda, \theta)\|_w \leq \|\phi(X;  \lambda, \theta)\|_w +|\psi_0(X; \lambda, \theta)| \, \|\mu_0(\cdot; \lambda, \theta)\|_w+
 \sum_{j=1}^p|\psi_j(X; \lambda, \theta)| \, \|\mu_j(\cdot; \lambda, \theta)\|_w.$$
As a consequence of Lemma \ref{lemma1}, {\eqref{caca0}, \eqref{cacai}}, Lemma \ref{lemma2} (ii), Assumption \ref{estim}, and the above inequality, we have that  $E_{\lambda, \theta} \left\{\|Y(X;  \lambda, \theta)\|_w^2 \right \}<\infty$. Now, by the central limit theorem in Hilbert spaces (see Theorem 2.7 in Bosq \cite{Bosq}),
\[
  \frac{1}{\sqrt{n}}\sum_{i=1}^{n}Y(X_i;\lambda, \theta)\stackrel{\mathcal{L}}{\longrightarrow}G(\lambda, \theta).
  \]
 The result follows from the previous relations and the continuous mapping theorem.  
\end{proof}

\begin{remark} \label{cova}
The covariance kernel $C$ of $G(\lambda,\theta)$ has the following expression,
\begin{eqnarray*}
C(k,s) & = & E_{\lambda,\theta}\{\phi(X; k, \lambda, \theta) \phi(X; s, \lambda, \theta)\} + \sum_{j=0}^p\mu_j(k;  \lambda, \theta)
 E_{\lambda,\theta}\{\phi(X; s, \lambda, \theta)  \psi_j(X; \lambda, \theta)\}\\
  & & +\sum_{j=0}^p\mu_j(s;  \lambda, \theta)
 E_{\lambda,\theta}\{\phi(X; k, \lambda, \theta)  \psi_j(X; \lambda, \theta)\}\\
  & & + \sum_{j,v=0}^p\mu_j(k;  \lambda, \theta) \mu_v(s;  \lambda, \theta)
 E_{\lambda,\theta}\{ \psi_j(X; \lambda, \theta) \psi_v(X; \lambda, \theta)\},
\end{eqnarray*}
with
\begin{eqnarray*}
E_{\lambda,\theta}\{\phi(X; k, \lambda, \theta) \phi(X; s, \lambda, \theta)\}  & = & \left\{
\begin{array}{ll}
\displaystyle (k+1)^2p_{k+1}+\lambda^2\sum_{u=0}^kq_{k-u} (\theta)^2p_u & \mbox{if } k=s,\\
\displaystyle  -\lambda (k+1)q_{s-k-1}(\theta)p_{k+1}+\lambda^2\sum_{u=0}^kq_{k-u} (\theta) q_{s-u} (\theta)p_u & \mbox{if } k<s,
\end{array}
\right. \\
E_{\lambda,\theta}\{\phi(X; k, \lambda, \theta)  \psi_j(X; \lambda, \theta)\} & = & (k+1)p_{k+1}\psi_j(k+1; \lambda, \theta)-\lambda
\sum_{u=0}^kq_{k-u} (\theta) p_u \psi_j(u; \lambda, \theta),
\end{eqnarray*}
$k,s \in \mathbb{N}_0$, $0\leq j \leq p$.
\end{remark}

\begin{remark} \label{probremark}
Assume that $H_0$ is true, that is,  $X\sim {GPD}(\lambda,\theta,1,G_{21}(\cdot; \theta))$, for some $\lambda>0$ and some  $\theta \in \Theta$.
If Assumption \ref{Q}(i) holds, from \eqref{Q.21}, by dominated convergence theorem, we can write
$$ \int {G}_{21}(t; \theta)dt =\sum_{k\geq 0}q_k(\theta)\frac{t^{k+1}}{k+1}+\mbox{constant}, \quad t\in[0,1].$$
Thus, from Remark \ref{charact}, we \red{have} that the pgf of $X$, $g(t)$, has the following expression:
$$g(t)=\exp\Big(\lambda \sum_{k\geq 0}q_k(\theta)\frac{t^{k+1}-1}{k+1}\Big).$$
Hence,
$$p_0=g(0)=\exp\Big(-\lambda \sum_{k\geq 0}q_k(\theta)\frac{1}{k+1}\Big).$$
Notice that if  Assumption \ref{Q}(ii) and (iii) hold, then it can be easily  checked that $p_0$ is  a continuous function of $\lambda$ and $\theta$. From \eqref{relation}, this fact implies that  $p_k$ is  a continuous function of $\lambda$ and $\theta$, $\forall k \geq 0$.
\end{remark}

\begin{comment}
\red{
\begin{remark} \label{C.acotada}
If the weights  $w_0,w_1,\ldots$ satisfy \eqref{w}, then
$$   \sum_{k \geq 0}w_k  C(k,k) \leq M  \sum_{k \geq 0} C(k,k).$$
From Remark \ref{cova}, 
$ \sum_{k \geq 0} C(k,k)=C_1(k)+2C_2(k)+C_3(k)$,
with
\begin{eqnarray}
C_1(k) & = & \sum_{k \geq 0} E_{\lambda,\theta}\{\phi^2(X; k, \lambda, \theta) \} =E_{\lambda,\theta}(X^2)+\lambda^2 \sum_{k \geq 0}q_k^2(\theta), \label{C1}\\
C_2(k) & = &\sum_{k \geq 0}
\sum_{j=0}^p\mu_j(k;  \lambda, \theta)
 E_{\lambda,\theta}\{\phi(X; k, \lambda, \theta)  \psi_j(X; \lambda, \theta)\} \label{C2}\\
C_3(k)  & = & \sum_{k \geq 0} \sum_{j,v=0}^p \mu_j(k;  \lambda, \theta) \mu_v(k;  \lambda, \theta)
 E_{\lambda,\theta}\{ \psi_j(X; \lambda, \theta) \psi_v(X; \lambda, \theta)\}. \label{C3}
\end{eqnarray}
If  $E_{\lambda,\theta}(X^2)<\infty$ and Assumption  \ref{Q} holds, then the right-hand-side of \eqref{C1} is finite. Moreover, from Remark \ref{probremark} it is a bounded quantity
\end{remark}
}
\end{comment}

\red{Before proving Theorem \ref{boot}, we state a preliminary result.}

\red{\begin{lemma} \label{CC}  Let $\{\lambda_n\}_{n \geq 1}$  and $\{\theta_n\}_{n \geq 1}$ be two sequences.
Assume that the weights  $w_0,w_1,\ldots$ satisfy \eqref{w}, that $E_{\lambda,\theta}(X^2)<\infty$, that Assumptions \ref{Q} and \ref{estim2} hold. Let $C(k,k)$ denote the covariance kernel in Remark  \ref{cova} with $k=s$ and let $C_n(k,k)$  denote  $C(k,k)$ with $\lambda$ and $\theta$  replaced with $\lambda_n$ and $\theta_n$, respectively. If $\lambda_n \to \lambda>0$  and $\theta_n \to \theta \in \Theta$, then  $\sum_{k \geq 0} w_k \left\{ C_n(k,k)-C(k,k) \right\} \to 0$.
\end{lemma}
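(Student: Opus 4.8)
The plan is to prove the stronger statement that $\sum_{k\ge0} w_k C_n(k,k) \to \sum_{k\ge0} w_k C(k,k)$, with both series finite; since $C(k,k)\ge0$ and $C_n(k,k)\ge0$, this yields $\sum_{k\ge0} w_k\{C_n(k,k)-C(k,k)\}\to0$. Using the diagonal ($k=s$) form of the kernel in Remark \ref{cova}, I would split $C(k,k)$ into three groups: the variance part $E_{\lambda,\theta}\{\phi(X;k,\lambda,\theta)^2\}$, the cross part $2\sum_{j=0}^p\mu_j(k;\lambda,\theta)E_{\lambda,\theta}\{\phi(X;k,\lambda,\theta)\psi_j(X;\lambda,\theta)\}$, and the quadratic part $\sum_{j,v=0}^p\mu_j(k;\lambda,\theta)\mu_v(k;\lambda,\theta)E_{\lambda,\theta}\{\psi_j\psi_v\}$, and establish convergence of the $w$-weighted sum of each group separately. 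Throughout I would lean on three facts: $\|x\|_w\le\sqrt{M}\,\|x\|_2$ for any $x$ (so $\ell^2$ bounds transfer to $\ell^2_w$); Young's convolution inequality $\|a*b\|_2\le\|a\|_1\|b\|_2$; and continuity of the inner product of the Hilbert space $\ell^2_w$.

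As preliminaries I would record the following convergences. By Remark \ref{probremark} and Assumptions \ref{Q}(ii)--(iii), $p_k(\lambda_n,\theta_n)\to p_k(\lambda,\theta)$ for each $k$; since these are probability vectors summing to $1$, Scheffé's lemma gives $\sum_k|p_k(\lambda_n,\theta_n)-p_k(\lambda,\theta)|\to0$, hence convergence in $\ell^2$ as well (because $\|\cdot\|_2\le\|\cdot\|_1$). By Assumption \ref{Q}(iv) (applied with the constant-in-$k$ sequence $\theta_{kn}=\theta_n$), $\sum_m|q_m(\theta_n)-q_m(\theta)|\to0$ and $\sum_m|\frac{\partial}{\partial\theta_j}q_m(\theta_n)-\frac{\partial}{\partial\theta_j}q_m(\theta)|\to0$. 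Recalling that $\mu_0(k;\lambda,\theta)=-\sum_{u=0}^k q_{k-u}(\theta)p_u$ and $\mu_j(k;\lambda,\theta)=-\lambda\sum_{u=0}^k\frac{\partial}{\partial\theta_j}q_{k-u}(\theta)p_u$ are convolutions, Young's inequality together with the two displays above and $\lambda_n\to\lambda$ yields $\|\mu_j(\cdot;\lambda_n,\theta_n)-\mu_j(\cdot;\lambda,\theta)\|_w\to0$ for every $0\le j\le p$. I would also note that $E_{\lambda_n,\theta_n}(X^2)\to E_{\lambda,\theta}(X^2)$ (continuity of the second moment, from Assumption \ref{Q}(v)) and $E_{\lambda_n,\theta_n}\{\psi_j(X;\lambda_n,\theta_n)^2\}\to E_{\lambda,\theta}\{\psi_j^2\}$ (Assumption \ref{estim2}(ii)), so both are bounded in $n$.

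The variance group is handled by a generalized dominated convergence (Pratt) argument: after rearranging, each summand converges pointwise in $k$ to its $(\lambda,\theta)$ value and is dominated by the nonnegative sequence $M[(k+1)^2p_{k+1}(\lambda_n,\theta_n)+\lambda_n^2\sum_{u\le k}q_{k-u}(\theta_n)^2p_u(\lambda_n,\theta_n)]$, whose total mass converges (to $M E_{\lambda,\theta}(X^2)+M\lambda^2\sum_m q_m(\theta)^2$) by the preliminaries; Pratt's lemma then gives convergence of the weighted sum. The quadratic group reduces to $\sum_{j,v}\langle\mu_j(\cdot;\lambda_n,\theta_n),\mu_v(\cdot;\lambda_n,\theta_n)\rangle_w\,E_{\lambda_n,\theta_n}\{\psi_j\psi_v\}$, where the scalar factors converge by Assumption \ref{estim2}(ii) and the inner products converge because each $\mu_j(\cdot;\lambda_n,\theta_n)\to\mu_j(\cdot;\lambda,\theta)$ strongly in $\ell^2_w$.

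The cross group is the main obstacle, since the score functions $\psi_j$ are only assumed to converge pointwise (Assumption \ref{estim2}(iii)) rather than in any norm. Writing $\Phi_j^{(n)}(k)=E_{\lambda_n,\theta_n}\{\phi(X;k,\lambda_n,\theta_n)\psi_j(X;\lambda_n,\theta_n)\}=(k+1)p_{k+1}(\lambda_n,\theta_n)\psi_j(k+1;\lambda_n,\theta_n)-\lambda_n\sum_{u\le k}q_{k-u}(\theta_n)p_u(\lambda_n,\theta_n)\psi_j(u;\lambda_n,\theta_n)$, I would first bound $\|\Phi_j^{(n)}\|_w$ uniformly in $n$: the first summand is controlled using $m^2p_m(\lambda_n,\theta_n)\le E_{\lambda_n,\theta_n}(X^2)\le\sup_n E_{\lambda_n,\theta_n}(X^2)<\infty$ to reduce its squared $\ell^2_w$ norm to a multiple of $E_{\lambda_n,\theta_n}(X^2)\,E_{\lambda_n,\theta_n}\{\psi_j^2\}$, and the second summand is a convolution of $q$ with $(p_u\psi_j(u))$, whose $\ell^2$ norm is at most $\|q\|_1\sqrt{E_{\lambda_n,\theta_n}\{\psi_j^2\}}$ by Young's inequality (using $p_u^2\le p_u$); both bounds are uniform by the preliminaries. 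Since moreover $\Phi_j^{(n)}(k)\to\Phi_j(k)$ for each fixed $k$ (a finite expression in factors that each converge), a bounded coordinatewise-convergent sequence in a Hilbert space converges weakly, so $\Phi_j^{(n)}\rightharpoonup\Phi_j$ in $\ell^2_w$. Combining this with the strong convergence $\mu_j^{(n)}\to\mu_j$ and the identity $\langle\mu_j^{(n)},\Phi_j^{(n)}\rangle_w-\langle\mu_j,\Phi_j\rangle_w=\langle\mu_j^{(n)}-\mu_j,\Phi_j^{(n)}\rangle_w+\langle\mu_j,\Phi_j^{(n)}-\Phi_j\rangle_w$ gives convergence of the cross group: the first term is $\le\|\mu_j^{(n)}-\mu_j\|_w\|\Phi_j^{(n)}\|_w\to0$, and the second $\to0$ by weak convergence. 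Adding the three groups completes the proof.
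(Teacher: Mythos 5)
Your proof is correct, and while your top-level decomposition of $C(k,k)$ coincides with the paper's (your variance, cross and quadratic groups are exactly the paper's $S_1+S_2$, $S_{3j}-S_{4j}$ and $S_{5jv}$), the machinery is genuinely different in the one place that matters. For the variance and quadratic groups the difference is mostly packaging: the paper runs hand-rolled truncation arguments (pick $k_0$ with $\sum_{k>k_0}k^2p_k<\varepsilon$, use the continuity of $E_{\lambda,\theta}(X^2)$ and Remark \ref{probremark} on the head), which is precisely the Scheff\'e/Pratt mechanism you invoke by name, and the paper's bound for its term $D_1$ is a hand-rolled instance of the Young inequality $\|a*b\|_2\le\|a\|_1\|b\|_2$ you use to get $\|\mu_j(\cdot;\lambda_n,\theta_n)-\mu_j(\cdot;\lambda,\theta)\|_w\to 0$ (the paper records this as fact F4). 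The substantive divergence is the cross group. The paper splits $E_{\lambda,\theta}\{\phi(X;k,\lambda,\theta)\psi_j(X;\lambda,\theta)\}$ into its two pieces and proves \emph{normwise} ($\ell^1$-type) convergence of each: for the piece $(k+1)p_{k+1}\psi_j(k+1)$ this forces it to invoke Assumption \ref{estim2} (iv), the continuity of $E_{\lambda,\theta}\{X\psi_j(X;\lambda,\theta)\}$, in another Scheff\'e-type step, and for the convolution piece $S_{44j}$ it needs a windowed truncation of $\sum_k|q_k(\theta)|$ combined with Assumption \ref{estim2} (iii). You instead keep $\Phi_j^{(n)}(k)=E_{\lambda_n,\theta_n}\{\phi(X;k,\lambda_n,\theta_n)\psi_j(X;\lambda_n,\theta_n)\}$ intact, establish a uniform $\ell^2_w$ bound (both of your estimates check out: $(k+1)^2p_{k+1}(\lambda_n,\theta_n)\le E_{\lambda_n,\theta_n}(X^2)$ for the first piece, Young with $p_u^2\le p_u$ for the second) plus coordinatewise convergence via Assumption \ref{estim2} (iii) and Remark \ref{probremark}, conclude weak convergence $\Phi_j^{(n)}\rightharpoonup\Phi_j$, and pair it with the strong convergence of $\mu_j^{(n)}$ via the standard splitting $\langle\mu_j^{(n)},\Phi_j^{(n)}\rangle_w-\langle\mu_j,\Phi_j\rangle_w=\langle\mu_j^{(n)}-\mu_j,\Phi_j^{(n)}\rangle_w+\langle\mu_j,\Phi_j^{(n)}-\Phi_j\rangle_w$. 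A notable consequence is that your proof never uses Assumption \ref{estim2} (iv) at all, so it establishes the lemma under strictly weaker hypotheses than the paper's argument consumes; what the paper's route buys in exchange is elementary self-containedness (only Cauchy--Schwarz and truncations, no appeal to weak compactness in Hilbert space) and the stronger normwise convergence of the cross-term ingredients, which, however, the lemma does not require. Your auxiliary steps are also sound: applying Assumption \ref{Q} (iv) with the constant-in-$k$ sequence $\theta_{kn}=\theta_n$ is legitimate (the paper does the same for its term $D_2$), Scheff\'e upgrades the pointwise convergence of $p_k(\lambda_n,\theta_n)$ from Remark \ref{probremark} to $\ell^1$ and hence $\ell^2$, and the initial reduction to $\sum_k w_kC_n(k,k)\to\sum_k w_kC(k,k)$ is licit because both series have nonnegative terms ($C_n(k,k)$ and $C(k,k)$ are variances) and finite sums.
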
}

\begin{proof} \red{From Remark \ref{cova} we can write
$\sum_{k \geq 0} w_k \left\{ C_n(k,k)-C(k,k) \right\}=S_1+S_2+2S_{30}+\ldots +2S_{3p}{-2S_{40}-\ldots -2S_{4p}}+S_{500}+\ldots+S_{5pp}$, the expression of each term on the right-hand side of the previous equality will be given a bit later. To prove the result, it will be seen that each of those terms go to 0. Let $p_{k}(\lambda,\theta)=P_{\lambda,\theta}(X=k)$.}

\red{
\underline{Term $S_1$}: {this term has} the following expression
$$S_1=\sum_{k \geq 0} w_k(k+1)^2 \left\{ p_{k+1}(\lambda_n,\theta_n) -p_{k+1}(\lambda,\theta)\right\}.$$
As $E_{\lambda,\theta}(X^2)<\infty$, for any $\varepsilon>0$, there exits $k_0=k_0(\varepsilon, \lambda, \theta) \in \mathbb{N}$ such that $\sum_{k > k_0} k^2p_{k}(\lambda,\theta)<\varepsilon$.
 Because $E_{\lambda_n,\theta_n}(X^2)-E_{\lambda,\theta}(X^2)\to 0$ (see the comments after Assumption \ref{Q}), we have that for large enough $n$, $|E_{\lambda_n,\theta_n}(X^2)-E_{\lambda,\theta}(X^2)|<\varepsilon$. Hence, for large enough $n$,
\begin{eqnarray*}
  \sum_{k > k_0} k^2p_{k}(\lambda_n,\theta_n)  & = & \sum_{k > k_0} k^2p_{k}(\lambda,\theta)+
  \sum_{k > k_0} k^2\left\{ p_{k}(\lambda_n,\theta_n)-p_{k}(\lambda,\theta)\right\}\\
   & = & \sum_{k > k_0} k^2p_{k}(\lambda,\theta)+E_{\lambda_n,\theta_n}(X^2)-E_{\lambda,\theta}(X^2)+\sum_{k=1}^{k_0}k^2\left\{p_{k}(\lambda,\theta)- p_{k}(\lambda_n,\theta_n)\right\}\\
   & \leq & \varepsilon+\varepsilon+k_0^2 \max_{1\leq k \leq k_0} \left | p_{k}(\lambda,\theta)- p_{k}(\lambda_n,\theta_n) \right |<3\varepsilon,
\end{eqnarray*}
since from Remark \ref{probremark},  $\displaystyle k_0^2 \max_{1\leq k \leq k_0} \left | p_{k}(\lambda,\theta)- p_{k}(\lambda_n,\theta_n) \right |  \leq \varepsilon$, for large enough $n$. 
Finally,  for large enough $n$,
\begin{eqnarray*}
\left | S_1 \right |  & = & \left| \sum_{k \leq  k_0} w_k(k+1)^2 \left\{ p_{k+1}(\lambda_n,\theta_n) -p_{k+1}(\lambda,\theta)\right\}+  \sum_{k > k_0} w_k(k+1)^2 \left\{ p_{k+1}(\lambda_n,\theta_n) -p_{k+1}(\lambda,\theta)\right\} \right| \\
& \leq  & M k_0^2 \max_{1\leq k \leq k_0} \left | p_{k}(\lambda,\theta)- p_{k}(\lambda_n,\theta_n) \right |+ M  \sum_{k > k_0} k^2p_{k}(\lambda_n,\theta_n)+ M  \sum_{k > k_0} k^2p_{k}(\lambda,\theta)\\
 & \leq & M(\varepsilon+\varepsilon+3\varepsilon)=5M\varepsilon,
\end{eqnarray*}
which proves that $S_1 \to 0$.}

\red{
\underline{Term $S_2$}: {this term has} the following expression
\begin{eqnarray*}
S_2 & = & \sum_{k \geq 0} w_k \left\{
\lambda_n^2 \sum_{u= 0}^k q_u^2(\theta_n)p_{k-u}(\lambda_n,\theta_n) 
-\lambda^2 \sum_{u= 0}^k q_u^2(\theta)p_{k-u}(\lambda,\theta) \right\}\\
& = & (\lambda_n^2-\lambda^2)  \sum_{k \geq 0} w_k \sum_{u= 0}^k q_u^2(\theta)p_{k-u}(\lambda,\theta) +\lambda_n^2 \sum_{k \geq 0} w_k \left\{
 \sum_{u= 0}^k q_u^2(\theta_n)p_{k-u}(\lambda_n,\theta_n) -
 \sum_{u= 0}^k q_u^2(\theta)p_{k-u}(\lambda,\theta) \right\}\\
 &:= & S_{21}+S_{22}.
\end{eqnarray*}
For  $S_{21}$ we have that
$$|S_{21}| \leq  M \left | \lambda_n^2-\lambda^2 \right | \sum_{k \geq 0}  \sum_{u= 0}^k q_u^2(\theta)p_{k-u}(\lambda,\theta)=  M \left | \lambda_n^2-\lambda^2 \right | \sum_{k \geq 0} q_u^2(\theta).
$$
{By Assumption \ref{Q}} $\sum_{k \geq 0}\left| q_k(\theta) \right |<\infty$, and therefore we also have that 
\begin{equation} \label{auxq1}
\sum_{k \geq 0} q_k^2(\theta)<\infty.
\end{equation}
Therefore,  $S_{21} \to 0$.}

\red{
From \eqref{auxq1}, we have that for any fixed $\varepsilon>0$ there exists 
 $k_0=k_0(\varepsilon,  \theta) \in \mathbb{N}$ such that $\sum_{k > k_0}  q_u^2(\theta)<\varepsilon$. On the other hand,
\begin{eqnarray*}
  \left | \sum_{k \geq 0} \left\{ q_k^2(\theta) - q_k^2(\theta_n) \right\}\right|  &  \leq &  
   \sum_{k \geq 0}  \left |q_k(\theta) + q_k(\theta_n)  \right|  \left |q_k(\theta) -q_k(\theta_n)  \right| \\
   & \leq & \left\{  \sum_{k \geq 0} q_k(\theta) +  \sum_{k \geq 0} q_k(\theta_n)\right\}  \sum_{k \geq 0}  \left |q_k(\theta) -q_k(\theta_n)  \right| \\
   & \leq & \left\{  2\sum_{k \geq 0} q_k(\theta) + \sum_{k \geq 0}  \left |q_k(\theta) -q_k(\theta_n)  \right| \right\}  \sum_{k \geq 0}  \left |q_k(\theta) -q_k(\theta_n)  \right|,
\end{eqnarray*}
and hence, by Assumption \ref{Q} (i) and (iv), it follows that
\begin{equation} \label{auxq2}
\sum_{k \geq 0} \left\{ q_k^2(\theta) - q_k^2(\theta_n) \right\}\to 0.
\end{equation}
Now, proceeding as in the proof for $S_1$, one gets that $S_{22} \to 0$.}

\red{
\underline{Term $S_{3j}$, $0\leq j \leq p$}: {this term has} the following expression
\begin{eqnarray*}
S_{3j} & = & \sum_{k \geq 0} w_k (k+1)\left \{ \mu_j(k;  \lambda_n, \theta_n) p_{k+1}(\lambda_n, \theta_n)  \psi_j(k+1; \lambda_n, \theta_n)- \mu_j(k;  \lambda, \theta) p_{k+1}(\lambda, \theta)  \psi_j(k+1; \lambda, \theta) \right\}\\
 & = &   \sum_{k \geq 0} w_k \left\{ \mu_j(k;  \lambda_n, \theta_n)-\mu_j(k;  \lambda, \theta) \right\}(k+1)p_{k+1}(\lambda_n, \theta_n)  \psi_j(k+1; \lambda_n, \theta_n)\\
  & &
+\sum_{k \geq 0} w_k \mu_j(k;  \lambda, \theta)  (k+1) \left\{p_{k+1}(\lambda_n, \theta_n)  \psi_j(k+1; \lambda_n, \theta_n)-p_{k+1}(\lambda, \theta)  \psi_j(k+1; \lambda, \theta)\right\}\\
 &:= & S_{31j}+S_{32j}.
\end{eqnarray*}
For  $S_{31j}$ we have that
\begin{eqnarray*}
|S_{31j}| & \leq &  M \sup_{k \geq 0} \left |  \mu_j(k;  \lambda_n, \theta_n)-\mu_j(k;  \lambda, \theta) \right | \sum_{k \geq 0}    (k+1) p_{k+1}(\lambda_n, \theta_n) \left |  \psi_j(k+1; \lambda_n, \theta_n) \right | \\
 & \leq &  M \left[ \sum_{k \geq 0} \left \{  \mu_j(k;  \lambda_n, \theta_n)-\mu_j(k;  \lambda, \theta) \right \}^2 \right]^{1/2} E^{1/2}_{\lambda_n,\theta_n}(X^2) E^{1/2}_{\lambda_n,\theta_n}\left\{\psi_j(X; \lambda_n, \theta_n)^2 \right\}.
\end{eqnarray*}
As seen before, $E_{\lambda_n,\theta_n}(X^2) \to E_{\lambda,\theta}(X^2)<\infty$. From Assumption \ref{estim2} (ii), $E^{1/2}_{\lambda_n,\theta_n}\left\{\psi_j(X; \lambda_n, \theta_n)^2 \right\}
\to E^{1/2}_{\lambda,\theta}\left\{\psi_j(X; \lambda, \theta)^2 \right\}<\infty$. For $j=0$,
\begin{eqnarray*}
 0.5\sum_{k \geq 0} \left \{  \mu_0(k;  \lambda_n, \theta_n)-\mu_0(k;  \lambda, \theta) \right \}^2 & \leq &   \sum_{k \geq 0} \left [ \sum_{u=0}^k q_{u}(\theta) \left\{p_{k-u}(\lambda_n, \theta_n)-p_{k-u}(\lambda, \theta) \right\} \right]^2\\ &  & +
  \sum_{k \geq 0} \left [ \sum_{u=0}^k p_{u}(\lambda_n, \theta_n) \left\{q_{k-u}(\theta_n)-q_{k-u}(\theta) \right\} \right ]^2\\
  & := &  D_1+D_2.
\end{eqnarray*}  
We have that
$$ 
D_1  =  \sum_{u,v \geq 0} q_u(\theta)q_v(\theta) \sum_{k \geq 0} \left\{p_{k}(\lambda_n, \theta_n)-p_{k}(\lambda, \theta) \right\} \left\{p_{k+|u-v|}(\lambda_n, \theta_n)-p_{k+|u-v|}(\lambda, \theta) \right\}.
$$
For each fixed $u,v \geq 0$,
\[
\sum_{k \geq 0} \left\{p_{k}(\lambda_n, \theta_n)-p_{k}(\lambda, \theta) \right\} \left\{p_{k+|u-v|}(\lambda_n, \theta_n)-p_{k+|u-v|}(\lambda, \theta) \right\} \leq 
\]
\[
\left [\sum_{k \geq 0} \left\{p_{k}(\lambda_n, \theta_n)-p_{k}(\lambda, \theta) \right\}^2
\sum_{k \geq 0}  \left\{p_{k+|u-v|}(\lambda_n, \theta_n)-p_{k+|u-v|}(\lambda, \theta) \right\}^2 \right ]^{1/2} \leq  
\]
\[
\sum_{k \geq 0} \left\{p_{k}(\lambda_n, \theta_n)-p_{k}(\lambda, \theta) \right\} ^2,
\]
and thus
$$D_1 \leq \left \{ \sum_{k \geq 0} \left| q_k(\theta) \right |  \right\}^2 \sum_{k \geq 0} \left\{p_{k}(\lambda_n, \theta_n)-p_{k}(\lambda, \theta) \right\} ^2.
$$ 
 Now proceeding as in the proof for $S_1$ it can be seen that  $\sum_{k \geq 0} \left\{p_{k}(\lambda_n, \theta_n)-p_{k}(\lambda, \theta) \right\} ^2 \to 0$, which proves that $D_1\to 0$, because 
 $\sum_{k \geq 0} \left| q_k(\theta) \right | <\infty$ by Assumption \ref{Q}(i).}

\red{
As for $D_2$, taking into account  \eqref{tia}, we have that
\begin{eqnarray*}
D_2 & = & \sum_{u,v \geq 0} \left \{q_{u}(\theta_n)-q_{u}(\theta) \right\} \left \{
q_{v}(\theta_n)-q_{v}(\theta) \right\}
 \sum_{k \geq 0} p_k(\lambda_n, \theta_n) p_{k+|u-v|}(\lambda_n, \theta_n)\\
& \leq & \left\{ \sum_{k \geq 0}
\left | q_{k}(\theta_n)-q_{k}(\theta) \right |  \right\}^2,
\end{eqnarray*}
which, by Assumption \ref{Q} (iv), implies that  $D_2\to 0$. So we have proven that $S_{31j} \to 0$ for $j=0$. For $1 \leq j \leq p$, the proof is  similar, and thus omitted.}

\red{Now we deal with $S_{32j}$,
\begin{eqnarray*}
|S_{32j}| & \leq & \sqrt{M} \sup_{k \geq 0} \sqrt{w_k} \left|  \mu_j(k;  \lambda, \theta)\right |  \sum_{k \geq 0} (k+1) \left | p_{k+1}(\lambda_n, \theta_n)  \psi_j(k+1; \lambda_n, \theta_n)-p_{k+1}(\lambda, \theta)  \psi_j(k+1; \lambda, \theta) \right | \\
          & \leq & \sqrt{M} \left \|  \mu_j( \cdot;  \lambda, \theta) \right \|_w  \sum_{k \geq 0} (k+1) \left | p_{k+1}(\lambda_n, \theta_n)  \psi_j(k+1; \lambda_n, \theta_n)-p_{k+1}(\lambda, \theta)  \psi_j(k+1; \lambda, \theta) \right |.
\end{eqnarray*}
Notice  that  $\left \|  \mu_0( \cdot;  \lambda, \theta) \right \|_w = \left \|E\left\{\frac{\partial}{\partial \lambda}\widehat{d}(\cdot; \lambda,\theta)\right\} \right \|_w^2<\infty$, and   $\left \|  \mu_j( \cdot;  \lambda, \theta) \right \|_w =\left \|E\left\{\frac{\partial}{\partial \theta_j}\widehat{d}(\cdot;\lambda,\theta)\right\}\right\|_w^2<\infty$, $1 \leq j \leq p$. 
The finiteness of those quantities comes from Lemma \ref{lemma2} (ii). {Taking into account that $E_{\lambda, \theta}\{X\psi_j(X;\lambda,\theta)\} \leq E^{1/2}_{\lambda, \theta}(X^2) E^{1/2}_{\lambda, \theta}\{\psi_j(X;\lambda,\theta)^2\}<\infty$  and Assumption \ref{estim2} (iv),}
proceeding as in the proof for $S_1$ it can be seen that 
 $\sum_{k \geq 0} (k+1) \left | p_{k+1}(\lambda_n, \theta_n)  \psi_j(k+1; \lambda_n, \theta_n)-p_{k+1}(\lambda, \theta)  \psi_j(k+1; \lambda, \theta) \right | \to 0$, which proves that $S_{32j}\to 0$, $0\leq j \leq p$.
}

{
\underline{Term $S_{4j}$, $0\leq j \leq p$}: {this term has} the following expression
\begin{eqnarray*}
S_{4j} & = & \sum_{k \geq 0} w_k \left \{ \mu_j(k;  \lambda_n, \theta_n) \lambda_n \sum_{u=0}^kq_{k-u}(\theta_n) p_u(\lambda_n, \theta_n)  \psi_j(u; \lambda_n, \theta_n) \right. \\
 & & - \left .
 \mu_j(k;  \lambda, \theta) \lambda \sum_{u=0}^kq_{k-u}(\theta) p_u(\lambda, \theta)  \psi_j(u; \lambda, \theta) \right\}\\
 & = &   (\lambda_n-\lambda)  \sum_{k \geq 0} w_k \mu_j(k;  \lambda_n, \theta_n)  \sum_{u=0}^kq_{k-u}(\theta_n) p_u(\lambda_n, \theta_n)  \psi_j(u; \lambda_n, \theta_n)\\
  & & + \lambda  \sum_{k \geq 0} w_k \{ \mu_j(k;  \lambda_n, \theta_n) -\mu_j(k;  \lambda, \theta)  \} \sum_{u=0}^k q_{k-u}(\theta_n) p_u(\lambda_n, \theta_n)  \psi_j(u; \lambda_n, \theta_n)
 \\
 & & +\lambda  \sum_{k \geq 0} w_k  \mu_j(k;  \lambda, \theta)   \sum_{u=0}^k \{q_{k-u}(\theta_n) -q_{k-u}(\theta)\}p_u(\lambda_n, \theta_n)  \psi_j(u; \lambda_n, \theta_n) \\
  & & +\lambda  \sum_{k \geq 0} w_k  \mu_j(k;  \lambda, \theta)   \sum_{u=0}^k q_{k-u}(\theta) \{ p_u(\lambda_n, \theta_n)  \psi_j(u; \lambda_n, \theta_n) - p_u(\lambda, \theta)  \psi_j(u; \lambda, \theta)\}\\
 &:= & S_{41j}+S_{42j}+S_{43j}+S_{44j}.
\end{eqnarray*}
We have the following facts:
\begin{itemize} \itemsep=0pt
\item[F1] $\displaystyle
\left|   \sum_{u=0}^kq_{k-u}(\theta_n) p_u(\lambda_n, \theta_n)  \psi_j(u; \lambda_n, \theta_n)  \right| \leq 
\left(  \sum_{u=0}^k q_{k-u}^2(\theta_n) \right)^{1/2}E^{1/2}_{\lambda_n, \theta_n} \{\psi_j^2(u; \lambda_n, \theta_n)\}.$
\item[F2] From \eqref{auxq1} and  \eqref{auxq2}, it follows that, for large  enough $n$, $\sum_{u=0}^k q_{k-u}^2(\theta_n) \leq M_1$, for some finite $M_1$.
\item[F3] By Assumption \ref{estim2} (ii), 
$E_{\lambda_n,\theta_n}\{
\psi_j^2(X;\lambda_n,\theta_n)\} \to E_{\lambda,\theta}\{\psi_j^2(X; \lambda,\theta) \}<\infty$, $0\leq j \leq p$.
\item[F4] In the proof for $S_{31j}$ we saw that $\| \mu_j(k;  \lambda_n, \theta_n)- \mu_j(k;  \lambda, \theta)\|_w\to 0$, and in the proof for $S_{32j}$ we saw that $\|\mu_j(k;  \lambda, \theta)\|_w<\infty$, $0\leq j \leq p$.
\end{itemize}
All these facts imply that $S_{41j} \to 0$ and $S_{42j} \to 0$, $0\leq j \leq p$.
}

{ 
Taking into account
\begin{eqnarray*}
\sum_{u=0}^k \{q_{k-u}(\theta_n) -q_{k-u}(\theta)\}p_u(\lambda_n, \theta_n)  \psi_j(u; \lambda_n, \theta_n) & \leq & \sup_u | q_u(\theta_n) -q_u(\theta)| \sum_{u \geq 0} p_u(\lambda_n, \theta_n) \left | \psi_j(u; \lambda_n, \theta_n) \right | \\ & \leq & \sum_{u \geq 0 } | q_u(\theta_n) -q_u(\theta)| E_{\lambda_n,\theta_n}^{1/2}\{
\psi_j^2(X;\lambda_n,\theta_n)\},
\end{eqnarray*}
F3, F4 and  Assumption \ref{Q} (iii), one gets that  $S_{43j} \to 0$, $0\leq j \leq p$.}

{By Assumption \ref{Q}(i), $\sum_{k \geq 0}|q_k(\theta)| <\infty$. Hence, $\forall \varepsilon>0$, there exists $k_0=k_0(\theta, \varepsilon)$, such that $\sum_{k > k_0 }|q_k(\theta) | <\varepsilon.$  
We also have that, by Assumption \ref{estim2} (iii) and Remark 
 \ref{probremark}, for any finite $K \subset \mathbb{N}_0$
$$\sup_{k \in K} | p_k(\lambda_n, \theta_n)  \psi_j(k; \lambda_n, \theta_n) - p_k(\lambda, \theta)  \psi_j(k; \lambda, \theta) | \to 0.$$
Thus,  for $k \leq k_0$,
$$\begin{array}{c}
   \displaystyle \sum_{u=0}^k q_{u}(\theta) \{ p_{k-u}(\lambda_n, \theta_n)  \psi_j(k-u; \lambda_n, \theta_n) - p_{k-u}(\lambda, \theta)  \psi_j(k-u; \lambda, \theta)\}  \leq  \vspace{3pt}\\
 \displaystyle \sum_{k \geq 0}|q_k(\theta)| \sup_{0 \leq u \leq k_0 } | p_u(\lambda_n, \theta_n)  \psi_j(u; \lambda_n, \theta_n) - p_u(\lambda, \theta)  \psi_j(u; \lambda, \theta) | \to 0,
\end{array} $$
and for $k>k_0$,
$$\begin{array}{c}
   \displaystyle 
 \sum_{u=0}^k q_{u}(\theta) \{ p_{k-u}(\lambda_n, \theta_n)  \psi_j(k-u; \lambda_n, \theta_n) - p_{k-u}(\lambda, \theta)  \psi_j(k-u; \lambda, \theta)\}  \leq \vspace{3pt}\\
  \displaystyle 
  \sum_{k \geq 0}|q_k(\theta)| \sup_{k-k_0 \leq u \leq k } | p_u(\lambda_n, \theta_n)  \psi_j(u; \lambda_n, \theta_n) - p_u(\lambda, \theta)  \psi_j(u; \lambda, \theta) | \vspace{3pt}\\
   \displaystyle 
 +
\sum_{k > k_0 }|q_k(\theta) | \left [  E_{\lambda_n,\theta_n}^{1/2}\{
\psi_j^2(X;\lambda_n,\theta_n)\}+  E_{\lambda,\theta}^{1/2}\{
\psi_j^2(X;\lambda,\theta)\} \right ], 
\end{array} $$
that according to the previous reasoning is a small quantity. Those facts and F4 show that  $S_{44j} \to 0$, $0\leq j \leq p$.}

\underline{Term $S_{5jv}$, $0\leq j,v \leq p$}: {this term has} the following expression
\begin{eqnarray*}
S_{5jv} & = &  \sum_{k \geq 0}w_k \left [ \mu_j(k;  \lambda_n, \theta_n)\mu_v(k;  \lambda_n, \theta_n) E_{\lambda_n,\theta_n}\{\psi_j(X;\lambda_n,\theta_n)\psi_v(X;\lambda_n,\theta_n)\} \right. \\ & & \left. -
\mu_j(k;  \lambda, \theta)\mu_v(k;  \lambda, \theta) E_{\lambda,\theta}\{\psi_j(X;\lambda,\theta)\psi_v(X;\lambda,\theta)\} 
\right ].
\end{eqnarray*}
Assumption \ref{estim2} (ii) and F4 both imply  that $S_{5jv}\to 0$, $0\leq j,v \leq p$.

\end{proof}

\begin{proof}[Proof of Theorem \ref{boot}] Let
 $Y(X; \lambda,\theta)$ be as defined in \eqref{Y} and let $\hat{d}^*(\cdot;\hat{\lambda}^*, \hat{\theta}^*)$ denote the bootstrap version of $\hat{d}(\cdot;\hat{\lambda},\hat{\theta})$. Proceeding as in the proof of Theorem \ref{asymptoticnulldistribution}, we have that
$$
\sqrt{n}\hat{d}^*(\cdot;\hat{\lambda}^*, \hat{\theta}^*)  =  \frac{1}{\sqrt{n}} \sum_{i=1}^n Y(X_i^*;\hat{\lambda}, \hat{\theta})+r^*,
$$
where
$\|r^*\|_w=o_{P*}(1)$, a.s (in probability). \red{Next we derive
 the (conditional) convergence in distribution of  $ (1/{\sqrt{n}}) \sum_{i=1}^n Y(X_i^*; \hat{\lambda}, \hat{\theta})$, and see that the limit coincides with the asymptotic distribution of $ (1/{\sqrt{n}}) \sum_{i=1}^n Y(X_i;  {\lambda}, \theta)$
 when the data come from $X\sim {GPD}(\lambda,\theta,1,G_{21}(\cdot; \theta))$. \red{Notice that in the bootstrap scheme, $X^*_1, \ldots , X^*_n$ are iid, but their common (conditional) distribution may vary when the sample size increases, because $\hat{\lambda}(X_1, \ldots,X_n), \hat{\theta}_1(X_1, \ldots,X_n), \, \ldots, \hat{\theta}_p(X_1, \ldots,X_n)$  may differ from   $\hat{\lambda}(X_1, \ldots,X_{n+1}), \hat{\theta}_1(X_1, \ldots,X_{n+1}), \, \ldots, \hat{\theta}_p(X_1, \ldots,X_{n+1})$ }. So we need to use a central limit theorem in Hilbert spaces for triangular arrays. Specifically, } we apply Theorem 1.1 in Kundu et al. \cite{Kundu2000} to derive the asymptotic distribution of $ (1/{\sqrt{n}}) \sum_{i=1}^n Y(X_i^*; \hat{\lambda}, \hat{\theta})$.
 With this aim, first  we  must choose an orthonormal basis of $l^2_w$.
Let $e_k=(e_k(0), e_k(1), \ldots)$, with $e_k(j)=(1/\sqrt{w_k})I(k=j)$, $k,j\geq0$. $\{e_k\}_{k\geq 0}$  is an orthonormal basis of $l^2_w$.
 \red{Next we check that all conditions in Theorem 1.1 in \cite{Kundu2000} hold true for the conditional distribution of $ Y(X_1^*; \hat{\lambda}, \hat{\theta}) \ldots,  Y(X_i^*; \hat{\lambda}, \hat{\theta})$ given the data, a.s. or in probability according to the law generating the data $X_1, \ldots, X_n$. This will be denoted as a.s. ($X_1, \ldots, X_n$) or in probability ($X_1, \ldots, X_n$)}. Bear in mind that what that paper denotes {as} $X_{ni}$ corresponds to $ Y(X^*_i;\hat{\lambda},\hat{\theta})/ \sqrt{n}$ in our setting.
\begin{itemize}
\item $E_*\{\langle Y(X^*;\hat{\lambda},\hat{\theta}),e_k \rangle_w\}=
\sqrt{w_k}E_*\{Y(X^*;k, \hat{\lambda}, \hat{\theta})\}=0$, $\forall k \geq 0$. \red{This fact happens with probability 1 ($X_1, \ldots, X_n$). This proves that the first assumption in the statement of Theorem  1.1 in \cite{Kundu2000} holds with probability 1 ($X_1, \ldots, X_n$).
 }
 
\item Proceeding as in the proof of Theorem \ref{asymptoticnulldistribution} to show that $E_{\lambda, \theta} \{ \|Y(X;\lambda, {\theta}) \|^2_w\}<\infty$, it can be also checked that $ E_*\{ \|Y(X_i^*; \hat{\lambda},\hat{\theta}) \|^2_w\} <\infty$ \red{ a.s. ($X_1, \ldots, X_n$) (in probability ($X_1, \ldots, X_n$)).}  This proves that the other assumption in the statement of Theorem  1.1  of \cite{Kundu2000} \red{holds a.s. ($X_1, \ldots, X_n$) (in probability ($X_1, \ldots, X_n$)).}

\item Let $C_n$ denote the covariance kernel, in the conditional distribution given the data,  of $Y(X^*;\hat{\lambda},\hat{\theta})$, that is, $C_n(k,s)=Cov_*\{Y(X^*;k, \hat{\lambda},\hat{\theta}), Y(X^*;s, \hat{\lambda},\hat{\theta})\}$, and let $\mathcal{C}_n$ the associated covariance ope\-ra\-tor, defined from $l_w^2$ to $l_w^2$ by $\mathcal{C}_n\ell=E_*\{ \langle Y(X^*; \hat{\lambda},\hat{\theta}), \ell \rangle_wY(X^*; \hat{\lambda},\hat{\theta}) \}$, or equivalently, $\mathcal{C}_n\ell(s)=\sum_{k \geq 0}w_k C_n(k,s) \ell(k)$. Then,
\[
 \langle \mathcal{C}_n e_k,e_r\rangle_w  =
 %\sqrt{w_k} \sqrt{w_r} E_*\{Y(X^*; k,\hat{\lambda},\hat{\theta})Y(X^*; r,\hat{\lambda},\hat{\theta}) \} =
 \sqrt{w_k} \sqrt{w_r} C_n(k,s).
 \]
 $C_n(k,s)$ has the same expression as  $C(k,s)$, the covariance kernel appearing in Theorem \ref{asymptoticnulldistribution},  with $\lambda$ and $\theta$ replaced with $\hat{\lambda}$ and $\hat{\theta}$. The expression of
 $C(k,s)$ has been given in Remark \ref{cova}. With the assumptions made, taking into account Remark \ref{probremark}, all quantities in the expression of $C(k,s)$ are continuous functions of $\lambda$ and $\theta$. Therefore,
 \[
 \langle \mathcal{C}_n e_k,e_r\rangle_w \stackrel{ }{\longrightarrow} \sqrt{w_k} \sqrt{w_r} C(k,s)=\langle \mathcal{C} e_k,e_r\rangle_w, \quad \forall k, r \geq 0,
 \]
 \red{ a.s. ($X_1, \ldots, X_n$) (in probability ($X_1, \ldots, X_n$)),}
 where $\mathcal{C}$ denotes the covariance operator associated to the covariance kernel $C$. 
 This proves that condition (i) in Theorem  1.1 of \cite{Kundu2000} \red{ holds a.s. ($X_1, \ldots, X_n$) (in probability ($X_1, \ldots, X_n$)), with $a_{kr}= \langle \mathcal{C} e_k,e_r\rangle_w$.}

 \item \red{By Lemma \ref{CC}, $\sum_{k \geq 0}w_k  C(k,k)$ is a continuous function of the parameters $(\lambda, \theta)$, therefore} %
 $$\sum_{k \geq 0}  \langle \mathcal{C}_n e_k,e_k\rangle_w=\sum_{k \geq 0}w_k  C_n(k,k) \stackrel{ }{\longrightarrow} \sum_{k \geq 0}w_k  C(k,k)= \sum_{k \geq 0} a_{kk}=
 E_{\lambda,\theta} \{ \|Y(X;{\lambda}, {\theta}) \|^2_w\}<\infty,$$
 \red{where the convergence is  a.s. ($X_1, \ldots, X_n$) (in probability ($X_1, \ldots, X_n$)), and the last equality comes from Parseval's identity.}
This proves that condition (ii) in Theorem  1.1 of \cite{Kundu2000} \red{ holds a.s. ($X_1, \ldots, X_n$) (in probability ($X_1, \ldots, X_n$)).}

 \item Finally, we must show that $L(\varepsilon,e_k)\to 0$, $\forall \varepsilon>0$, $\forall k \geq 0$, where
\begin{eqnarray*}
L(\varepsilon,e_k) & = & \frac{1}{n}\sum_{i=1}^n E_* \left\{ \langle Y(X^*_i;\hat{\lambda},\hat{\theta}),e_k\rangle_w^2 I\left[ \langle Y(X^*_i;\hat{\lambda},\hat{\theta}),e_k\rangle_w^2>\varepsilon n \right ] \right\}\\
 & = & \sqrt{w_k}  E_* \left\{ Y(X^*;k,\hat{\lambda},\hat{\theta})^2 I\left[ Y(X^*;k,\hat{\lambda},\hat{\theta})^2>\varepsilon n /\sqrt{w_k} \right ] \right\}.
\end{eqnarray*}
Let us fix $k \in \mathbb{N}_0$ and recall  \eqref{Y.expression}. We have that
$$\phi(X^*; k,\hat{\lambda},\hat{\theta})=\left\{ \begin{array}{ll} -\hat{\lambda} q_{k-u}(\hat{\theta}) & \mbox{ if $X^*=u$, $0\leq u \leq k$,}\\ k+1 & \mbox{ if $X^*=k+1$,}\\ 0 & \mbox{ if $X^*>k+1$,}\\
\end{array} \right. $$
\red{and thus,
$  \left | \phi(X^*; k,\hat{\lambda},\hat{\theta}) \right | \leq 
\max\{k+1, \hat{\lambda} q_{k-u}(\hat{\theta}) \}$. As we are assuming that  $\hat{\lambda} \stackrel{a.s. (P)}{\longrightarrow} \lambda$,
$\hat{\theta} \stackrel{a.s.(P)}{\longrightarrow} \theta$, and that Assumption \ref{Q} (i) and (iv) hold true, it follows that  $ \left | \phi(X^*; k,\hat{\lambda},\hat{\theta}) \right | $  is bounded   a.s. ($X_1, \ldots, X_n$) (in probability ($X_1, \ldots, X_n$)). On the other hand, from the proof of Lemma \ref{CC}, it follows that $\mu_j(k, \hat{\lambda},\hat{\theta})  \stackrel{ }{\longrightarrow} \mu_j(k, {\lambda},{\theta})$  a.s. ($X_1, \ldots, X_n$) (in probability ($X_1, \ldots, X_n$)), and from the proof of Lemma \ref{lemma2} (ii) we have that $\left|\mu_j(k, {\lambda},{\theta})\right|<\infty$, $0 \leq j \leq p$.
Summarizing,}
$$Y(X^*;k,\hat{\lambda},\hat{\theta})\leq M \Bigg(1+\sum_{j=0}^p |\psi_j(X^*; \hat{\lambda},\hat{\theta})| \Bigg),$$
\red{a.s. ($X_1, \ldots, X_n$) (in probability ($X_1, \ldots, X_n$)).} Now, from Assumption \ref{estim2} (v), it follows that $L(\varepsilon,e_k)\stackrel{ 
 }{\longrightarrow} 0$, $\forall \varepsilon>0$,  a.s. ($X_1, \ldots, X_n$) (in probability ($X_1, \ldots, X_n$)).
This proves that condition (iii) in Theorem  1.1 of \cite{Kundu2000} \red{ holds a.s. ($X_1, \ldots, X_n$) (in probability ($X_1, \ldots, X_n$)).}
\end{itemize}
Thus, $(1/{\sqrt{n}}) \sum_{i=1}^n Y(X_i^*;\hat{\lambda}, \hat{\theta}) \stackrel{\mathcal{L}}{\longrightarrow}G(\lambda, \theta)$ \red{  a.s. ($X_1, \ldots, X_n$) (in probability ($X_1, \ldots, X_n$)).} Now, reasoning as in the proof of Theorem \ref{asymptoticnulldistribution}, the result follows.
 \end{proof}
 
%\bibliographystyle{plain}
%\bibliography{references}
%\end{document}

\end{document}